\def\ran{\mathop{\rm Ran}\nolimits}
\newcommand {\bB}{{\mathbb B}}
\newcommand {\bT}{{\mathbb T}}
\newcommand {\bC}{{\mathbb C}}
\newcommand {\bZ}{{\mathbb Z}}
\newcommand {\bN}{{\mathbb N}}
\newcommand {\bI}{{\mathbb I}}
\newcommand {\bM}{{\mathbb M}}
\newcommand {\bR}{{\mathbb R}}
\newcommand {\hil}{{\mathcal H}}
\newcommand {\cH}{{\mathcal H}}
\newcommand {\cB}{{\mathcal B}}
\newcommand {\cC}{{\mathcal C}}
\newcommand {\cE}{{\mathcal E}}
\newcommand {\cP}{{\mathcal P}}
\newtheorem{thm}{Theorem} [section]
\newtheorem{theorem}[thm]{Theorem}
\newtheorem{lemma}[thm]{Lemma}
\newtheorem{lem}[thm]{Lemma}
\newtheorem{prop}[thm]{Proposition}
\newtheorem{proposition}[thm]{Proposition}
\newtheorem{definition}[thm]{Definition}
\newtheorem{cor}[thm]{Corollary}
\newtheorem {rem}[thm]{Remark}
\newtheorem {remark}[thm]{Remark}
\newtheorem {remarks}[thm]{Remarks}
\newtheorem*{hypos}{Hypotheses}
\newcommand{\bD}{ \mathbb{D}}
\newcommand{\bra}{\langle}
\newcommand{\ket}{\rangle}
\newcommand{\ep}{\hfill {$\Box$}}
\newcommand{\japAs}{\bra A \ket^{-s}}
\title{Limiting absorption principle for contractions
\thanks{Supported by FONDECYT 1211576, ECOS-ANID 200035 and {\it Concurso Movilidad de Profesores y Apoyo a Visitantes Extranjeros}, Facultad de Matem\'aticas, PUC}
}
\author{Joachim Asch \thanks{Universit\'e de Toulon, CNRS, CPT, UMR 7332, 83957 La Garde, France \\ Aix-Marseille Univ, CNRS, CPT, UMR 7332, Case 907, 13288 Marseille, France, asch@cpt.univ-mrs.fr},
Olivier Bourget
\thanks{
Facultad de Matem\'aticas,
Pontificia Universidad Cat\'olica de Chile, Av. Vicu\~{n}a Mackenna 4860,
C.P. 690 44 11, Santiago, Chile, bourget@uc.cl}
}
\date{18/5/24}
\begin{document}
\maketitle

\begin{abstract}
We establish limiting absorption principles for contractions on a Hilbert space. Our sufficient conditions are based on positive commutator estimates. We discuss the dynamical implications of this principle to the correspon\-ding discrete-time semigroup and provide several applications. Notably to Toeplitz operators and contractive quantum walks.
 \end{abstract}

\section{Introduction}

The limiting absorption principle states that there exists a topology in which the resolvent can be continuously extended to parts of the essential spectrum. It was originally developed for resolvents of selfadjoint Schr\"odinger operators \cite{agmon} and is widely used to establish propagation properties of the associated strongly continuous group  and perturbations thereof, see \cite{yaf}.

 In particular it provides information on the absolutely continuous subspace and plays an important role in the proof of asymptotic completeness of the quantum mechanical $N-$ body problem and in the  study of the dynamics of embedded eigenvalues.

In the present contribution  we are interested in the dynamics of non-isolated systems  modeled by a discrete semigroup $\left( V^n\right)_{n\in\bN}$ for a contraction $V$. 

We consider
$\mathcal{H}$  a separable Hilbert space,  its bounded operators  $\cB(\mathcal{H})$ and a contraction
\[V\in\cB(\mathcal{H}), \quad \Vert V\Vert=1.\]

{ $1\in\cB(\hil)$ denotes the identity operator}. By a limiting absorption principle  for $V$  with respect to a weight $W\in\cB(\mathcal{H})$ we mean that :

\[\left\{z\in\bC; \vert z\vert <1\right\}=:\bD\ni z\mapsto  W (1-zV^*)^{-1} W^\ast
\]
extends to a norm continuous function on a suitable subset $\bf{D}\subset\overline\bD$.

In particular 
\[\sup_{ z\in\bf{D}}\Vert  W(1-zV^*)^{-1}W^\ast\Vert < \infty.\]

For $\psi \in \overline{\ran W^\ast}$ this implies  square summable decay of the correlations $\bra \psi, V^n\psi\ket $, more precisely $\psi$ is an element of
\[
\hil_{ac}(V):=\overline{ \left\{\psi \in\hil: \exists C_{\psi} \geq 0, \sup_{\|\varphi\|=1}\sum_{n=0}^{\infty} \left| \left\bra \varphi,V^n \psi\right\ket \right|^2 
\leq C_{\psi}\right\}} ,
\]
see Proposition \ref{lapToHac}. $\hil_{ac}(V)$ is called the absolutely continuous subspace ; for unitary $V$, it is the space of vectors of absolutely continuous spectral measure.

\vskip1cm
While there exists a large body of literature concerning limiting absorption principles, let us just briefly mention some work which concerns the non selfadjoint case :

\smallskip

The absolutely continuous subspace for the  generator of a continuous contraction semigroup was introduced by Davies \cite{D78}, see also \cite{kato66}, and used in his non-unitary scattering theory, \cite{davies1} see \cite{fffs} 
for a recent development.

Limiting absorption principles in a non-selfadjoint settings has been developed by Royer \cite{R1,R2}, see also \cite{BG} for interesting information.

{Limiting absorption principles for unitary operators and the related propagation properties} for the corresponding discrete group have been established in \cite{ABCF}, \cite{frt}, \cite{abc2}, see also Kato \cite{katosmooth}.

The theory of characteristic functions can be  used to obtain complementary spectral information in the case of trace class perturbations of unitary operators, see \cite{liawtreil}.
\subsection{Main results}

We now state our conditions on the contraction $V$ and our results. For the proofs we use the positive commutator method pioneered by Mourre 
\cite{mourreOriginal}, see also \cite{abmg}, which makes use of an escape observable i.e.: an unbounded selfadjoint operator which we consistently call $A$ in the sequel. 

\begin{definition}
Let $A$ be a selfadjoint operator. For a  bounded operator $B$ denote ${Ad}_{e^{-i A t}}(B):= e^{-iAt}B e^{iAt}\,\, (t\in\bR)$. We say:
\begin{enumerate}
\item $B\in C^k(A)$ for $k\in\mathbb{N}$ if  ${Ad}_{e^{-i A t}}(B)$ is strongly $C^k$;
\item $B\in {\cal C}^{1,1}(A)$ if:
\begin{equation*}
\int_0^1 \|{Ad}_{e^{-i A t}}(B)+{Ad}_{e^{i A t}}(B)-2B\|\,\frac{d\tau}{|\tau|^2} < \infty \enspace .
\end{equation*}
\item{For $B\in C^1(A)$, define the commutator $\lbrack A, B\rbrack :=ad_A(B):= i\partial_t {Ad}_{e^{-i A t}}(B)\rvert_{t=0}$.}
\end{enumerate}
\end{definition}

\begin{remarks}
\begin{enumerate}
\item One has $C^2(A)\subset {\cal C}^{1,1}(A)\subset C^1(A)$ and ${\cal C}^{1,1}(A)$ is a $*$-algebra, see \cite{abmg}.
\item If $U\in C^1(A)$ is unitary then  $U^*AU-A$ extends from the domain  of $A$ to $\bB(\hil)$ and  $U^*AU-A=U^\ast ad_A(U)$, see \cite{abc2}.
\item For $U$ unitary $U\in{\cal C}^{1,1}(A)$ is the minimal regularity assumption needed for proving the limiting absorption principle to hold using Mourre methods \cite{abmg, abc2}. 
\end{enumerate}
\end{remarks}

Our first result is a global limiting absorption principle under the rather strong assumption of existence of a positive commutator. We use the notation $\bra A \ket:=(A^2+1)^{1/2}$.

\begin{theorem}\label{lap3} Let $V\in\cB(\hil)$, $\Vert V\Vert=1$. Assume: 
 $V\in C^{1,1}(A)$ and there exists an $a_0>0$ such that $\Re \left(V^\ast ad_A V \right) \ge a_0\bI$.
 
Then for  $s>\frac{1}{2}$ the map
$ \bD\ni z\mapsto  \bra A \ket^{-s}(1-zV^*)^{-1}\bra A \ket^{-s}
$
extends continuously in the uniform topology to $\overline\bD$; in particular

\begin{equation*}
\sup_{z \in \bD} \|\bra A \ket^{-s}(1-zV^*)^{-1}\bra A \ket^{-s} \| < \infty \hbox{ and }
\end{equation*}
\[\overline{\ran \japAs}\subset {\cal H}_{ac}(V)\cap {\cal H}_{ac}(V^\ast).
\]
\end{theorem}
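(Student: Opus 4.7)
The overall strategy is the positive commutator (Mourre) method, in the discrete form developed for unitaries in \cite{abc2} and adapted here to the contractive setting. Let $R(z):=(1-zV^*)^{-1}$, which is well defined on $\bD$ since $\Vert V\Vert=1$, and set $G_s(z):=\japAs R(z)\japAs$. Two tasks have to be addressed: (i) a uniform bound $\sup_{z\in\bD}\Vert G_s(z)\Vert<\infty$; (ii) the continuous extension of $G_s$ to $\overline\bD$. Once both are available, Proposition \ref{lapToHac} yields $\overline{\ran\japAs}\subset\hil_{ac}(V)$; taking adjoints, $G_s(z)^*=\japAs(1-\bar zV)^{-1}\japAs$ inherits the same estimate as $\bar z$ sweeps $\bD$, giving the $V^*$-inclusion as well.

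For (i) I would first check that $V\in C^1(A)$ propagates to $R(z)\in C^1(A)$ on $\bD$ via the usual holomorphic-functional-calculus argument in the $C^k(A)$ classes \cite{abmg}, giving the resolvent commutator identity $ad_A R(z)=z\,R(z)\,ad_A(V^*)\,R(z)$. Then, exploiting the algebraic identity $zV^*R(z)=R(z)-1$ together with the hypothesis $\Re(V^*\,ad_A V)\ge a_0 I$, one is led to the operator inequality
\[
a_0\,R(z)^*R(z)\le \Re\!\left[\bar z^{-1}(R(z)^*-1)\,ad_A(V)\,R(z)\right].
\]
Sandwiching by $\japAs$, using boundedness of $\japAs\,ad_A(V)\,\japAs$, and absorbing the factor of $R(z)$ on the right into the left via Cauchy--Schwarz in the spirit of the Putnam--Kato argument, yields the uniform bound on $G_s$ across $\bD$.

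The core difficulty is (ii), the continuous extension to $\overline\bD$; this is precisely why the sharper $\cC^{1,1}(A)$ hypothesis appears rather than merely $C^1(A)$. Here I would follow the Besov / dyadic-decomposition strategy of \cite{abmg}, as transposed to the discrete unitary setting in \cite{abc2}: decompose $\japAs$ spectrally with respect to $A$ on dyadic intervals, convert the $\cC^{1,1}$ integral condition on $V$ into a summable Dini-type estimate on the second differences of $V$ under $A$-translations, and deduce uniform H\"older continuity of $G_s$ in $z\in\bD$ with a modulus that persists up to the boundary circle $\bT$. The subtlety specific to contractions is that $V^*\,ad_A V$ is not self-adjoint -- only its real part is controlled by hypothesis -- so the asymmetric remainders appearing in the differential inequality and in the dyadic estimates must be absorbed using the \emph{global} character of the lower bound $\ge a_0 I$, rather than localized via spectral projections of $V$ as in classical local Mourre theory; I expect this to be the main place where the contractive proof diverges from the unitary template of \cite{abc2}.
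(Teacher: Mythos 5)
Your starting inequality for task (i) is algebraically correct, but the proposed ``sandwich by $\japAs$ and absorb by Cauchy--Schwarz'' step does not close. After conjugation, the two sides read
\[
a_0\,\japAs R(z)^*R(z)\japAs \;\le\; \Re\!\left[\bar z^{-1}\japAs\bigl(R(z)^*-1\bigr)\,ad_A(V)\,R(z)\japAs\right].
\]
Two problems. First, the factor $ad_A(V)$ sits \emph{between} $R(z)^*-1$ and $R(z)$, not between the weights, so ``boundedness of $\japAs ad_A(V)\japAs$'' is never invoked in the correct place; what you actually need is control of $R(z)^* \,ad_A(V)\, R(z)$, and that term is of the same size as the left-hand side and cannot be absorbed unless its coefficient were strictly smaller than $a_0$, which is false in general since $\|ad_A V\|\ge a_0$. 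Second, even if one could bound the right-hand side, the left-hand side $\japAs R(z)^*R(z)\japAs$ controls $\|R(z)\japAs\|$ (weight on one side only), and that quantity \emph{must} blow up as $|z|\uparrow 1$ (take $V=U$ unitary and $z\to1$); so no such estimate can hold. The positive commutator alone, at the level of the undeformed resolvent, does not yield (i): this is precisely the obstruction that the Mourre $\epsilon$-deformation method is designed to circumvent.

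The paper's proof proceeds differently and handles (i) and (ii) \emph{jointly}, not sequentially. Using the $\cC^{1,1}(A)$ hypothesis (Proposition~\ref{c11approx}), it produces approximations $S_\epsilon\to V$, $B_\epsilon\to ad_A V$, sets $V_\epsilon=S_\epsilon-\epsilon B_\epsilon$ and works with the deformed resolvent $G_\epsilon(z)=(1-zV_\epsilon^*)^{-1}$. The commutator positivity is used to show (Proposition~\ref{mmtrick3}) that $T_\epsilon(z)^*+\bar zV_\epsilon T_\epsilon(z)\ge d(\epsilon,z)=1-|z|^2+a_0\epsilon|z|^2$, which gives invertibility and a bound $\|G_\epsilon(z)\|\lesssim 1/d(\epsilon,z)$ that degenerates as $\epsilon\to0$ but is uniform in $z$. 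The weighted quantity $F_{s,\epsilon}(z)=W_s(\epsilon)G_\epsilon(z)W_s(\epsilon)$ with the $\epsilon$-dependent weight $W_s(\epsilon)=\bra A\ket^{-s}\bra\epsilon A\ket^{s-1}$ is then shown (Proposition~\ref{diffins<1}) to satisfy a first-order differential inequality in $\epsilon$ with integrable coefficients precisely because $V\in\cC^{1,1}(A)$, and Gronwall's lemma (Proposition~\ref{unifbounds}) yields $\sup_{\epsilon,z}\|F_{s,\epsilon}(z)\|<\infty$. Passing to $\epsilon\to0^+$ recovers $\japAs(1-zV^*)^{-1}\japAs$ and, as a by-product of the Cauchy estimate $\|\partial_\epsilon F_{s,\epsilon}\|\le\mathfrak H_s(\epsilon)\in L^1$, gives the continuous extension to $\overline\bD$ in norm (Proposition~\ref{mourre1}). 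The Besov/dyadic decomposition you sketch for (ii) is a legitimate alternative technology for exploiting $\cC^{1,1}$ regularity, but in the paper's argument the regularity is consumed exclusively through the existence of the approximating family and the $L^1$-in-$\epsilon$ estimate for $\mathcal Q(\epsilon)$; there is no separate boundary-regularity step to perform because the uniform bound and the norm continuity both fall out of the same Gronwall integration.
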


\begin{remark}\label{remark:cnu}
Remark that a contraction can always be decomposed in a direct sum of its unitary and completely non-unitary parts \cite{nf}. Theorem \ref{lap3} and Theorem \ref{lap2} below are known to hold for a unitary $V_u$ whereas for a completely non-unitary $V_{cnu}$ it is known that ${\cal H}_{ac}(V_{cnu})\cap {\cal H}_{ac}(V_{cnu}^\ast)=\hil$, see Proposition \ref{cnu} below. This decomposition and information on the unitary part may be difficult to obtain in applications, we will not make use of it {in the present contribution}.

\end{remark}
For our second main result it is sufficient to assume positivity of the commutator  locally in the spectrum of a unitary reference operator which we always call $U$.

We denote $\bT:={\mathbb R}/2\pi {\mathbb Z}$ and use its identification with the unit circle $\partial\bD=\exp(i \bT)$ .
Denote the spectral family of $U$ by $E(\Theta)$, where $\Theta$ belongs to the Borel sets of $\bT$. Thus for bounded Borel functions $\Phi$ on $\partial\bD$, we have
\[\Phi(U)=\int_{\mathbb T}\Phi(e^{i\theta})dE(\theta) .
\]

\begin{definition}
Let $U\in C^1(A)$.  We say that:

 $U$ satisfies a Mourre estimate w.r.t. $A$ on the Borel set $\Theta$ if there exist $a >0$ and a compact operator $K$ such that

\begin{equation}\label{mourre-aK}
E(\Theta)(U A U^* -A)E(\Theta) \geq a E(\Theta) +K ,
\end{equation}
The  Mourre estimate is called {\emph strict} if $K=0$.
\end{definition}

\bigskip
{For an operator $B$ denote $|B| := \sqrt{B^*B}$. We assume 

\begin{hypos}[${\bf H}$] For  a  selfadjoint operator  $A$,
\begin{itemize}
\item[(H1):] there exists a unitary operator $U$, $U\in C^1(A)$  such that a Mourre estimate \eqref{mourre-aK} holds on an open subset $\Theta \subset {\mathbb T}$;
\item[(H2):] $V=PUQ$ for $P,Q$  such that $0\le P\le 1$, $0\le Q\le 1$;
\item[(H3):] $V\in C^{1,1}(A)$;
\item[(H4):] Let $a$ be the Mourre constant defined in (\ref{mourre-aK}). For $W=UV^\ast$ and for $W=U^\ast V$, it holds: there exists a compact selfadjoint operator $K_W$ such that
\begin{equation}\label{eq:h4c}
\Vert i ad_A (\Im(W)) -K_W \Vert<a
\end{equation}
and there exists $ \alpha >0$, such that
\begin{equation}{\label{eq:h4p}}
2 \Re (1-W) - (1+\alpha) |1-W |^2 \geq 0.
\end{equation}
\end{itemize}
\end{hypos}

\begin{remarks} Hypothesis (H4) is technical and used in Section \ref{prelim-est:L}. 
\begin{enumerate}
\item (H4) is not a restriction if in (H2) $P=1$ or $Q=1$ c.f. Remark \ref{farfromnd+weights} below.
\item (H4) is not a restriction if in (H2) $P=Q$ and $P$ is an orthogonal projection such that $[U,P]$ is compact, c.f. Remark \ref{p=q} below.
\end{enumerate}
\end{remarks}

Concerning  eigenvalues on the unit circle, we will prove 

\begin{proposition}\label{virial}  Assume $(H1), (H2)$. Then
\begin{enumerate} 
\item For $\mu \in \partial\bD, \psi\in\hil\setminus\{0\}$:  $V\psi=\mu \psi\Longrightarrow U\psi=\mu \psi$.
\item If a strict Mourre  estimate (\ref{mourre-aK}) holds with $K=0$ in $\Theta$, then $V$ has no eigenvalues in $e^{i\Theta}$.
\end{enumerate}
\end{proposition}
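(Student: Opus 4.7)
For part (1), the plan is to exploit that $V\psi=\mu\psi$ with $|\mu|=1$ forces $\|V\psi\|=\|\psi\|$, which is an equality in a chain of contraction estimates dictated by the decomposition $V=PUQ$. Using $U$ unitary and $0\le P,Q\le 1$,
\[
\|\psi\| \;=\; \|V\psi\| \;=\; \|PUQ\psi\| \;\le\; \|UQ\psi\| \;=\; \|Q\psi\| \;\le\; \|\psi\|,
\]
where the last step uses $Q^2\le Q\le 1$ via $\|Q\psi\|^2=\langle\psi,Q^2\psi\rangle\le\|\psi\|^2$. Equality in the last inequality forces $(1-Q^2)\psi=0$, and since $0\le Q\le 1$, functional calculus yields $Q\psi=\psi$. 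Then $UQ\psi=U\psi$, and equality in $\|PU\psi\|\le\|U\psi\|$ analogously gives $PU\psi=U\psi$ (via $(1-P^2)U\psi=0$ and $0\le P\le 1$). Consequently $V\psi=PU\psi=U\psi$, so $U\psi=\mu\psi$.

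For part (2), the plan is to reduce to eigenvectors of $U$ by (1) and then invoke the Virial theorem for unitary operators in $C^1(A)$. Suppose, for contradiction, that $V$ has an eigenvalue $\mu=e^{i\theta_0}\in e^{i\Theta}$ with eigenvector $\psi\ne 0$. By (1), $U\psi=\mu\psi$; since $\Theta$ is open and contains $\theta_0$, the spectral projector of $U$ satisfies $E(\Theta)\psi=\psi$. By Remark 2, $UAU^*-A$ admits a bounded extension, and the Virial identity
\[
\langle\psi,(UAU^*-A)\psi\rangle \;=\; 0
\]
holds for eigenvectors of $U$. Applying the strict Mourre estimate,
\[
0 \;=\; \langle\psi,(UAU^*-A)\psi\rangle \;=\; \langle\psi,E(\Theta)(UAU^*-A)E(\Theta)\psi\rangle \;\ge\; a\|\psi\|^2 \;>\; 0,
\]
a contradiction.

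The main obstacle is the justification of the Virial identity when $\psi\notin D(A)$: on $D(A)$ the computation is trivial because $\langle U\psi,AU\psi\rangle=|\mu|^2\langle\psi,A\psi\rangle=\langle\psi,A\psi\rangle$, but in general one must approximate $\psi$ by $\psi_n:=f_n(A)\psi\in D(A)$ for a bounded Borel cutoff $f_n\to 1$ and control the commutator terms $[U,f_n(A)]$ as $n\to\infty$ using the $C^1(A)$-regularity of $U$. This is standard in the Mourre-commutator literature for unitaries (\cite{ABCF, abc2}); modulo this lemma, the rest of the argument is the short chain displayed above.
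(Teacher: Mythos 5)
Your proposal is correct and follows essentially the same route as the paper. For part (1) you work out directly the norm chain and the consequences of equality, which is precisely what the paper abstracts into its Lemma~\ref{evAB} (applied twice, once to $(PU,Q)$ and once to $(U,U^*PU)$); for part (2) you reduce to an eigenvector of $U$ via (1) and invoke the Virial theorem for unitary $C^1(A)$ operators, deferring its proof to \cite{ABCF,abc2} just as the paper does by citing \cite{abc2}, Section 4.1.
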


We now state our second main result, a local limiting absorption principle in the subset $\Theta$ of the spectrum of $U$ where the Mourre estimate (\ref{mourre-aK}) holds: 

We denote  by ${\mathcal E} (B)$ the set of eigenvalues of  an operator $B$.

\begin{theorem}\label{lap1} Assume $(H1-4)$. Then for any  $s>1/2$ the map

 $z\mapsto \bra A \ket^{-s}(1-zV^*)^{-1}\bra A \ket^{-s}$ 
 extends continuously from ${\mathbb D}$ to ${\mathbb D}\cup e^{i\Theta} \setminus\cE(U)$ in the operator norm topology; in particular
\begin{equation*}
\sup_{z \in [0,1)\cdot e^{i\Theta_0}} \|\bra A \ket^{-s}(1-zV^*)^{-1}\bra A \ket^{-s} \| < \infty 
\end{equation*}
for any closed set $\Theta_0$ such that $e^{i{\Theta_0}} \subset e^{i\Theta} \setminus \cE( U )$ 
\end{theorem}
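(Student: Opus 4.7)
The plan is to apply Mourre's positive commutator method to the contraction $V$, combining the local Mourre estimate (H1) for the reference unitary $U$ with the comparison hypothesis (H4) to produce a local positive commutator estimate for $V$, and then to run a LAP differential-inequality argument analogous to the proof of Theorem \ref{lap3}. As a first step I would fix a closed $\Theta_0$ with $e^{i\Theta_0}\subset e^{i\Theta}\setminus\cE(U)$ and choose an open $\tilde\Theta$ with $\Theta_0\subset\tilde\Theta\subset\overline{\tilde\Theta}\subset\Theta\setminus\cE(U)$. Since $U$ has no eigenvalues in $\overline{\tilde\Theta}$ and $K$ in \eqref{mourre-aK} is compact, the norm $\|E(\tilde\Theta')KE(\tilde\Theta')\|$ can be made arbitrarily small by shrinking $\tilde\Theta'\subset\tilde\Theta$ around $\Theta_0$, yielding a strict Mourre estimate $E(\tilde\Theta')(UAU^*-A)E(\tilde\Theta')\geq a'\,E(\tilde\Theta')$ with $a'$ as close to $a$ as desired; Proposition \ref{virial} simultaneously ensures that $V$ has no eigenvalues in $e^{i\tilde\Theta'}$.

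The main step is to transfer this estimate to $V$. Using $V=PUQ$ and the operators $W_1:=UV^*$, $W_2:=U^*V$ appearing in (H4), I would expand the selfadjoint quantity $\Re(V^*[A,V])$ (equivalently $V^*AV-\tfrac{1}{2}(V^*VA+AV^*V)$) into a unitary piece built from $U^*[A,U]$ plus correction terms depending on $1-W_1$, $1-W_2$ and on $ad_A(\Im W_j)$. The quadratic positivity $2\Re(1-W_j)-(1+\alpha)|1-W_j|^2\geq 0$ would then absorb the symmetric part of the correction in terms of $\Re(1-W_j)$, while the strict bound $\|i\,ad_A(\Im W_j)-K_{W_j}\|<a$ would ensure the antisymmetric commutator contributions do not overwhelm the Mourre constant $a$ inherited from $U$. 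The outcome should be a local estimate of the form
\[
E(\tilde\Theta')\,\Re\bigl(V^*[A,V]\bigr)\,E(\tilde\Theta')\;\geq\;c\,E(\tilde\Theta')+K_0,\qquad c>0,\ K_0\text{ compact},
\]
which upgrades to a strict estimate after further shrinking $\tilde\Theta'$, thanks to the absence of eigenvalues of $V$ in $e^{i\tilde\Theta'}$ granted by Proposition \ref{virial}.

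Once such a local Mourre estimate for $V$ is available, the Mourre differential-inequality technique applied to the weighted resolvent $F(z):=\japAs(1-zV^*)^{-1}\japAs$ with $s>1/2$ yields a uniform norm bound on $F(z)$ for $z$ in compact subsets of $\bD\cup e^{i\tilde\Theta'}$; the argument is essentially the one used to prove Theorem \ref{lap3}, but now localized through the spectral projection $E(\tilde\Theta')$. The $\mathcal{C}^{1,1}(A)$ regularity (H3) then enters in the classical Mourre way to promote the uniform bound to H\"older continuity of the boundary values, hence to a continuous extension of $F$ from $\bD$ to $\bD\cup e^{i\tilde\Theta'}$. Since $\tilde\Theta'$ can be taken arbitrarily close to any compact subset of $\Theta\setminus\cE(U)$, a patching argument delivers the extension to all of $\bD\cup e^{i\Theta}\setminus\cE(U)$.

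The principal obstacle will be the transfer step: the factorization $V=PUQ$ does not commute with $[A,\cdot]$, and the failure of $V^*V=1$ prevents the clean unitary identity $U^*AU-A=U^*[A,U]$ from being reused verbatim, so the algebraic expansion produces correction terms that are only bounded, not automatically compact or small. Hypothesis (H4) is engineered precisely to make the required cancellations and positivity work, but carrying out the manipulation carefully enough to produce a genuinely local strict commutator estimate, using both $W_1=UV^*$ and $W_2=U^*V$ in the symmetric roles they play for $(1-zV^*)^{-1}$ and its adjoint, is the core technical challenge.
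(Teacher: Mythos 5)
Your proposal correctly identifies the general Mourre philosophy, but there is a genuine structural gap: you propose to derive a \emph{localized} positive commutator estimate of the form $E(\tilde\Theta')\,\Re(V^*[A,V])\,E(\tilde\Theta')\geq c\,E(\tilde\Theta')+K_0$ with $E(\tilde\Theta')$ a spectral projection of $U$, and then run the differential-inequality argument ``localized through the spectral projection $E(\tilde\Theta')$.'' This does not go through as stated. The weighted resolvent in Theorem~\ref{lap1} is $\japAs(1-zV^*)^{-1}\japAs$ \emph{without} any cutoff by a function of $U$, and because $V$ is a non-normal contraction, the projection $E(\tilde\Theta')$ (which is built from $U$) does not commute with $(1-zV^*)^{-1}$ in any useful way. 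If one naively conjugates by $E(\tilde\Theta')$ one only obtains an estimate for $\japAs E(\tilde\Theta')(1-zV^*)^{-1}E(\tilde\Theta')\japAs$, i.e.\ essentially the statement of Theorem~\ref{lap2}, not Theorem~\ref{lap1}. Moreover, your plan to ``upgrade to a strict estimate after further shrinking $\tilde\Theta'$, thanks to the absence of eigenvalues of $V$'' relies on a standard shrinking argument that works for normal operators but is not available here; the paper only shrinks around the reference unitary $U$ (Lemma~\ref{Theta2Theta'}).

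The paper avoids this obstruction by never establishing a localized commutator inequality of the kind you propose. Instead it works directly with the deformed resolvent $T_\epsilon(z)=1-zV_\epsilon^*$ and proves an \emph{unlocalized} operator inequality (Proposition~\ref{mmtrick}) of the form $T_\epsilon(z)^*+\bar z V_\epsilon T_\epsilon(z) + \tfrac{3\pi^2a_1}{2d_0^2}\epsilon|z|\,|T_\epsilon(z)|^2\geq d(\epsilon,z)$. The crucial device is Lemma~\ref{eperp}: for $z$ in a sector near $e^{i\Theta_0}$, the projection $E^\perp$ is dominated by a combination of $|T_\epsilon(z)|^2$, $|\overline{R}|^2$ (with $\overline{R}=1-UV^*$) and an $O(\epsilon^2)$ term, through the resolvent of $U$ at $z$. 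This converts the localization into a term absorbed by the quadratic-form inequality (the extra $|T_\epsilon(z)|^2$ piece becomes a bounded correction after conjugating by $G_\epsilon(z)$), while the term $|\overline{R}|^2$ is controlled by hypothesis (H4) via $2\Re\overline{R}-(1+\alpha)|\overline{R}|^2\geq 0$. Your proposal does correctly anticipate the role of (H4) in absorbing correction terms and of Lemma~\ref{identities} in re-expressing $\Re(V\,\mathrm{ad}_A V^*)$, but without the $E^\perp\lesssim|T_\epsilon(z)|^2+|\overline{R}|^2+O(\epsilon^2)$ trick the argument cannot produce a uniform bound for the unlocalized weighted resolvent.
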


Our third version of the limiting absorption principle is : 

\begin{theorem}\label{lap2}  Assume $(H1-4)$. Then for any  $s>1/2$ and any closed set $\Theta_0$ such that $e^{i{\Theta_0}} \subset e^{i\Theta} \setminus\cE( U )$, any $\Phi \in C^{\infty} (\partial\bD;{\mathbb R})$  supported on $e^{i{\Theta_0}}$:
\begin{gather*}
\sup_{z\in\bD} \|\bra A \ket^{-s} \Phi (U) (1-zV^*)^{-1} \Phi (U) \bra A \ket^{-s} \| < \infty \hbox{ and }
\end{gather*}
\[\overline{\mathrm{Ran} \, \Phi (U) \bra A \ket^{-s}} \subset {\mathcal H}_{ac}(V)\cap\hil_{ac}(V^\ast).\]
\end{theorem}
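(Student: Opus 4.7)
The absolute continuity statement reduces, via Proposition \ref{lapToHac} applied with weight $W=\bra A\ket^{-s}\Phi(U)$, to the uniform bound
\[\sup_{z\in\bD}\|\bra A\ket^{-s}\Phi(U)(1-zV^*)^{-1}\Phi(U)\bra A\ket^{-s}\|<\infty;\]
the $\hil_{ac}(V^*)$-part then follows by applying the same reasoning to $V^*$, noting that the bound for $V$ is equivalent to the one for $V^*$ through $\|W(1-zV)^{-1}W^*\|=\|W(1-\overline{z}V^*)^{-1}W^*\|$ (Hypothesis (H4) being already symmetric in $UV^*$ and $U^*V$).

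To prove the uniform bound, I would pick a closed set $\Theta_1$ with $\Theta_0\subset\mathrm{int}\,\Theta_1$ and $e^{i\Theta_1}\subset e^{i\Theta}\setminus\cE(U)$, then split $\bD=\Omega_1\cup\Omega_2$ with $\Omega_1:=[0,1)\cdot e^{i\Theta_1}$. On $\Omega_1$, Theorem \ref{lap1} applied to $\Theta_1$ bounds $\|\bra A\ket^{-s}(1-zV^*)^{-1}\bra A\ket^{-s}\|$ uniformly; combined with the estimates $\|\bra A\ket^{\pm s}\Phi(U)\bra A\ket^{\mp s}\|<\infty$ (Helffer-Sj\"ostrand calculus using $\Phi\in C^\infty$ and $U\in C^1(A)$, with real interpolation to accommodate the fractional weight $s>1/2$), the factorisation
\[\bra A\ket^{-s}\Phi(U)(1-zV^*)^{-1}\Phi(U)\bra A\ket^{-s}=\bigl(\bra A\ket^{-s}\Phi(U)\bra A\ket^{s}\bigr)\bra A\ket^{-s}(1-zV^*)^{-1}\bra A\ket^{-s}\bigl(\bra A\ket^{s}\Phi(U)\bra A\ket^{-s}\bigr)\]
closes the argument on $\Omega_1$.

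On $\Omega_2$ one has $\mathrm{dist}(z,e^{i\Theta_0})\geq\eta>0$, so functional calculus yields $\|\Phi(U)(1-zU^*)^{-1}\Phi(U)\|\leq\|\Phi\|_\infty^2/\eta$ and the analogous one-sided estimate. The intended transfer from $U^*$ to $V^*$ proceeds through the second resolvent identity
\[\Phi(U)(1-zV^*)^{-1}\Phi(U)=\Phi(U)(1-zU^*)^{-1}\Phi(U)+z\,\Phi(U)(1-zV^*)^{-1}(V^*-U^*)(1-zU^*)^{-1}\Phi(U),\]
together with $V^*-U^*=(Q-\bI)U^*+QU^*(P-\bI)$ coming from (H2). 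The main obstacle is that the factor $\Phi(U)(1-zV^*)^{-1}$ is cut off on only one side, so naive iteration of the identity does not close; I expect the resolution to combine (H4) — which provides compactness of $i\,ad_A(\Im(UV^*))$ up to arbitrarily small norm — with compactness of $[\Phi(U),P]$ and $[\Phi(U),Q]$ (inherited from the smoothness of $\Phi$ and $U\in C^1(A)$) in order to absorb the uncontrolled factor into the $\Omega_1$-bound. A conceptually cleaner alternative is complex-analytic: view $z\mapsto\bra A\ket^{-s}\Phi(U)(1-zV^*)^{-1}\Phi(U)\bra A\ket^{-s}$ as a $\cB(\hil)$-valued holomorphic function on $\bD$ whose continuous boundary values on $e^{i\Theta_1}$ are provided by Theorem \ref{lap1}, whose boundary growth on $\partial\bD\setminus e^{i\Theta_0}$ is controlled by the resolvent identity above, and then invoke a Phragm\'en-Lindel\"of-type maximum principle to propagate the bound to all of $\bD$.
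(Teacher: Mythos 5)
Your reduction to the uniform bound via Proposition \ref{lapToHac}, the split of $\bD$ into an angular sector over $\Theta_1$ and its complement, the use of Theorem \ref{lap1} together with boundedness of $\bra A\ket^{\mp s}\Phi(U)\bra A\ket^{\pm s}$ on the first piece, and the second resolvent identity on the second piece are exactly the right moves and coincide with the paper's strategy. You also correctly identify the obstacle: after the resolvent identity, one is left with the factor $(U^*-V^*)(1-zV^*)^{-1}\Phi(U)\bra A\ket^{-s}$, which is only localized on one side, and naive iteration does not close.

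However, the two alternatives you offer to overcome it do not work, and this is a genuine gap. The key ingredient you are missing is a quantitative estimate that exploits the algebraic structure $V=PUQ$, $0\le P,Q\le 1$ of (H2) (not (H4)). One writes $U^*-V^*=U^*(1-P)+QU^*(P-\bI)+\dots$, then uses $(1-P)^2\le 1-P^2\le 1-VV^*\le 1-|z|^2VV^*=T_0(z)+T_0(z)^*zV^*$ (and similarly for the $Q$-piece) to obtain, for all $z\in\bD$ and $\psi\in\hil$,
\[
\|(U^*-V^*)(1-zV^*)^{-1}\psi\|^2\le 8\,\bra\psi,\Re\bigl((1-zV^*)^{-1}\bigr)\psi\ket.
\]
Feeding this into your resolvent identity bounds the sought quantity $X(z):=\|\bra A\ket^{-s}\Phi(U)(1-zV^*)^{-1}\Phi(U)\bra A\ket^{-s}\|$ for $\arg z\notin\Theta_0$ by $C_1+C_2\sqrt{X(z)}$, a self-referential quadratic inequality which then yields the uniform bound. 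This self-bounding trick is what replaces the iteration you noted cannot close. By contrast, your compactness-absorption idea has no mechanism to convert compactness of commutators into a uniform-in-$z$ bound on the uncontrolled resolvent factor; and the Phragm\'en–Lindel\"of route begs the question, since establishing the requisite boundary growth control on $\partial\bD\setminus e^{i\Theta_0}$ would already require a bound on $(1-zV^*)^{-1}$ near that arc (where it can genuinely blow up — recall that in the fundamental example the spectrum of $V$ is all of $\overline{\bD}$). The quadratic inequality argument is essential and cannot be bypassed by a soft maximum-principle argument alone.
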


We will illustrate our abstract results in Section \ref{two} through various examples. In particular, we discuss the role played by the  hypotheses. The following sections are dedicated to the proofs.  In Section \ref{dynamics}, we prove Proposition \ref{virial} and relate the limiting absorption principle to the control of the absolutely continuous subspace as stated in Theorems \ref{lap3} and \ref{lap2}. The proof of the limiting absorption principles  is developed in Section \ref{prooflap1}. We start with some auxiliary results on Mourre inequalities in Section \ref{mourre:2}. The proof of our limiting absorption principle under the  ${\mathcal C}^{1,1}(A)$ regularity condition, which is optimal on the scale of $\cC^{s,p}(A)$ spaces  as developed in \cite{abmg}, requires the technical developments of Section \ref{prop:V}. In Section \ref{prelim-est}  we then proceed with some a priori estimates on a weighted version of a suitably  deformed resolvent. In Section \ref{diff:ineq}, we establish some differential inequalities on this weighted deformed resolvent, which, once combined with the a priori estimates, allows to conclude  the proofs of Theorems \ref{lap3} and \ref{lap1}. Some complementary aspects related to the proof of Theorem \ref{lap2} are developed in Section \ref{prooflap2}. 

\section{Applications and discussion of the hypotheses}\label{two}

We will use freely:
\begin{remark}
$B\in C^1(A)$ if and only if  the sesquilinear form
$$
{\cal D}(A)\times {\cal D}(A)\ni (\varphi,\psi)\mapsto\langle A\varphi,B\psi\rangle - \langle \varphi,BA\psi\rangle
$$
extends continuously to ${\cal H}\times {\cal H}$. In this case, the bounded operator associated to its extension  is $\mathrm{ad}_A (B)$. 
\end{remark}

\subsection{Fundamental example}
We illustrate Theorem \ref{lap2} with the rather basic but instructive example of a specific rank 1 perturbation of the shift operator.  

With the normalized Lebesgue measure $d\ell$, let $\hil:=L^2(\partial\bD,d\ell)$, $U\psi(z):=z\psi(z)$, $Q\psi(z):=\psi(z)-\int_{\partial\bD}\psi d\ell$, $V:=UQ$.
Remark that
\[Vz^n=\left\{
\begin{array}{ll}
  z^{n+1}   & n\neq0  \\
  0  &  n=0 
\end{array}
\right.
\]
so $\ker(V)= span\{z^0\}$ and on $\ker{V}^\perp$ the contraction decouples to the forward shift and a unitary equivalent to the  backward shift. In particular
\[\hbox{{spectrum}}(V)=\overline\bD\quad \hbox{ and } \cE(V)={\bD}.\]
While one can see explicitly from the definition of $V$ that $\hil_{ac}(V)\cap\hil_{ac}(V^\ast)=\hil$, our hypothesis are satisfied and we can apply our theorem. Indeed, for

\[A:=z\partial_z \hbox{ on } D(A):=\left\{\sum_{n\in \bZ} a_nz^n; \sum_{n\in\bZ}(1+n^2)\vert a_n\vert^2 <\infty
\right\}\]
it holds
\[e^{-iAt}Ue^{iAt}=e^{-it}U, \quad e^{-iAt}Qe^{iAt}=Q, 
\]
in particular $U$ and $Q$ are in $C^\infty(A)$ and $(U^\ast A U- A)=1$ so a strict Mourre estimate (\ref{mourre-aK}) holds with constant $a=1$ on any measurable subset $\Theta$.

Concerning hypothesis (H4), remark that $V^\ast U=Q$, $UV^\ast=UQU^\ast$, thus for $W\in\{ V^\ast U, UV^\ast\}$: $\Im(W)=0$ so \eqref{eq:h4c} always holds with $K_W=0$. Also in both cases $0\le\Re(W)\le1$, thus $0\le1-\Re(W)\le1$, which implies $1-\Re(W)\ge \vert 1-\Re(W)\vert^2$ and \eqref{eq:h4p} holds with $\alpha=1$. 

So {$(H1-4)$} are satisfied and we can apply Theorem \ref{lap2} for $\Phi=id$ and conclude that for $s>\frac{1}{2}$, $\hil_s:=\left\{\sum_{n\in \bZ} a_nz^n; \sum_{n\in\bZ}{(1+n^2)}^s\vert a_n\vert^2 <\infty\right\}$:
\[\overline{\hil_s}=\hil=\hil_{ac}(V)\cap\hil_{ac}(V^\ast).\]

\begin{remark}\label{farfromnd+weights} 
It follows from this argument that in general if $P=1$, i.e. $V=UQ$, then Hypothesis $(H2)$ implies $(H4)$. The same is true if $V$ is of the form $V=PU$.
\end{remark}

\begin{remark} The unilateral forward shift {$V\psi(z):=z\psi(z)$} on the Hardy space
$$
\hil:=\left\{\sum_{n\in \bN} a_nz^n; \sum_{n\in\bN}\vert a_n\vert^2 <\infty
\right\}
$$
is an example for Theorem \ref{lap3}. \\
{Indeed,} with 
$A:=z\partial_z \hbox{ on } D(A):=\left\{\sum_{n\in \bN} a_nz^n; \sum_{n\in\bZ}(1+n^2)\vert a_n\vert^2 <\infty
\right\}$, it holds $V\in C^\infty(A)$ and
$V^\ast ad_A(V)=1$, 
so \ref{lap3} applies.
\end{remark}

\subsection{Contractive convolution operators, quantum walks}
In $\hil=\ell^2\left(\bZ^d; \bC^{d^\prime}\right)$ consider $V=UP$ for $U=C_0C_1$ , $P=P_0P_1P_0$  where $C_1,P_1$ are convolution operators and $P_0,C_0$  matrix valued multiplication operators with $C_j$ unitary {and $0\le P_j\le1$}. 

More specifically, denote $U(d^\prime)\subset \bM_{d^\prime, d^\prime}$ the unitary group  and $\cP_{01}(d^\prime)=\{Q\in \bM_{d^\prime, d^\prime}; 0\le Q\le 1\}$. For the symbols $f\in L^\infty(\bT^d,U(d^\prime))$, $g\in \ell^\infty(\bZ^d,U(d^\prime))$, and $p\in L^\infty(\bT^d,\cP_{01}(d^\prime))$,  $q\in \ell^\infty(\bZ^d,\cP_{01}(d^\prime))$ consider
\[C_1\psi(x)=\sum_{y\in\bZ}\widehat{f}(x-y)\psi(y), \quad C_0\psi(x)=g(x)\psi(x)\]
\[P_1\psi(x)=\sum_{y\in\bZ}\widehat{p}(x-y)\psi(y) , \quad P_0\psi(x)=q(x)\psi(x).\]

Suppose that 
\[{f} \hbox{ is analytic and  } \int_1^\infty\sup_{r\le\vert x\vert\le 2r}\Vert g(x)-1\Vert dr<\infty.\]
\[{p}\in C^3(\bT^d, \cP_{01}(d^\prime)) \hbox{ and } \int_1^\infty\sup_{r\le\vert x\vert\le 2r}\Vert q(x)-1\Vert dr<\infty.\]

Remark that contractive quantum walks with asymptotically periodic coins and local absorption are a particular example of the above \cite{HJ2,abj3}. 

We argue that hypotheses {$(H1-4)$} are satisfied:

\smallskip

Denote the selfadjoint operator $X\psi(x):=x\psi(x)$ defined on $D(X)=\{\psi\in\hil; \sum_x (1+x^2)\vert\psi(x)\vert^2<\infty\}$. Then there exists a selfadjoint propagation observable $A$ such that $U\in{\mathcal C}^{1,1}(A)$ and a discrete subset {$\tau_f\subset\bT$} such that a  Mourre estimate (\ref{mourre-aK}) holds on every open $\Theta$ such that $\overline{\Theta}\subset\bT\setminus {\tau_f}$. 

Furthermore $A$  is relatively bounded, $A^s\langle X\rangle^{-s}\in \cB(\hil), s\in \{1,2 \}$ and  $P\in {\mathcal C}^{1,1}(A)$, 
we again refer to \cite[section 3]{abj3} for proofs.   

Also  $0\le P\le1$; Hypotheses $(H1)-(H3)$ are satisfied and by Remark \ref{farfromnd+weights} also $(H4)$. Thus Theorem \ref{lap2} applies.


\subsection{Toeplitz operators}
Let $\hil=\ell^2\left(\bZ; \bC\right)$, $V=PUP$ with $P$  the multiplication by the characteristic function of the half line: $P\psi(x)=\chi(x\ge0)\psi(x)$, and  $U$ a unitary  convolution operator with symbol $f\in L^\infty(\bT,U(1))$, $U\psi(x):=\sum_{y\in\bZ}\widehat{f}(x-y)\psi(y).$

Remark that the restriction of $V$ to $\ran{P}$ is equivalent to the Toeplitz operator with symbol $f$.

We now show that Hypotheses $(H1)-(H4)$ are satisfied if $f$ is smooth enough and Theorem \ref{lap2} applies. 

If $f\in C^3(\bT,U(1))$ and {$ f^\prime \hbox{ has no zeros in } 
\Theta\subset\bT$},
then there exists a propagation observable $A$ such that $U\in {\mathcal C}^{1,1}(A)$ and such that a Mourre estimate (\ref{mourre-aK}) holds in $\Theta$, see \cite[section 3]{abj3} and \cite{abc3}. 

In order to prove smoothness of the Hardy projection $P$, we recall the construction of $A$.

Denote $L_g$ be the convolution operator with symbol $g$ {on $\hil$}. For $g:= if \bar{f'} $ the selfadjoint operator $A:= \frac{1}{2} \left( L_g X + XL_g \right)=L_gX+\frac{i}{2}L_{g^\prime}$ is defined by extension from $D(X)$ which is a core for $A$.

It holds

\begin{prop} If $f\in C^6(\bT,U(1))$ then $P\in {\mathcal C}^{1,1} (A)$.
\end{prop}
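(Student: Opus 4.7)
The plan is to establish the stronger property $P\in C^2(A)$, which by the first Remark following the definition of the regularity classes yields $P\in\mathcal{C}^{1,1}(A)$. The key structural point is that both $X$ and $P$ act by multiplication in position space and therefore commute on $D(X)$. Consequently the nontrivial contributions to $[A,P]$ come only from commutators of the convolution operators $L_g$ and $L_{g'}$ with $P$, and as a sesquilinear form on $D(X)\times D(X)$ one has
\[
[A,P]=[L_g,P]\,X+\tfrac{i}{2}[L_{g'},P].
\]

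For any bounded $h$ the commutator $[L_h,P]$ has integral kernel $\hat h(x-y)(\chi_{y\ge 0}-\chi_{x\ge 0})$, supported on the off-quadrants $\{x\ge 0,\,y<0\}\cup\{x<0,\,y\ge 0\}$. On this support $|x-y|=|x|+|y|\ge\max(|x|,|y|)$. Using this, Schur's test controls $[L_g,P]\,X$ by the operator with kernel $|x-y|\,|\hat g(x-y)|=|\widehat{g'}(x-y)|$, which is uniformly summable in $x$ and in $y$ as soon as $\sum_n|\widehat{g'}(n)|<\infty$. Since $g=if\bar{f'}\in C^5$ when $f\in C^6$, this holds with margin, and the same estimate bounds $[L_{g'},P]$. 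Hence $[A,P]$ extends to a bounded operator on $\mathcal{H}$ and $P\in C^1(A)$.

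For the second commutator, expanding $[AB,C]=A[B,C]+[A,C]B$ and using $[X,L_h]=-iL_{h'}$ yields
\[
[A,[A,P]]=[L_g,[L_g,P]]\,X^2+R,
\]
where the remainder $R$ is a finite sum of terms of the form $L_g[L_{g'},P]\,X$, $[L_g,P]\,L_{g'}X$ and $[L_{g'},[L_g,P]]$, each handled by applying the Schur bound of the previous step to $g$ or $g'$. To control the leading term one splits $(x,y)$ into the four sign quadrants and expands $L_g[L_g,P]-[L_g,P]L_g$; in each region the kernel reduces to a $\hat g\ast\hat g$ convolution against one sign factor supported on off-quadrants. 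Iterating the bound $|x-y|\ge\max(|x|,|y|)$ twice trades the factor $X^2$ for two derivatives of $g$, and the Schur test closes under $\sum_n n^2|\hat g(n)|<\infty$, comfortably ensured by $g\in C^5$. The main technical obstacle is exactly the presence of the unbounded multipliers $X$ and $X^2$ in the iterated commutators; the decisive point is that commuting with the sharp cutoff $P$ always produces Hankel-type kernels supported on off-quadrants of $\bZ\times\bZ$, converting position weights into Fourier derivatives of the symbol. Once these two iterated commutators are bounded, $P\in C^2(A)\subset\mathcal{C}^{1,1}(A)$ follows from the standard criteria of \cite{abmg}.
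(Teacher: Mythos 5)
You prove the stronger conclusion $P\in C^2(A)$ by directly bounding both iterated commutators, whereas the paper only establishes $P\in\mathcal{C}^{1,1}(A)$ via the dyadic criterion of Theorem~7.5.8 in \cite{abmg}: it shows that $\mathrm{ad}_A P$ and $X\,\mathrm{ad}_A P$ are bounded (in fact Hilbert--Schmidt, from matrix-element decay of $P^\perp L_h P$) and then verifies $\int_1^\infty \|\chi(\langle X\rangle/r)\,\mathrm{ad}_A P\|\,\frac{dr}{r}<\infty$. That route never touches the second commutator and so sidesteps the unbounded $X^2$ multiplier your argument has to absorb; your route, by contrast, is self-contained and yields the stronger $C^2(A)$ regularity. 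Your outline is sound, but one step is stated inaccurately and should be repaired: the kernel of $[L_g,[L_g,P]]$ is \emph{not} supported on the off-quadrants. Writing it as
\[
[L_g,[L_g,P]](x,y)=\sum_{w}\hat g(x-w)\,\hat g(w-y)\,\bigl(\chi(y\ge 0)+\chi(x\ge 0)-2\chi(w\ge 0)\bigr),
\]
one sees it is nonzero on the same-sign quadrants as well (e.g.\ $x,y\ge 0$ with $w<0$). What makes the Schur bound close is that, for each $(x,y)$, the sign factor restricts $w$ so that at least one of $(x,w)$, $(w,y)$ is an off-quadrant pair; combined with the triangle inequality $|y|\le |x-w|+|w-y|$ this lets you trade $y^2$ for $|x-w|^2+|w-y|^2$, i.e.\ for two Fourier derivatives of $g$. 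With this repair your argument closes under $\sum_n n^2|\hat g(n)|<\infty$, amply ensured by $g\in C^5$, i.e.\ $f\in C^6$.
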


\begin{proof} 
We show first that $\langle X\rangle^\alpha\mathrm{ad}_A P $ is a bounded operator for $\alpha\in\{0,1\}$.

Let $P^\perp:=1-P$. For a symbol $h\in C^4(\bT, \bC)$, we can estimate the matrix elements of $P^{\perp}  L_h  P$ on the canonical  basis of $l^2 ({\mathbb Z})$:
\[
\begin{split}
| \bra e_k, P^{\perp} L_h Pe_l \ket |& =\chi(k<0)\chi(l\ge0) \left( |\bra e_k, L_h e_l \ket | = |\hat{h}_{l+|k|} | \right)\\
&\le \frac{const}{ (|k|+|l|)^4} \leq \frac{const}{ \bra k\ket^{2} \bra l\ket^{2 }}
\end{split}
\]
For $\alpha,\beta\in\{0,1\}$ it follows 
\[| \bra e_k, X^\alpha P^{\perp} L_h P X^\beta e_l \ket | \leq \frac{const \bra k\ket^{\alpha} \bra l\ket^{\beta }}{ \bra k\ket^{2} \bra l\ket^{2 }}
\]
which implies that the Hilbert-Schmidt norm of $ X^\alpha P^{\perp} L_h P X^\beta $ is finite. {It holds:
\[P^{\perp} A P=P^{\perp}L_gPX+\frac{i}{2}P^{\perp}L_{g^\prime} P\]
with $g, g'\in C^4(\bT,\bC)$}. It follows that $ad_A P=P^\perp A P-PAP^\perp$ and $X ad_A P$ are bounded operators.

For $\chi \in C^{\infty} ({\mathbb R}; [0,\infty))$ supported on an interval $(a,b)$, $0<a<b$ one has :
\begin{align*}
\int_1^{\infty} \| \chi(\bra X\ket/r) \mathrm{ad}_A P \| \frac{dr}{r} &\leq \int_1^{\infty} \| \chi(\bra X\ket/r) \bra X\ket^{-1} \| \| \bra X\ket \mathrm{ad}_A P \| \frac{dr}{r} \\
&\leq \| \bra X\ket \mathrm{ad}_A P \| \int_1^{\infty} \frac{dr}{ar^2} < \infty .
\end{align*}
In view of the criterion provided by Theorem 7.5.8 \cite{abmg} (also Theorem 3.2 in \cite{abc3}), we deduce that $P \in {\mathcal C}^{1,1} (A)$.
\end{proof}

It is a corollary that $\lbrack U,P\rbrack$ is Hilbert Schmidt for $f\in C^4(\bT,\bC)$ and compact for continuous $f$.

\begin{remark}\label{p=q} Concerning hypothesis (H4). We have $UV^\ast = UPU^\ast P$ , so $2i\Im UV^\ast = P^\perp [U,P]U^\ast P - PU[P,U^\ast] P^\perp$. 

 $[U,P]$ compact and $U, P\in {\mathcal C}^{1,1}(A)$ so \eqref{eq:h4c} holds for $W=UV^\ast $with $K_W=i\mathrm{ad}_A \Im W$ for any $a>0$. On the other hand,
\begin{align*}
2 \Re (1-UV^\ast) &= 2(1-PUPU^\ast P) -P^\perp UPU^\ast P - PUPU^\ast P^\perp \\
|1-UV^\ast |^2 &= 1-PUPU^\ast P -P^\perp UPU^\ast P - PUPU^\ast P^\perp  
\end{align*}
Now observe
\begin{align*}
2 \Re (1-UV^\ast) - \frac{3}{2} |1-UV^\ast |^2 &= \frac{1}{2} (1-PUPU^\ast P) + \frac{1}{2} (P^\perp UPU^\ast P + PUPU^\ast P^\perp) \\
&= \frac{1}{2} (P^\perp + PUP^\perp U^\ast P) - \frac{1}{2} (P^\perp UP^\perp U^\ast P + PUP^\perp U^\ast P^\perp) \\
&= \frac{1}{2} | P^\perp - UP^\perp U^\ast P |^2 .
\end{align*}
which is positive so \eqref{eq:h4p} holds with $\alpha=1/2$.
Similarly one shows that \eqref{eq:h4c} and \eqref{eq:h4p} hold for $W=U^\ast V= U^\ast PUP$ and we conclude that the hypotheses are satisfied and we can apply our theorem.
\end{remark}


\section {Dynamics}\label{dynamics}
To prove the dynamical implications we extend the techniques known for unitary operators.

\subsection{Absolutely continuous subspace}

We prove that the limiting absorption principle implies that certain vectors belong to $\hil_{ac}$ which proves the corresponding assertions in Theorems \ref{lap3} and \ref{lap2}.

\begin{proposition}\label{lapToHac}
For a contraction $V$ and $W\in\cB(\hil)$ it holds: if
\[\sup_{ z\in\bD}\Vert  W(1-zV^*)^{-1}W^\ast\Vert < \infty ,\]
then
\[ \overline{Ran W^\ast}\subset \hil_{ac}(V)\cap \hil_{ac}(V^\ast) .
\]
\end{proposition}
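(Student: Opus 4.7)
My plan is to verify the Kato-type bound
\[
\sup_{\|\varphi\|=1}\sum_{n=0}^{\infty}|\langle \varphi, V^n \psi\rangle|^2 \leq C_\psi < \infty
\]
for every $\psi \in \ran W^*$ (and its $V^*$-analog), so that $\ran W^*$ sits in the set whose closure defines $\hil_{ac}(V)\cap\hil_{ac}(V^*)$. For any $\eta\in\hil$, the $\hil$-valued map $z\mapsto F_\eta(z) := W(1-zV^*)^{-1}W^*\eta = \sum_{n\geq 0}z^n W(V^*)^n W^*\eta$ is analytic on $\bD$ with $H^\infty$-norm at most $M := \sup_{z\in\bD}\|W(1-zV^*)^{-1}W^*\|$. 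Parseval's identity for vector-valued Hardy functions then gives $\sum_n \|W(V^*)^n W^*\eta\|^2 \leq M^2 \|\eta\|^2$, and taking adjoints produces $\sum_n \|WV^n W^*\eta\|^2 \leq M^2\|\eta\|^2$. For $\psi = W^*\eta$ and $\varphi = W^*\xi \in \ran W^*$, Cauchy--Schwarz on $\langle W^*\xi, V^n W^*\eta\rangle = \langle \xi, W V^n W^*\eta\rangle$ already yields the Kato bound $\sum_n |\langle \varphi, V^n \psi\rangle|^2 \leq M^2 \|\xi\|^2 \|\eta\|^2$, but only for $\varphi$ lying in the (possibly non-closed) range of $W^*$.

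The hard step is extending the bound to arbitrary $\varphi\in\hil$, since the hypothesis controls $WV^{*n}W^*$ but not $WV^{*n}$ alone. My plan is to reduce to the unitary case via the Sz.-Nagy minimal unitary dilation $\tilde V$ of $V$ on a larger Hilbert space $\tilde\hil \supset \hil$, which satisfies $V^n = P \tilde V^n|_\hil$ and $V^{*n} = P \tilde V^{*n}|_\hil$ for every $n \geq 0$, with $P = P_\hil$ the orthogonal projection. Setting $\tilde W := WP : \tilde\hil \to \hil$, a term-by-term expansion of the resolvent identifies $\tilde W(1-z\tilde V^*)^{-1}\tilde W^* = W(1-zV^*)^{-1}W^*$ as operators on $\hil$, so the LAP bound transfers to $(\tilde V,\tilde W)$ without change of constant. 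For the unitary $\tilde V$ with spectral resolution $\tilde E_\theta$, the Poisson-integral representation
\[
2\Re\langle \tilde W(1-re^{i\theta_0}\tilde V^*)^{-1}\tilde W^*\eta, \eta\rangle - \|\tilde W^*\eta\|^2 = \int P_r(\theta_0-\theta)\, d\mu_\psi(\theta)
\]
forces the spectral density $d\mu_\psi/d\theta$ of $\psi = \tilde W^*\eta = W^*\eta$ to be bounded a.e.\ by a constant times $\|\eta\|^2$. Plancherel applied to the Fourier expansion of $d\langle\varphi,\tilde E_\theta\psi\rangle/d\theta$, combined with the Cauchy--Schwarz estimate $|d\langle\varphi,\tilde E_\theta\psi\rangle|^2 \leq d\mu_\varphi\, d\mu_\psi$ on the scalar measures, then yields the full Kato bound
\[
\sum_{n\in\bZ}|\langle\varphi, \tilde V^n\psi\rangle|^2 \leq C_M \|\eta\|^2 \|\varphi\|^2
\]
valid for every $\varphi\in\tilde\hil$.

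Restricting to $\varphi \in \hil$ and $n \geq 0$ and using the compression identity $\langle\varphi, \tilde V^n\psi\rangle = \langle\varphi, V^n\psi\rangle$ (valid for $\varphi,\psi \in \hil$ and $n \geq 0$), one recovers the Kato bound for $V$, so $\psi \in \hil_{ac}(V)$. The inclusion $\psi \in \hil_{ac}(V^*)$ follows by the symmetric argument using the adjoint LAP bound $\|W(1-zV)^{-1}W^*\| \leq M$ obtained by taking adjoints in the hypothesis and replacing $z$ by $\bar z$. Closing on both sides then delivers $\overline{\ran W^*}\subset \hil_{ac}(V)\cap \hil_{ac}(V^*)$.
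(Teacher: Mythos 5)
Your proposal is correct and, in its essential structure, matches the paper's proof: both reduce to the unitary case via the Sz.-Nagy dilation, deduce a Kato-smoothness bound for the dilation from the transferred resolvent bound, and then restrict to $\hil$ for $n\geq 0$. The one genuine difference is that the paper simply cites the unitary LAP-to-smoothness result (Theorem 2 of \cite{ABCF}), whereas you re-derive it from scratch via the Poisson-kernel bound on the spectral density of $\psi=W^*\eta$ together with Cauchy--Schwarz on the scalar spectral measures and Plancherel; this is a self-contained and correct, if longer, route. Your opening paragraph (vector-valued Hardy Parseval plus Cauchy--Schwarz) is a sound but ultimately discarded false start, since as you note it only handles $\varphi\in\mathrm{Ran}\,W^*$; the dilation argument you give afterward is what actually carries the proof, and it coincides with the paper's.
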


\begin{proof} The result is known for the case where $V$ is unitary, \cite[Theorem 2]{ABCF}.
Consider the operator
\[P_z(V):=2\Re\left((1-zV^\ast)^{-1}\right)-1=(1-zV^\ast)^{-1}\left(1-\vert z\vert^2V^\ast V \right)(1-zV^\ast)^{-1\ast}.
\]
The assumption implies 
\[{c :=} \sup_{ z\in\bD}\Vert  WP_z(V)W^\ast\Vert < \infty.\] 
Let $\widehat{V}$ be a unitary dilation on a Hilbert space $\widehat{\hil}=\hil\oplus\hil^\perp$ as described in \cite{nf}. For the orthogonal projection $P_\hil$ {onto} $\hil$, it then holds for all $\psi\in\hil$:

$P_\hil \left(\widehat{V}^n\psi\oplus0\right)=V^n\psi$ and 
$P_\hil \left(\widehat{V}^{\ast n}\psi\oplus0\right)=V^{\ast n}\psi$

\noindent and thus $P_\hil \left(P_z(\widehat{V})\psi\oplus0\right)=P_z(V)\psi$. Define ${\bf W}:=W\oplus0\in\cB(\widehat\hil)$. Then, for $\widehat\varphi, \widehat \psi\in\widehat\hil,$

\[\left\vert
 \left\bra P_z(\widehat{V}){\bf W^\ast}\widehat\varphi, {\bf W^\ast}\widehat\psi\right\ket_{\widehat{\hil}} \right\vert=
 \left\vert \left\bra P_z({V}){W^\ast}\varphi, {W^\ast}\psi\right\ket\right\vert\le c\Vert\varphi\Vert\Vert\psi\Vert\le c\Vert\widehat\varphi\Vert_{\widehat{\hil}}\Vert\widehat\psi\Vert_{\widehat{\hil}}
\]
so
\[\sup_{ z\in\bD}\Vert  {\bf W}P_z(V){\bf W}^\ast\Vert=c.
\]
Now by \cite[Theorem 2]{ABCF}

\[\sum_{n\in\bZ}\Vert{\bf {\bf W}}{\widehat{V}}^n \widehat\varphi\Vert_\hil^2\le c \Vert\widehat\varphi\Vert_\hil^2.
\]
Taking  $\widehat\varphi=\varphi\oplus0$ for any $\varphi\in\hil$, it follows
\[
\sum_{n=0}^\infty\Vert W{V^\ast}^n\varphi\Vert^2+\sum_{n=0}^\infty\Vert WV^n\varphi\Vert^2\le c\Vert\varphi\Vert^2.
\]
Now for $\psi=W^\ast\eta$
\[\left\vert\bra\varphi, V^n\psi\ket\right\vert^2=\left\vert\bra W{V^\ast}^n\varphi, \eta\ket\right\vert^2\le\Vert W{V^\ast}^n\varphi\Vert^2\Vert\eta\Vert^2 ,
\]
which implies $\psi\in\cH_{ac}(V)$. Repeating the argument with $V^\ast$ replacing $V$, we conclude {that} $\ran(W^\ast)\subset \hil_{ac}(V)\cap \hil_{ac}(V^\ast)$, which finishes the proof as the latter set is closed.
\end{proof}

The result on completely non-unitary contractions mentioned in Remark \ref{remark:cnu} follows from the well known result of Nagy-Foias \cite{nf4} by the same reasoning :
\begin{proposition}\label{cnu} Let $V$ be completely non-unitary. Then ${\cal H}_{ac}(V)\cap {\cal H}_{ac}(V^\ast)=\hil$.
\end{proposition}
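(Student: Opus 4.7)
My plan is to mimic the proof of Proposition \ref{lapToHac}: lift $V$ to its minimal unitary dilation $\widehat V$ on $\widehat{\hil}=\hil\oplus\hil^\perp$ and exploit what is known for unitaries. The essential new input I would invoke is the Sz.-Nagy--Foias theorem \cite{nf4}: for a completely non-unitary contraction $V$, the spectral measure of $\widehat V$ is purely absolutely continuous with respect to normalized Lebesgue measure $d\ell$ on $\partial\bD$. In particular $\hil_{ac}(\widehat V)=\widehat{\hil}$ and, by taking adjoints, $\hil_{ac}(\widehat V^*)=\widehat{\hil}$.

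I would then work with the semi-spectral measure $F_V(B):=P_\hil E_{\widehat V}(B)P_\hil$ on $\hil$. Sz.-Nagy--Foias yields, for every $\varphi,\psi\in\hil$, that $\bra\varphi,F_V(\cdot)\psi\ket$ is absolutely continuous with $L^1$ density $g_{\varphi,\psi}$ obeying the Cauchy-Schwarz bound $|g_{\varphi,\psi}|^2\leq f_\varphi f_\psi$ a.e., where $f_\varphi:=g_{\varphi,\varphi}\geq 0$ and $\int f_\varphi\,d\ell=\|\varphi\|^2$. The dilation identities $V^n\phi=P_\hil\widehat V^n\phi$ and $V^{*n}\phi=P_\hil\widehat V^{*n}\phi$ for $\phi\in\hil$, $n\geq 0$, express $\bra\varphi,V^n\psi\ket$ and $\bra\varphi,V^{*n}\psi\ket$ as Fourier coefficients (of indices of opposite sign) of $g_{\varphi,\psi}\in L^1$. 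For any $\psi$ in the set $\mathcal{T}:=\{\psi\in\hil:\ f_\psi\in L^\infty\}$, Cauchy-Schwarz and Parseval then give the uniform estimate
\[
\sum_{n\geq 0}\bigl(|\bra\varphi,V^n\psi\ket|^2+|\bra\varphi,V^{*n}\psi\ket|^2\bigr)\leq 2\|g_{\varphi,\psi}\|_{L^2}^2\leq 2\|f_\psi\|_\infty\|\varphi\|^2
\]
valid uniformly in $\|\varphi\|=1$; therefore $\mathcal{T}\subset\hil_{ac}(V)\cap\hil_{ac}(V^*)$.

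The main obstacle is to verify that $\mathcal{T}$ is norm dense in $\hil$; once this is done, $\hil=\overline{\mathcal{T}}\subset\hil_{ac}(V)\cap\hil_{ac}(V^*)$ and the reverse inclusion is trivial. I would obtain density by a standard spectral truncation of $f_\psi\in L^1$ to the sublevel sets $\{f_\psi\leq k\}$, realised inside the Sz.-Nagy--Foias functional model of $V$, where $\psi$ is represented as an $H^2$-valued function on $\partial\bD$ and $f_\psi(\theta)$ becomes the squared defect-norm $\|\psi(e^{i\theta})\|^2$ of the pointwise boundary value, so that pointwise truncation of $\psi(\theta)$ on $\{\|\psi(\theta)\|\leq\sqrt k\}$ followed by projection back into the model subspace produces bounded-density approximants converging to $\psi$ in norm.
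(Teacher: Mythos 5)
Your approach matches the paper's: pass to the minimal unitary dilation $\widehat V$, invoke the Sz.-Nagy--Foias theorem that $\widehat V$ is absolutely continuous, and pull this back to $\hil$ via the dilation identities together with a Parseval/Cauchy--Schwarz computation. The translation to the semi-spectral measure $F_V$, the bound $|g_{\varphi,\psi}|^2\le f_\varphi f_\psi$ a.e., and the resulting inclusion $\mathcal{T}\subset\hil_{ac}(V)\cap\hil_{ac}(V^\ast)$ are all fine, and what the paper condenses into ``from which the assertion follows'' after its displayed identity you correctly isolate as the density of $\mathcal{T}$ in $\hil$. That this density is nontrivial is real: $\hil_{ac}(\widehat V)=\widehat\hil$ only says that vectors of bounded density are dense in $\widehat\hil$, not that their intersection with $\hil\oplus 0$ is dense in $\hil\oplus 0$, and in general a dense linear subspace of $\widehat\hil$ can meet a closed subspace in nothing.

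The remedy you propose for the density step, however, does not go through as stated. The pointwise truncation $\psi_k:=\psi\,\chi_{\{\|\psi(\theta)\|\le\sqrt k\}}$ lands outside the model subspace, and applying $P_\hil$ does not preserve boundedness of the pointwise norm: in the Sz.-Nagy--Foias model $\hil^\perp$ is built from $(\Theta_V\oplus\Delta)H^2(\cE)$ and a residual piece, so $P_\hil$ involves a Riesz/Szeg\H{o}-type projection, which is unbounded on $L^\infty$. Already in the simplest case $V=$ unilateral shift, $\hil=H^2$, the Szeg\H{o} projection of a bounded function need not be bounded, so the projected truncates $P_\hil\psi_k$ may well have unbounded density. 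Thus ``truncate then project'' does not manufacture elements of $\mathcal{T}$. A correct density argument must exploit the specific structure of the model subspace --- e.g. linear combinations of reproducing kernels, which are bounded and total in $\hil$, or finite Fourier data in the defect spaces --- rather than $L^\infty$-truncation followed by $P_\hil$.
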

\begin{proof} 
Let $\widehat{V}$ be the minimal  unitary dilation on a Hilbert space $\widehat{\hil}=\hil\oplus\hil^\perp$ as described in \cite{nf}. For the orthogonal projection $P_\hil$ {onto} $\hil$, it then holds for all $\psi \in \hil$:
 $P_\hil \left(\widehat{V}^n\psi\oplus0\right)=V^n\psi$ and 
 $P_\hil \left(\widehat{V}^{\ast n}\psi\oplus0\right)=V^{\ast n}\psi$. The spectrum of $\widehat{V}$ is absolutely continuous, see \cite{nf4} ,  which is equivalent to 
$\widehat\hil=\hil_{ac}(\widehat{V})$.

For $\widehat\varphi=\varphi\oplus0$ $\widehat\psi=\psi\oplus0$  one has
\[\sum_{n\in\bZ} \left| \left\bra \widehat\varphi,\widehat{V}^n\widehat\psi\right\ket \right|^2 
=\sum_{n\ge0} \left| \left\bra \varphi,{V}^n\psi\right\ket \right|^2 +\sum_{n>0} \left| \left\bra\varphi,{V^\ast}^n\psi\right\ket \right|^2 
\]
from which the assertion follows.
\end{proof}

\subsection{Eigenvalues, proof of Proposition \ref{virial}}\label{evcontrol}

\begin{lemma}\label{evAB} For two bounded operators $A, B$ suppose that  $\|A\| \leq 1$, $-1\leq B \leq 1$ and $\ker(1+B)=\{0\}$. Then for $\mu \in \partial {\mathbb D}, \varphi\in\hil ,$
\[ AB\varphi=\mu\varphi \Rightarrow  A\varphi =\mu \varphi \hbox{ and } B\varphi =\varphi.
\]
\end{lemma}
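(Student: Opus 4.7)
The plan rests on the \emph{norm-saturation} observation that $\|A\|\le 1$ and $\|B\|\le 1$ force every inequality in the chain
\[
\|\varphi\| \;=\; |\mu|\,\|\varphi\| \;=\; \|AB\varphi\| \;\le\; \|A\|\,\|B\varphi\| \;\le\; \|B\varphi\| \;\le\; \|\varphi\|
\]
to be an equality. First I would extract from this that $\|B\varphi\|=\|\varphi\|$. Since $-1\le B\le 1$ entails $B=B^\ast$ and $0\le 1-B^2$, this identity can be rewritten as
\[
0 \;=\; \|\varphi\|^2-\|B\varphi\|^2 \;=\; \langle\varphi,(1-B^2)\varphi\rangle \;=\; \bigl\|(1-B^2)^{1/2}\varphi\bigr\|^2,
\]
so that $(1-B^2)\varphi=0$, that is, $B^2\varphi=\varphi$.

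Next I would exploit this identity to decompose $\varphi$ onto the two possible eigenspaces of $B$. Set
\[
\varphi_\pm \;:=\; \tfrac{1}{2}(1\pm B)\varphi,
\]
so that $\varphi=\varphi_++\varphi_-$ and, using $B^2\varphi=\varphi$, one verifies $B\varphi_\pm=\pm\varphi_\pm$. The component $\varphi_-$ then satisfies $(1+B)\varphi_-=0$, so the non-degeneracy hypothesis $\ker(1+B)=\{0\}$ forces $\varphi_-=0$. Hence $B\varphi=\varphi_+=\varphi$, and plugging this back into the eigenvalue equation yields $\mu\varphi=AB\varphi=A\varphi$, which gives both conclusions simultaneously.

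The argument is short and I do not foresee a serious obstacle. The only step that requires a little care is the passage from the scalar identity $\langle\varphi,(1-B^2)\varphi\rangle=0$ to the operator identity $B^2\varphi=\varphi$, which relies on the positivity of $1-B^2$ coming from $-1\le B\le 1$. Everything else is bookkeeping: norm-saturation pins $B\varphi$ onto the spectral subspace where $B^2=1$, and the assumption $\ker(1+B)=\{0\}$ is exactly what rules out the unwanted eigenvalue $-1$ and thereby upgrades $B^2\varphi=\varphi$ to $B\varphi=\varphi$.
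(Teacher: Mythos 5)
Your argument is correct and follows essentially the same route as the paper: both force $\langle\varphi,(1-B^2)\varphi\rangle=0$ from norm considerations, deduce $B^2\varphi=\varphi$, and then invoke $\ker(1+B)=\{0\}$ to upgrade this to $B\varphi=\varphi$. The paper obtains $\langle\varphi,(1-B^2)\varphi\rangle=0$ in one step via the identity $\|\varphi\|^2-\|AB\varphi\|^2=\langle B\varphi,(1-A^*A)B\varphi\rangle+\langle\varphi,(1-B^2)\varphi\rangle$ (both summands nonnegative), whereas you first isolate $\|B\varphi\|=\|\varphi\|$ through a chain of inequalities and then spell out the $\pm 1$ eigenspace decomposition explicitly; the substance is the same.
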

\begin{proof} First observe that: 
\[0= \|\varphi \|^2 - \|AB\varphi \|^2 =  \bra B\varphi, (1- A^\ast A) B\varphi \ket+\bra \varphi, (1-B^2) \varphi \ket .
\]
Both terms on the right hand side are non negative. It follows $\varphi\in\ker(\sqrt{1-B^2})$, which implies
\[B\varphi=\varphi \hbox{ and } AB\varphi=A\varphi=\mu\varphi \hbox{ as } -1\notin\cE(B).
\]
\end{proof}

Now, we prove Proposition \ref{virial}. 

\begin{proof} (of Proposition \ref{virial}) \begin{enumerate}
\item $V=PUQ$ with $U$  unitary and $0\le P\le1, 0\le Q\le1$. We can apply lemma~\ref{evAB} to $A:=PU$ and $B:=Q$ and we get for $\mu\in\partial\bD,$
\[V\psi=PUQ\psi = \mu \psi \Rightarrow  PU\psi=U(U^\ast PU)\psi=\mu\psi \hbox{ and } Q\psi=\psi.
\]
Observe that $0\le U^\ast PU\le1$ so we can apply Lemma \ref{evAB} again to $A:=U$ and $B:=U^\ast PU$ to conclude that $U\psi=\mu\psi$ and in addition that
\[ U^\ast PU\psi=\psi \hbox{ and thus } P\psi=\psi.\]

\item Suppose for $\mu\in e^{i\Theta}, V\psi=\mu \psi$ thus $U\phi=\mu\psi$ and 
\[\bra \psi, E(\Theta)(U^\ast A U-A) E(\Theta)\psi\ket=\bra \psi, (U^\ast A U-A)\psi\ket
\]

which was proven to  equal zero in \cite{abc2} Section 4.1. So $ E(\Theta)(U^\ast A U-A) E(\Theta)$ cannot be positive.
\end{enumerate}

\end{proof}


\section{Proof of Theorems \ref{lap3} and \ref{lap1}}\label{prooflap1}
We will prove Theorems \ref{lap3} and \ref{lap1} in parallel. We dub  \ref{lap3} the global case and \ref{lap1} the local case.

In the local  case of Theorem \ref{lap1}, we will show that the maps $z\mapsto \bra A\ket^{-s} (1-zV^*)^{-1} \bra A\ket^{-s}$ can be continuously extended to a neighborhood of any $e^{i\theta} \in e^{i\Theta}\setminus {\mathcal E} (U)$. In Section \ref{mourre:2}, we also have to rewrite the Mourre inequality in this neighborhood.

In Section \ref{prop:V}, we show how Hypothesis (H3) can be translated into the existence of suitable approximations for $V$ and $\mathrm{ad}_A V$. The construction of a deformed resolvent for $V$, denoted by $G_{\epsilon} (z)$, is based on this approximation. At this point, the proofs of Theorems \ref{lap3} and \ref{lap1} differ, see Sections \ref{prelim-est:G} and \ref{prelim-est:L} respectively. For convenience, we group the hypotheses as follows:

\begin{itemize}
\item (Glo) for the set of Hypotheses (H1') and (H3),

where (H1') stands for: there exists $a_0>0$ such that $\Re \left(V^\ast ad_A V\right) \ge a_0\bI$.
\item (Loc) for the set of Hypotheses (H1), (H2), (H3) and (H4).
\end{itemize}
In both cases, we establish some a priori estimates on the deformed resolvent and on weighted versions of this deformed resolvent, denoted by $F_{s,\epsilon} (z)$, {see \eqref{Fs-eps}}.

Next, we develop {Mourre's} differential inequality strategy in Section \ref{diff:ineq}. We deduce the continuous extension of the weighted resolvent at any point of $\partial {\mathbb D}$ under Assumption (Glo) and at any points $e^{i\theta} \in e^{i\Theta}\setminus {\mathcal E} (U)$ under Assumption (Loc) (Proposition \ref{mourre1}).

For any $S_1 \subset [0,\infty)$ and any $S_2 \subset {\mathbb T}$, we write
$$
S_1 \cdot e^{iS_2} = \{z\in {\mathbb C} , |z| \in S_1 , \arg z \in S_2 \} .
$$

\subsection{Reduction to a strict Mourre estimate}\label{mourre:2}

If a Mourre estimate holds in a {set} not containing any eigenvalues, then a strict Mourre estimate holds in an open neighborhood of each point of this set. This remains true if one adds a compact operator to the commutator.

More precisely, 
\begin{lem}\label{Theta2Theta'} Suppose (H1). Let $e^{i\theta} \in e^{i\Theta}\setminus \cE(U)$ , $0 < c_1 < a$ and {${\bf K}$ a compact selfadjoint operator} then there exists an open connected neighborhood $\Theta'$ of $\theta$ such that:
\begin{align*}
E(\Theta' ) (U^*AU-A + {\bf K}) E(\Theta' ) &\geq c_1 E(\Theta' ) \\
E(\Theta' ) (A-UAU^* + {\bf K}) E(\Theta' ) &\geq c_1 E(\Theta' ) .
\end{align*}
\end{lem}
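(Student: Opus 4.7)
My plan is to upgrade the Mourre estimate \eqref{mourre-aK} to a strict one in a small open arc $\Theta'$ around $\theta$ by absorbing the compact remainders into the gap $a - c_1$, taking advantage of the hypothesis $e^{i\theta} \notin \mathcal{E}(U)$. Throughout I read \eqref{mourre-aK} as $E(\Theta)(U^*AU - A)E(\Theta) \geq aE(\Theta) + K$, consistently with the usage in the proof of Proposition \ref{virial}.

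First I would observe that the two claimed inequalities are twins under conjugation by $U$. Since $U$ commutes with its spectral projection $E(\Theta)$ and $U(U^*AU - A)U^* = A - UAU^*$, conjugating \eqref{mourre-aK} by $U$ yields the companion estimate
\[
E(\Theta)(A - UAU^*)E(\Theta) \geq aE(\Theta) + UKU^*,
\]
with $UKU^*$ still compact. Thus the two claims of the lemma differ only by swapping $K$ for $UKU^*$ in the compact remainder, and can be handled in parallel.

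Next, for any Borel $\Theta' \subset \Theta$ I would restrict each estimate by left- and right-multiplying with $E(\Theta')$ (using $E(\Theta')E(\Theta) = E(\Theta')$) and add the compact selfadjoint $\mathbf{K}$, to obtain
\[
E(\Theta')(U^*AU - A + \mathbf{K})E(\Theta') \geq aE(\Theta') + E(\Theta')(K + \mathbf{K})E(\Theta'),
\]
and analogously with $A - UAU^* + \mathbf{K}$ on the left and $UKU^* + \mathbf{K}$ in the remainder. It then suffices to choose an open arc $\Theta' \ni \theta$ contained in the open set $\Theta$ (open by (H1)) such that both sandwiched compact remainders have operator norm strictly less than $a - c_1$.

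This reduces to the standard compact-absorption step of Mourre theory, which is the only nontrivial point in the argument. Since $e^{i\theta} \notin \mathcal{E}(U)$ forces $E(\{\theta\}) = 0$, the projection $E(\Theta')$ converges strongly to $0$ as the arc $\Theta'$ shrinks to $\{\theta\}$. Combined with the compactness of $K$, $UKU^*$, and $\mathbf{K}$, this upgrades to norm convergence $\|C E(\Theta')\| \to 0$ for each of the compact operators $C$ involved (the uniform bound $\|E(\Theta')\| \leq 1$ and compactness of $C$ turn strong convergence to zero into norm convergence). Picking $\Theta'$ small enough that both sandwich norms lie below $a - c_1$ delivers both inequalities with Mourre constant $c_1$. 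The non-eigenvalue hypothesis on $e^{i\theta}$ is exactly what is needed to make this absorption go through; no further technology is required.
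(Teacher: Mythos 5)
Your proof is correct and matches the paper's (very terse) argument: the paper simply observes that since $e^{i\theta}\notin\cE(U)$ and the relevant operators are compact, one can shrink $\Theta'$ until the sandwiched compact remainder has norm below $a-c_1$; you supply the underlying mechanism (strong convergence $E(\Theta')\to E(\{\theta\})=0$ plus compactness upgrades to norm convergence) and explicitly derive the second inequality by $U$-conjugation. You are also right that \eqref{mourre-aK} must be read with $U^*AU-A$ (as in Remark 2, the fundamental example, and Proposition \ref{virial}) rather than $UAU^*-A$ as printed, and your care in noting that both compact remainders $K+\mathbf{K}$ and $UKU^*+\mathbf{K}$ need to be absorbed simultaneously is a small improvement over the paper's wording.
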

\begin{proof} Since $e^{i\theta} \in e^{i\Theta}\setminus \cE(U)$ and $K, {\bf K}$ are compact operators, we may find an open connected neighborhood $\Theta'$ containing $\theta$ such that $\| E(\Theta' ) (K+{\bf K}) E(\Theta' ) \| \leq a- c_1$.
\end{proof}

\begin{prop}\label{M0} Suppose (H1), (H2) and (H4). Let $e^{i\theta} \in e^{i\Theta}\setminus \cE(U)$.
There exist $0 < a_0 < a$, $a_1 >0$ and an open connected neighborhood $\Theta'$ of $\theta$ such that:
\begin{equation*}
\begin{split}
(U^*AU-A) - i\mathrm{ad}_A \Im UV^* &\geq a_0 - a_1 E(\Theta' )^{\perp} \\
(A-UAU^*) - i\mathrm{ad}_A \Im V^*U &\geq a_0 - a_1 E(\Theta' )^{\perp} .
\end{split}
\end{equation*}
\end{prop}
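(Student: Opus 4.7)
The plan is to combine (H1), (H4) and Lemma \ref{Theta2Theta'} to obtain a strict Mourre-type estimate sandwiched by a local spectral projection $E(\Theta')$ of $U$, and then promote this to a global lower bound of the form $a_0 - a_1 E(\Theta')^\perp$ by a standard Young-inequality argument. Both inequalities in the statement have the same structure and are treated identically, so I focus on the first.

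By (H4) applied to $W = UV^*$, write $i\,\mathrm{ad}_A \Im W = K_W + R_W$ with $K_W$ compact selfadjoint and $\rho := \|R_W\| < a$. Fix any $c_1 \in (\rho, a)$ and apply Lemma \ref{Theta2Theta'} with the compact selfadjoint operator $\mathbf{K} := -K_W$: this produces an open connected neighborhood $\Theta'$ of $\theta$ such that $E(\Theta')(U^*AU - A - K_W)E(\Theta') \geq c_1 E(\Theta')$. Subtracting $E(\Theta') R_W E(\Theta')$ from both sides and setting $T := (U^*AU - A) - i\,\mathrm{ad}_A \Im UV^*$ yields
\begin{equation*}
E(\Theta')\,T\,E(\Theta') \;\geq\; (c_1 - \rho)\,E(\Theta') .
\end{equation*}
The second inequality follows the same route, using $W = U^*V$ with $\mathbf{K} := +K_W$ (and $\Im V^*U = -\Im U^*V$), and then intersecting the two neighborhoods produced by Lemma \ref{Theta2Theta'} so that a single $\Theta'$ serves both estimates.

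To upgrade from the sandwiched estimate to a full-space lower bound, note that $T$ is bounded and selfadjoint with $M := \|T\| < \infty$, since $\mathrm{ad}_A V$ exists and is bounded by (H3) and $\mathrm{ad}_A U$ by (H1). For any $\psi \in \hil$, decompose $\psi = E(\Theta')\psi + E(\Theta')^\perp \psi$ and expand $\langle \psi, T\psi\rangle$: the $E(\Theta')$-diagonal contributes at least $(c_1 - \rho)\|E(\Theta')\psi\|^2$, the complementary diagonal is bounded below by $-M\|E(\Theta')^\perp\psi\|^2$, and the cross terms are absorbed via Young's inequality $2xy \leq \varepsilon x^2 + y^2/\varepsilon$ for any fixed $\varepsilon \in (0,\,c_1-\rho)$. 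Setting $a_0 := c_1 - \rho - \varepsilon > 0$ and $a_1 := a_0 + M + M^2/\varepsilon$ then gives $T \geq a_0 - a_1 E(\Theta')^\perp$, as required; note $a_0 < a$ since $c_1 < a$. The main obstacle is purely bookkeeping: one must arrange $c_1 \in (\rho, a)$ so that the non-compact residual $R_W$ from (H4) is absorbed while $c_1 - \rho$ stays strictly positive. Beyond this choice, the passage from sandwiched Mourre estimate to global lower bound is routine.
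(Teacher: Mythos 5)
Your proof is correct and parallels the paper's strategy: use (H4) to split $i\,\mathrm{ad}_A \Im W$ into a compact part $K_W$ plus a residual $R_W$ of norm below $a$, feed the compact part into Lemma \ref{Theta2Theta'} to obtain a strict sandwiched estimate on $E(\Theta')$, then promote that sandwiched estimate to a global lower bound of the form $a_0 - a_1 E(\Theta')^\perp$. Two minor divergences from the paper's route are worth noting. First, the paper delegates the sandwiched-to-global promotion to Proposition \ref{mourre-equiv} in the Appendix, whereas you reprove it inline with the Young inequality; the two arguments are mathematically identical, so you lose nothing but modularity. Second, you absorb the small residual $R_W$ at the sandwiched level (obtaining $E T E \geq (c_1 - \rho)E$ for the full operator $T$) before promoting, whereas the paper first promotes the compact-corrected commutator $(U^*AU-A)\pm K_W$ to a global bound and then subtracts $R_W$ afterwards; these operations commute, so the reordering is harmless. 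One small overreach: you invoke (H3) to justify $\|T\| < \infty$, but the proposition's hypotheses are only (H1), (H2), (H4). Boundedness of $T$ already follows from (H1) (so that $U^*AU - A$ extends to a bounded operator) together with (H4) (which writes $i\,\mathrm{ad}_A \Im W$ as a compact operator plus a bounded residual of norm below $a$); no appeal to (H3) is needed.
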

\begin{proof} Let $m < c_2 < c_1 < a$. where $$m := \max \left\{ \|i\mathrm{ad}_A \Im (UV^*)-K_{UV^*} \|, \| i\mathrm{ad}_A \Im (V^*U)-K_{V^*U} \| \right\}. $$ First, we apply Lemma \ref{Theta2Theta'} and fix the neighborhood $\Theta^\prime$ accordingly. Next, in view of Proposition \ref{mourre-equiv}, we deduce there exists $a_1 >0$ such that:
\begin{align*}
(U^*AU-A) + K_{UV^*} \geq c_2 - a_1 E(\Theta' )^{\perp} .
\end{align*}
Writing
\[
(U^*AU-A) - i\mathrm{ad}_A \Im UV^* = (U^*AU-A) - (i\mathrm{ad}_A \Im UV^* -K_{UV^*}) +K_{UV^*} 
\]
the first inequality follows with $a_0 = c_2 - m$ and the second equality analogously.

\end{proof}

\subsection{Properties of $V$}\label{prop:V}

Now, we enumerate the properties which are implied by our hypothesis $V\in {\mathcal C}^{1,1}(A)$ and which are used in the proofs of Theorems \ref{lap3} and \ref{lap1}. We refer to \cite[Lemma 7.3.6]{abmg} and \cite{BST} for a proof of the following results which are used in the proof the  priori estimates and in the differential inequality procedure.

\begin{prop}\label{c11approx} 
Let $V \in {\mathcal C}^{1,1}(A)$. Then, there exists a map $S \in C^1 ((0,1); C^1 (A))$ (equipped with the operator norm topology), such that for $B := \mathrm{ad}_A S$, \\ $B \in C^1 ((0,1); C^1 (A))$  and :
\begin{itemize}
\item[] $\lim_{\epsilon \rightarrow 0} \epsilon^{-1}\| S_{\epsilon} - V\| =0$,
\item[] $\lim_{\epsilon \rightarrow 0} \| B_{\epsilon} - \mathrm{ad}_A V \|=0$,
\item[] $\sup_{\epsilon \in (0,\epsilon_0)} \| B_{\epsilon} \| <\infty$ for some $\epsilon_0 \in (0,1)$.
\end{itemize}
In addition,
\begin{align*}
\int_0^1 \frac{\| \partial_{\epsilon} S_{\epsilon} \|}{\epsilon} \, d\epsilon + \int_0^1 \big\| \mathrm{ad}_A B_{\epsilon} \big\| \, d\epsilon + \int_0^1 \big\| \partial_{\epsilon} B_{\epsilon} \big\| \, d\epsilon < \infty .
\end{align*}
\end{prop}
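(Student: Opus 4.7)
The statement is essentially Lemma 7.3.6 of \cite{abmg} restated for the needs of the paper, so I would follow the standard mollification approach using the unitary flow generated by $A$. Pick an even, nonnegative $\rho \in C_c^\infty(\mathbb{R})$ with $\int \rho = 1$, and define
\begin{equation*}
S_\epsilon := \int_{\mathbb{R}} \rho(t)\, e^{-i\epsilon t A} V e^{i\epsilon t A}\, dt, \qquad \epsilon \in (0,1),
\end{equation*}
interpreted as a strong Bochner integral. Since the integrand is uniformly bounded by $\|V\|$ and strongly continuous in $t$, $S_\epsilon$ is a bounded operator. Using $V \in C^1(A)$ (which follows from $\mathcal{C}^{1,1}(A)$), one differentiates under the integral to show that $S_\epsilon \in C^1(A)$ with $B_\epsilon = \mathrm{ad}_A S_\epsilon = \int \rho(t)\, e^{-i\epsilon t A}(\mathrm{ad}_A V) e^{i\epsilon t A}\, dt$. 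Differentiating in $\epsilon$ and bringing the derivative inside produces bounded expressions, so $S,B \in C^1((0,1); C^1(A))$, and the uniform bound $\sup_\epsilon \|B_\epsilon\| \leq \|\mathrm{ad}_A V\|$ is immediate.

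For the two limits, the evenness of $\rho$ gives the crucial identity
\begin{equation*}
S_\epsilon - V = \tfrac12 \int \rho(t)\, \bigl(e^{-i\epsilon tA} V e^{i\epsilon tA} + e^{i\epsilon tA} V e^{-i\epsilon tA} - 2V\bigr)\, dt,
\end{equation*}
so that $\|S_\epsilon - V\|$ is controlled by the second symmetric difference of $V$ averaged against $\rho$. The change of variable $u = \epsilon t$ reduces the estimate to $\epsilon^{-1} \int_{|u|\leq C\epsilon} \omega_V(u)\, du$, where $\omega_V(u)$ denotes the second difference norm; since $V \in \mathcal{C}^{1,1}(A)$ forces $u^{-2}\omega_V(u)$ to be integrable near zero, dominated convergence gives $\int_{|u|\leq C\epsilon} \omega_V(u)\, du = o(\epsilon^2)$, hence $\|S_\epsilon - V\| = o(\epsilon)$. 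The convergence $B_\epsilon \to \mathrm{ad}_A V$ in norm is obtained by the same identity applied to $\mathrm{ad}_A V$, exploiting that $\mathcal{C}^{1,1}(A)$ regularity of $V$ transfers into a uniform continuity statement for $t\mapsto e^{-iAt}(\mathrm{ad}_A V)e^{iAt}$ near $t=0$.

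The integral bounds require more care. Differentiating in $\epsilon$ one finds, after an integration by parts in $t$ shifting factors of $t$ onto $\rho$, that $\partial_\epsilon S_\epsilon$ can be rewritten as a sum of terms of second-difference type in $V$, now weighted against $\rho'$ or $t\rho$. Fubini combined with the defining integrability $\int_0^1 \omega_V(t) t^{-2}\, dt < \infty$ of $\mathcal{C}^{1,1}(A)$ yields $\int_0^1 \epsilon^{-1}\|\partial_\epsilon S_\epsilon\|\, d\epsilon < \infty$; the same mechanism applied one adjoint-action deeper gives $\int_0^1 \|\partial_\epsilon B_\epsilon\|\, d\epsilon < \infty$ and $\int_0^1 \|\mathrm{ad}_A B_\epsilon\|\, d\epsilon < \infty$. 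The main obstacle is precisely this weighted integral on $\partial_\epsilon S_\epsilon$: the naive bound $\|\partial_\epsilon S_\epsilon\| = O(1)$ only produces a logarithmic divergence, and one must extract the second-difference structure via integration by parts to match the $\mathcal{C}^{1,1}$ integrability condition—this is the essential point where the full strength of the hypothesis is consumed, and where it is visibly sharp on the $\mathcal{C}^{s,p}$ scale.
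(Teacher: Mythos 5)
The paper gives no proof of this statement; it simply cites \cite[Lemma 7.3.6]{abmg} and \cite{BST}. Your reconstruction via the mollification $S_\epsilon = \int \rho(t)\, e^{-i\epsilon t A} V e^{i\epsilon t A}\,dt$ with an even bump $\rho$ is the natural thing to try and correctly handles the three limit statements, the uniform bound on $B_\epsilon$, and two of the three integral bounds. Writing $g(u) := e^{-iuA}Ve^{iuA}$, one integration by parts gives $\partial_\epsilon S_\epsilon = -\frac{1}{\epsilon}\int (t\rho)'(t)\, g(\epsilon t)\,dt$ and a second gives $\mathrm{ad}_A B_\epsilon = -\frac{1}{\epsilon^2}\int \rho''(t)\, g(\epsilon t)\,dt$; since $(t\rho)'$ and $\rho''$ are both even with zero mean, these symmetrize into the second difference $\omega_V(\epsilon t) = g(\epsilon t)+g(-\epsilon t)-2V$, and the $\mathcal{C}^{1,1}(A)$ integrability condition then delivers $\int_0^1 \epsilon^{-1}\|\partial_\epsilon S_\epsilon\|\,d\epsilon + \int_0^1 \|\mathrm{ad}_A B_\epsilon\|\,d\epsilon < \infty$ by Fubini, exactly as you say.

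The gap is in the third integral, $\int_0^1 \|\partial_\epsilon B_\epsilon\|\,d\epsilon < \infty$, where you assert that ``the same mechanism applied one adjoint-action deeper'' works. It does not: repeated integration by parts yields $\partial_\epsilon B_\epsilon = \frac{i}{\epsilon^2}\int \tau(t)\, g(\epsilon t)\,dt$ with $\tau = (t\rho)''$, and $\tau$ is \emph{odd} for even $\rho$. The symmetrization therefore produces a \emph{first} difference of $g$, not a second difference, and the trick that worked for the other two terms is unavailable. Equivalently, one more integration by parts rewrites $\partial_\epsilon B_\epsilon = -\frac{i}{2\epsilon}\int \sigma(t)\big[g'(\epsilon t)+g'(-\epsilon t)-2g'(0)\big]\,dt$ with $\sigma = (t\rho)'$ even and mean-zero, so the quantity you must integrate against $d\epsilon/\epsilon$ is the second difference of $\mathrm{ad}_A V$, not of $V$; the hypothesis $V \in \mathcal{C}^{1,1}(A)$ controls $\int_0^1 \|\omega_V(u)\| u^{-2}\,du$ but gives no direct control on $\int_0^1 \|\omega_{\mathrm{ad}_A V}(u)\| u^{-1}\,du$ (a scalar model with $\phi(u)=u/\log(1/|u|)$ odd has $\omega_\phi\equiv 0$ while $\int_0^1 |\omega_{\phi'}(u)| u^{-1}\,du$ diverges, so the needed implication is not a free consequence of the definitions). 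To close this you would have to argue that $\mathrm{ad}_A V$ inherits $\mathcal{C}^{0,1}(A)$-type regularity from $V \in \mathcal{C}^{1,1}(A)$ — a true but nontrivial fact that relies on the Littlewood--Paley/dyadic characterization of these Besov-type spaces, which is precisely what the proof in \cite{abmg} uses rather than a direct mollifier argument. As written, your proof silently assumes this step.
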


\begin{rem} In particular, there exists $C>0$ so that 
$
\| S_{\epsilon} - V\| \leq C\epsilon
$
for any $\epsilon \in (0,1)$ ; also the functions $S$, $B$ and $\partial_{\epsilon} S$ extend continuously  to  $[0,1)$
by setting $S_0=V$, $B_0=\mathrm{ad}_A V$, so that $(\partial_{\epsilon} S)(0) = 0$.
\end{rem}

\begin{cor}\label{Q-integrability} Let $V \in {\mathcal C}^{1,1}(A)$. For $\epsilon\in(0,1)$ define the map
\begin{equation}\label{Qeps}
{\mathcal Q(\epsilon)} = \partial_{\epsilon} S_{\epsilon} - \epsilon \partial_{\epsilon} B_{\epsilon} - \epsilon \mathrm{ad}_A B_{\epsilon} .
\end{equation}
Then,
$$
\epsilon \mapsto \frac{\| {\mathcal Q} (\epsilon )\|}{\epsilon} \in L^1 ((0,1)) .
$$
\end{cor}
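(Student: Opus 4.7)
\textbf{Proof plan for Corollary \ref{Q-integrability}.} The strategy is entirely routine once Proposition \ref{c11approx} is in hand: the corollary is a direct consequence of its three displayed integrability estimates, after dividing $\mathcal{Q}(\epsilon)$ by $\epsilon$.

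First, I would observe that by the very definition \eqref{Qeps},
$$
\frac{\mathcal{Q}(\epsilon)}{\epsilon} \;=\; \frac{\partial_\epsilon S_\epsilon}{\epsilon} \;-\; \partial_\epsilon B_\epsilon \;-\; \mathrm{ad}_A B_\epsilon ,
$$
so that the triangle inequality in $\mathcal{B}(\mathcal{H})$ yields
$$
\frac{\|\mathcal{Q}(\epsilon)\|}{\epsilon} \;\leq\; \frac{\|\partial_\epsilon S_\epsilon\|}{\epsilon} \;+\; \|\partial_\epsilon B_\epsilon\| \;+\; \|\mathrm{ad}_A B_\epsilon\|
$$
pointwise for $\epsilon \in (0,1)$. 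Each of the three maps on the right-hand side belongs to $L^1((0,1))$ by the final display of Proposition \ref{c11approx}, and hence so does the left-hand side. Integrating over $(0,1)$ gives the claim.

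The only subtle point worth flagging, and it is not really an obstacle, is measurability: all three terms $\epsilon^{-1}\|\partial_\epsilon S_\epsilon\|$, $\|\partial_\epsilon B_\epsilon\|$, $\|\mathrm{ad}_A B_\epsilon\|$ are continuous in $\epsilon \in (0,1)$ by the $C^1$-regularity in $\epsilon$ of $S$ and $B$ asserted in Proposition \ref{c11approx}, so $\|\mathcal{Q}(\epsilon)\|/\epsilon$ is continuous and the $L^1$ bound above is legitimate. In short, the corollary is just a repackaging of the $\mathcal{C}^{1,1}(A)$ information in a form convenient for the deformed-resolvent construction developed in the subsequent subsections.
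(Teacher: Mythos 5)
Your proof is correct and matches the intended argument: the corollary is labelled as such precisely because it is an immediate triangle-inequality consequence of the three integrability statements in the final display of Proposition~\ref{c11approx}. The remark on measurability via $C^1$-continuity in $\epsilon$ is a nice touch but, as you say, not an obstacle.
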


\begin{rem} If $V\in C^2 (A)$, we can define for all $\epsilon$: $S_{\epsilon} \equiv  V$ and $B_{\epsilon} \equiv \mathrm{ad}_A V$.
\end{rem}

\subsection{Deformed resolvents and first estimates}\label{prelim-est}

In this section, we define the deformed resolvent $(1-zV^*)^{-1}$, then establish its invertibility and finally prove some a priori estimates, see Proposition \ref{firstbounds} below.

In view of Section \ref{prop:V}, we use the following shortcuts :
\begin{equation}\label{qQe}
\begin{split}
q_{\epsilon} & := \frac{S_{\epsilon}^* -V^*}{\epsilon} -  B_{\epsilon}^* - \mathrm{ad}_A V^* ,\\
Q_{\epsilon} & := \frac{S_{\epsilon}^* -V^*}{\epsilon} - B_{\epsilon}^* , 
\end{split}
\end{equation}
and $q_0 :=0$, $Q_0 := \mathrm{ad}_A V^*$. Note that $\lim_{\epsilon \rightarrow 0^+} \| q_{\epsilon} \| =0$.

For $\epsilon\in [0,\epsilon_0) , z\in \overline{\mathbb D}$, define
\begin{equation}\label{Teps(z)}
\begin{split}
V_\epsilon &:= S_{\epsilon} - \epsilon B_{\epsilon}\\
T_{\epsilon} (z) & := 1- zV_\epsilon^\ast \\
&=  T_0 (z) - z \epsilon Q_{\epsilon},
\end{split}
\end{equation}

To sum up:
\begin{lem}\label{Tepsilon-T0} Suppose (H3), then with 
\begin{equation}\label{b}
b := \sup_{\epsilon \in [0,\epsilon_0)} \| Q_{\epsilon} \| < \infty, 
\end{equation}
for any $\epsilon \in [0,\epsilon_0)$ and any $z\in \overline{\mathbb D}$ it holds
\begin{equation}\label{Teps-T0}
\big\| T_{\epsilon} (z) - T_0 (z) \big\| \leq b \epsilon .
\end{equation}
\end{lem}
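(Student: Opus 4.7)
The proof plan is essentially a direct computation unwinding the definitions, plus a finiteness check for the constant $b$.

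First I would compute $T_{\epsilon}(z) - T_0(z)$ from the definitions in \eqref{Teps(z)} and \eqref{qQe}. Since $V_\epsilon = S_\epsilon - \epsilon B_\epsilon$, one has $V_\epsilon^* - V^* = (S_\epsilon^* - V^*) - \epsilon B_\epsilon^* = \epsilon Q_\epsilon$, so that
\begin{equation*}
T_{\epsilon}(z) - T_0(z) = -z V_\epsilon^* + z V^* = -z(V_\epsilon^* - V^*) = -z\epsilon\, Q_\epsilon .
\end{equation*}
Taking operator norms and using $|z|\leq 1$ for $z\in\overline{\mathbb D}$ yields
\begin{equation*}
\|T_\epsilon(z) - T_0(z)\| \leq |z|\,\epsilon\,\|Q_\epsilon\| \leq \epsilon\,\|Q_\epsilon\| \leq b\epsilon,
\end{equation*}
which is the claimed estimate, provided $b<\infty$.

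The only substantive point is to justify $b = \sup_{\epsilon\in[0,\epsilon_0)}\|Q_\epsilon\| <\infty$. This is where hypothesis (H3), i.e.\ $V\in {\mathcal C}^{1,1}(A)$, enters through Proposition \ref{c11approx}. From the Remark following that proposition, there exists $C>0$ with $\|S_\epsilon - V\|\leq C\epsilon$ for all $\epsilon\in(0,1)$, so $\|\epsilon^{-1}(S_\epsilon^* - V^*)\|\leq C$ uniformly in $\epsilon\in(0,\epsilon_0)$. Combined with the uniform bound $\sup_{\epsilon\in(0,\epsilon_0)}\|B_\epsilon\| < \infty$ from the same proposition, this gives $\sup_{\epsilon\in(0,\epsilon_0)}\|Q_\epsilon\|<\infty$. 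For the endpoint $\epsilon=0$, the convention $Q_0 = \mathrm{ad}_A V^*$ gives a bounded operator since $V\in {\mathcal C}^{1,1}(A)\subset C^1(A)$. Hence $b<\infty$.

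I don't expect any real obstacle here: the lemma is a bookkeeping statement whose content is exactly the definition of $Q_\epsilon$ together with the uniform boundedness granted by the ${\mathcal C}^{1,1}(A)$ regularity. The only thing to be careful about is checking that the endpoint $\epsilon=0$ is included correctly in the supremum, which follows from the continuous extension of $B$ to $[0,1)$ recorded in the Remark after Proposition \ref{c11approx}.
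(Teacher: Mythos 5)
Your argument is correct and is exactly the intended one: the paper already records the identity $T_{\epsilon}(z)=T_0(z)-z\epsilon Q_{\epsilon}$ directly in the display \eqref{Teps(z)}, so the lemma reduces to $|z|\leq 1$ together with the finiteness of $b$, which you derive as the paper does from Proposition \ref{c11approx} and the remark following it. No gaps.
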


\subsubsection{First estimates supposing (Glo)}\label{prelim-est:G}

\begin{prop}\label{mmtrick3} Suppose (Glo). There exists $0< \epsilon_1 < \epsilon_0$ such that for any $\epsilon \in [0,\epsilon_1) , z\in \overline{\mathbb D}\setminus\{0\},$
\begin{eqnarray}
T_{\epsilon}(z)^* + \bar{z} V_{\epsilon} T_{\epsilon}(z) &\geq & d(\epsilon, z) \label{mmt}\\
T_{\epsilon}(z) + z V_{\epsilon}^* T_{\epsilon}(z)^* &\geq & d(\epsilon, z) , \label{mmt*}
\end{eqnarray}
where
\begin{equation}\label{d}
d(\epsilon, z):= 1-|z|^2 + a_0 \epsilon |z|^2 .
\end{equation}
\end{prop}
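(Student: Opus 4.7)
The plan is to reduce both inequalities to a single scalar bound $\|V_\epsilon\|^2 \le 1 - a_0\epsilon$. First I would expand the left-hand sides directly using $T_\epsilon(z) = 1 - zV_\epsilon^*$, obtaining
\[
T_\epsilon(z)^* + \bar z V_\epsilon T_\epsilon(z) = 1 - |z|^2 V_\epsilon V_\epsilon^*
\quad\text{and}\quad
T_\epsilon(z) + z V_\epsilon^* T_\epsilon(z)^* = 1 - |z|^2 V_\epsilon^* V_\epsilon.
\]
Since $V_\epsilon V_\epsilon^* \le \|V_\epsilon\|^2 \bI$ and likewise for $V_\epsilon^* V_\epsilon$, both of (\ref{mmt}) and (\ref{mmt*}) will follow, for any $z\in\overline{\bD}$, from the scalar bound $\|V_\epsilon\|^2 \le 1-a_0\epsilon$ valid for $\epsilon \in [0,\epsilon_1)$ with $\epsilon_1$ small.

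To establish this bound it is equivalent to prove $V_\epsilon^* V_\epsilon \le (1-a_0\epsilon)\bI$. Expanding via $V_\epsilon = S_\epsilon - \epsilon B_\epsilon$ gives
\[
1 - V_\epsilon^* V_\epsilon = (1-S_\epsilon^* S_\epsilon) + 2\epsilon\,\Re(S_\epsilon^* B_\epsilon) - \epsilon^2 B_\epsilon^* B_\epsilon,
\]
and I would control each term using Proposition \ref{c11approx}. Norm convergence $S_\epsilon^* B_\epsilon \to V^* \mathrm{ad}_A V$ together with hypothesis (H1') yields $\Re(S_\epsilon^* B_\epsilon) \ge a_0 - o(1)$, supplying the positive contribution $\ge 2a_0\epsilon - o(\epsilon)$. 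The sharp approximation $\|S_\epsilon - V\| = o(\epsilon)$ together with $V^*V \le \bI$ forces $1-S_\epsilon^* S_\epsilon \ge -o(\epsilon)$. The last term is $O(\epsilon^2)$ since $\sup_\epsilon\|B_\epsilon\| < \infty$. Summing gives $1 - V_\epsilon^* V_\epsilon \ge 2a_0\epsilon - o(\epsilon) - O(\epsilon^2) \ge a_0\epsilon$ once $\epsilon$ is small enough. The case $\epsilon = 0$ is trivial since $V_0 = V$ and $\|V\| \le 1$.

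The conceptual point I expect to be the main obstacle is that hypothesis (H1') controls only $V^* \mathrm{ad}_A V$ and says nothing about the reversed product $\mathrm{ad}_A V \cdot V^*$, so a naive attempt to bound $V_\epsilon V_\epsilon^*$ by expanding it and invoking the commutator hypothesis directly would fail. The resolution is precisely the algebraic reduction of the first step: a scalar bound on the positive operator $V_\epsilon^* V_\epsilon$ is equivalent to $\|V_\epsilon\|^2 \le 1-a_0\epsilon = \|V_\epsilon^*\|^2$, and therefore automatically controls $V_\epsilon V_\epsilon^*$ at no extra cost. Thus the asymmetric Mourre hypothesis suffices to yield both inequalities simultaneously.
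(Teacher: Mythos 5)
Your argument is correct, and it takes a genuinely different route from the paper. The paper proves \eqref{mmt} and \eqref{mmt*} separately by writing $V_\epsilon = V + (V_\epsilon - V)$, expanding $1-|z|^2 V_\epsilon V_\epsilon^*$ (resp.\ $1-|z|^2 V_\epsilon^* V_\epsilon$), relaxing $|z|^2|V^*|^2\le|z|^2$ (resp.\ $|z|^2|V|^2\le|z|^2$), and then inserting the positive-commutator hypothesis into the order-$\epsilon$ cross term. This direct route has exactly the asymmetry you flag: for \eqref{mmt} the cross term that must be bounded below by $a_0$ is $-\Re(V\,\mathrm{ad}_A V^*)=\Re(V(\mathrm{ad}_A V)^*)=\Re((\mathrm{ad}_A V)V^*)$, whereas (H1') controls $\Re(V^*\,\mathrm{ad}_A V)$, and these two symmetrizations are not interchangeable for a general non-normal contraction. (The unilateral shift $V$ on Hardy space, which the paper offers as an example for Theorem~\ref{lap3}, already illustrates this: there $\mathrm{ad}_A V=V$, so $\Re(V^*\,\mathrm{ad}_A V)=\bI$ while $\Re((\mathrm{ad}_A V)V^*)=VV^*$ is a proper orthogonal projection, hence not $\ge a_0\bI$.) Your reduction of both inequalities to the single scalar bound $\|V_\epsilon\|^2\le 1-a_0\epsilon$, via the $C^*$-identity $\|V_\epsilon\|=\|V_\epsilon^*\|$, sidesteps this asymmetry and shows cleanly that the one-sided hypothesis (H1') alone suffices for both \eqref{mmt} and \eqref{mmt*}; that is an improvement on what is visible in the paper's own treatment of \eqref{mmt}. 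The bookkeeping in your second step is also sound: $1-S_\epsilon^*S_\epsilon\ge -o(\epsilon)$ follows from $1-V^*V\ge 0$ together with $\|S_\epsilon^*S_\epsilon - V^*V\|=O(\|S_\epsilon-V\|)=o(\epsilon)$; $\Re(S_\epsilon^*B_\epsilon)\ge a_0-o(1)$ follows from the norm convergences $S_\epsilon\to V$, $B_\epsilon\to\mathrm{ad}_A V$ of Proposition~\ref{c11approx}; and the $O(\epsilon^2)$ remainder is harmless for $\epsilon$ small.
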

\begin{proof} Fix $z\in \overline{\mathbb D}\setminus\{0\}$. Using the Mourre estimate and the contraction property for $V$, we have:
\begin{align*}
T_{\epsilon}(z)^* &+ \bar{z} V_{\epsilon} T_{\epsilon}(z) = 1- |z|^2 V_{\epsilon} V_{\epsilon}^* = 1-|z|^2 \left( V + (V_{\epsilon} -V)\right) \left( V^* + (V_{\epsilon} -V)^* \right) \\
&= 1-|z|^2 |V^*|^2 - 2|z|^2 \Re (V (V_{\epsilon}-V)^*) -|z|^2 |V_{\epsilon}^*- V^*|^2 \\
&\geq 1-|z|^2 - 2|z|^2 \Re (V (V_{\epsilon} -V)^* ) -|z|^2 |V_{\epsilon}^*- V^*|^2
\end{align*}
Mind that: $(V_{\epsilon}-V)^* = \epsilon \, ( \mathrm{ad}_A V^* + q_{\epsilon}) = \epsilon \, Q_{\epsilon}$, so $|(V_{\epsilon}-V)^* |^2 = \epsilon^2 | Q_{\epsilon} |^2$ and 
\begin{align*}
\Re (V (V_{\epsilon}-V)^*) = \epsilon \Re (V \mathrm{ad}_A V^*) + \epsilon \Re (V q_{\epsilon}) .
\end{align*}
It follows :
\begin{align*}
T_{\epsilon}(z)^* + \bar{z} V_\epsilon T_{\epsilon}(z) &\geq 1-|z|^2 - 2|z|^2\epsilon \Re (V \mathrm{ad}_A V^*) - 2|z|^2\epsilon \Re (V q_{\epsilon}) -|z|^2 \epsilon^2 | Q_{\epsilon} |^2 .
\end{align*}
We observe that: $0\leq | Q_{\epsilon} |^2 \leq b^2$ and $-\| q_{\epsilon} \| \leq \Re (V q_{\epsilon}) \leq \| q_{\epsilon} \|$. This yields:
\begin{align*}
T_{\epsilon}(z)^* + \bar{z} V_\epsilon T_{\epsilon}(z) \geq d(|z|, \epsilon) +  |z|^2 \epsilon \left( a_0 - b^2 \epsilon -2 \| q_{\epsilon} \| \right) .
\end{align*}
Pick $0 < \epsilon_1 < \epsilon_0$ such that
\begin{gather}\label{cond4epsilon1}
b^2 \epsilon_1 + 2 \sup_{\epsilon \in [0,\epsilon_1]} \| q_{\epsilon} \| \leq a_0 ,
\end{gather}
and \eqref{mmt} follows. The proof of \eqref{mmt*} can be done analogously . 
\end{proof}

For $d$ and $\epsilon_1$ respectively defined in \eqref{d} and \eqref{cond4epsilon1}, we define:
\begin{equation}
\Omega_{\mathbb T} = \{ (\epsilon, z) \in [0,\epsilon_1]\times\overline{\mathbb D}\setminus\{0\} ; d (\epsilon, z) >0 \} \label{omega:T}
\end{equation}
We deduce that:
\begin{prop}\label{invertibility} Suppose (Glo). For $(\epsilon, z) \in \Omega_{\mathbb T}$, the operator $T_{\epsilon} (z)$ is boundedly invertible.
\end{prop}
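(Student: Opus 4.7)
The plan is to extract from \eqref{mmt} the operator-norm bound $\|zV_\epsilon^*\| < 1$ throughout $\Omega_{\mathbb T}$, which then yields bounded invertibility of $T_\epsilon(z) = 1 - zV_\epsilon^*$ via the standard Neumann series.

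First I would compute the algebraic identity
$T_\epsilon(z)^* + \bar z V_\epsilon T_\epsilon(z) = 1 - |z|^2 V_\epsilon V_\epsilon^*$,
which is immediate from $T_\epsilon(z) = 1 - z V_\epsilon^*$ and $T_\epsilon(z)^* = 1 - \bar z V_\epsilon$. Consequently \eqref{mmt} rewrites as the positive operator inequality $1 - |z|^2 V_\epsilon V_\epsilon^* \geq d(\epsilon, z)$. Testing this inequality against an arbitrary $\psi \in \cH$ and recognizing $\langle \psi, V_\epsilon V_\epsilon^* \psi\rangle = \|V_\epsilon^* \psi\|^2$ yields $\|zV_\epsilon^*\|^2 \leq 1 - d(\epsilon, z)$.

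Since $d(\epsilon, z) > 0$ throughout $\Omega_{\mathbb T}$ by definition \eqref{omega:T}, this gives $\|zV_\epsilon^*\| < 1$ for every $(\epsilon, z) \in \Omega_{\mathbb T}$. Hence the Neumann series $\sum_{k\geq 0}(zV_\epsilon^*)^k$ converges in operator norm and defines a bounded inverse of $T_\epsilon(z)$, with the explicit bound $\|T_\epsilon(z)^{-1}\| \leq (1 - \sqrt{1-d(\epsilon,z)})^{-1}$.

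I do not anticipate any genuine obstacle: once the algebraic rewriting of \eqref{mmt} is in place, everything reduces to a Neumann series. Note that \eqref{mmt*} is not needed here for invertibility itself; it will presumably be used in the subsequent sections to control $T_\epsilon(z)^{-1}$ in the weighted topology required for $F_{s,\epsilon}(z)$.
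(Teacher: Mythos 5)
Your proof is correct, and it does take a genuinely different (and more elementary) route than the paper. Both arguments start from the identity $T_\epsilon(z)^* + \bar z V_\epsilon T_\epsilon(z) = 1 - |z|^2 V_\epsilon V_\epsilon^*$ combined with \eqref{mmt}, but you then read off the operator-norm bound $\|zV_\epsilon^*\|^2 \leq 1 - d(\epsilon,z) < 1$ and invoke the Neumann series. The paper instead interprets \eqref{mmt} and \eqref{mmt*} as quadratic-form lower bounds giving injectivity and closed range of both $T_\epsilon(z)$ and $T_\epsilon(z)^*$, and concludes via $\overline{\ran T_\epsilon(z)} = (\ker T_\epsilon(z)^*)^\perp$ and the Inverse Mapping Theorem. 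Your argument is shorter and yields the explicit bound $\|T_\epsilon(z)^{-1}\| \leq (1-\sqrt{1-d(\epsilon,z)})^{-1}$; you are also right that \eqref{mmt*} is superfluous here since $\|V_\epsilon^*\| = \|V_\epsilon\|$. The reason the authors nonetheless take the quadratic-form route is presumably that it transfers verbatim to the local case, Proposition \ref{invertibility:0}: there the key inequality \eqref{mmt0} carries an extra $\frac{3\pi^2 a_1}{2d_0^2}\epsilon|z||T_\epsilon(z)|^2$ term on the left, so it no longer collapses to a clean $\|zV_\epsilon^*\| < 1$ bound, and the Neumann series shortcut is lost; the injectivity/closed-range argument, however, still works unchanged because $|T_\epsilon(z)|^2$ vanishes on $\ker T_\epsilon(z)$. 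So your approach buys brevity and an explicit estimate in the (Glo) case, while the paper's buys uniformity across the two cases.
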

\begin{proof} Fix $\epsilon$, $z$ as stated. Making explicit  \eqref{mmt} and \eqref{mmt*} in terms of quadratic forms shows that $T_{\epsilon} (z)$ and $T_{\epsilon} (z)^*$ are injective from ${\mathcal H}$ into itself and has closed range. Since
\begin{equation*}
\overline{\mathrm{Ran} \, T_{\epsilon} (z)} = ( \mathrm{Ker} \, T_{\epsilon} (z)^*)^{\perp},
\end{equation*}
we deduce that $T_{\epsilon} (z)$ and $T_{\epsilon} (z)^*$ are actually linear and bijective hence boundedly invertible by the Inverse Mapping Theorem.
\end{proof}

\subsubsection{First estimates supposing (Loc)}\label{prelim-est:L}

We now look for an analog of \eqref{mmt} and \eqref{mmt*} under Assumptions (Loc). 

In this {subsection}, we fix $\theta \in \Theta \setminus {{\mathcal E}}(U)$ and look at the local properties of the resolvent on some open subset $\Theta_0$, such that $\overline{\Theta_0} \subset \Theta'$, where the open set $\Theta'$ has been defined in Proposition \ref{M0}.

We introduce some  shortcuts:
\begin{equation}\label{overRL}
\begin{split}
R &= UV^* \\
L &= V^*U \\
\overline{R} &=1-R \\
\overline{L} &=1-L .
\end{split}
\end{equation}
We also set: $E= E (\Theta' )$.

\begin{lemma}\label{eperp} Suppose (H1), (H2), (H3). Fix an open neighborhood $\Theta_0$ of $\theta$, such that $\overline{\Theta_0} \subset \Theta'$ and denote $d_0 :=$ dist$(\overline{\Theta_0}, {\mathbb T}\setminus \Theta') >0$. Then, we have for any $\epsilon \in [0,\epsilon_0)$, $z\in (0,1]\cdot e^{i\overline{\Theta_0}},$
{\begin{eqnarray}
E^{\perp} &\leq & \frac{3\pi^2}{4 d_0^2 |z|} \left( |T_{\epsilon}(z)|^2 + |z|^2 |\overline{R} |^2 + b^2\, \epsilon^2 |z|^2 \right) , \label{Eperp:R} \\
E^{\perp} &\leq & \frac{3\pi^2}{4 d_0^2 |z|} \left( |T_{\epsilon}(z)^* |^2 + |z|^2 |\overline{L}^*|^2 + b^2\, \epsilon^2 |z|^2 \right) , \label{Eperp:L}
\end{eqnarray}
}
with $b$ defined by \eqref{b}.
\end{lemma}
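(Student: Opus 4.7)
The plan is to prove \eqref{Eperp:R} first, then derive \eqref{Eperp:L} by an essentially identical argument applied to adjoints. The argument has three conceptually separate steps: a spectral-distance lower bound for $1-zU^*$ on $\mathrm{Ran}\,E^\perp$, an algebraic identity relating $T_\epsilon(z)$ and $1-zU^*$ via $\overline{R}$, and a triangle-type inequality that combines the two.

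\textbf{Step 1: distance estimate for $1-zU^*$ on $E^\perp$.} Fix $z = |z|e^{i\theta_1}$ with $\theta_1 \in \overline{\Theta_0}$. Since $U$ is unitary with spectral family $E$ and $E^\perp = E({\mathbb T}\setminus\Theta')$, the spectrum of $U$ restricted to $\mathrm{Ran}\,E^\perp$ is contained in $e^{i({\mathbb T}\setminus\Theta')}$. For any $e^{i\theta_2} \in {\mathbb T}\setminus\Theta'$ the geodesic distance $|\theta_1-\theta_2|$ is at least $d_0$ (and at most $\pi$), so using $1-2r\cos\phi+r^2 \geq 4r\sin^2(\phi/2)$ and $\sin(x) \geq 2x/\pi$ on $[0,\pi/2]$,
\begin{equation*}
|1 - |z|e^{i\theta_1}e^{-i\theta_2}|^2 \geq 4|z|\sin^2(d_0/2) \geq \frac{4 d_0^2 |z|}{\pi^2}.
\end{equation*}
By functional calculus, $E^\perp|1-zU^*|^2 E^\perp \geq \frac{4d_0^2 |z|}{\pi^2} E^\perp$. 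Since $U$ (and hence $|1-zU^*|^2$) commutes with $E^\perp$, the compression $E^\perp|1-zU^*|^2E^\perp$ is dominated by $|1-zU^*|^2$, so
\begin{equation*}
E^\perp \leq \frac{\pi^2}{4 d_0^2 |z|}\,|1-zU^*|^2 .
\end{equation*}

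\textbf{Step 2: algebraic identity.} From $R = UV^*$ and $\overline{R}=1-R$ one has $V^* = U^* - U^*\overline{R}$, hence
\begin{equation*}
1-zU^* = (1-zV^*) - zU^*\overline{R} = T_\epsilon(z) + \bigl(T_0(z)-T_\epsilon(z)\bigr) - zU^*\overline{R}.
\end{equation*}
By the definitions in \eqref{Teps(z)}, $T_0(z)-T_\epsilon(z) = z\epsilon Q_\epsilon$, so $\|T_0(z)-T_\epsilon(z)\| \leq b|z|\epsilon$ with $b$ from \eqref{b}. Also $|U^*\overline{R}|^2 = \overline{R}^* U U^* \overline{R} = |\overline{R}|^2$.

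\textbf{Step 3: combine via three-term Cauchy--Schwarz.} For bounded operators $X_1,X_2,X_3$, the parallelogram-type bound $|X_1+X_2+X_3|^2 \leq 3(|X_1|^2+|X_2|^2+|X_3|^2)$ gives
\begin{equation*}
|1-zU^*|^2 \leq 3\bigl(|T_\epsilon(z)|^2 + b^2|z|^2\epsilon^2 + |z|^2|\overline{R}|^2\bigr),
\end{equation*}
which substituted into the conclusion of Step~1 yields \eqref{Eperp:R}.

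\textbf{Step 4: the $L$-version.} Taking adjoints, $|1-zU^*|^2 = |1-\bar zU|^2$, and from $L = V^*U$, $\overline{L} = 1-L$ one has $U - V = U\overline{L}^*$, giving
\begin{equation*}
1-\bar zU = T_\epsilon(z)^* + \bigl(T_0(z)^*-T_\epsilon(z)^*\bigr) + \bar zU\overline{L}^*,
\end{equation*}
with $|U\overline{L}^*|^2 = |\overline{L}^*|^2$ and the same $b|z|\epsilon$ bound on the middle term. The same three steps produce \eqref{Eperp:L}.

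The only subtle point is Step~1: one must notice that the relevant lower bound is not $|1-e^{i\phi}|^2 \geq 4\sin^2(\phi/2)$ but the $r$-weighted version valid down to $r=0$, which is why the factor in the conclusion is $d_0^2|z|/\pi^2$ rather than $d_0^2/\pi^2$ — this weight in $|z|$ is precisely what the differential-inequality argument in Section~\ref{diff:ineq} needs. Everything else is routine bookkeeping.
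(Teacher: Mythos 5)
Your proof is correct and follows essentially the same route as the paper: the paper's (very terse) argument is precisely the identity $1-zU^* = T_\epsilon(z) - zU^*\overline{R} + z\epsilon Q_\epsilon$ combined with a spectral lower bound for $|1-zU^*|$ on $\mathrm{Ran}\,E^\perp$ and the three-term inequality $|X_1+X_2+X_3|^2 \leq 3(|X_1|^2+|X_2|^2+|X_3|^2)$, which is exactly your Steps 1--3. One small remark: in Step 4 the correct identity is $1-\bar zU = T_0(z)^* - \bar zU\overline{L}^*$ (with a minus sign, and the middle term being $\bar z\epsilon Q_\epsilon^*$ coming from $L_\epsilon^*-L^*=\epsilon U^*Q_\epsilon^*$), but since the three-term bound is insensitive to signs this does not affect the conclusion.
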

\begin{proof} For $\epsilon \in [0,\epsilon_0)$, we use shortcuts
\begin{equation}\label{VRLeps}
\begin{split}
R_{\epsilon} &:= UV_{\epsilon}^* , \\
&= R + \epsilon \, UQ_{\epsilon} , \\
L_{\epsilon} &:= V_{\epsilon}^* U , \\
&= L + \epsilon \, Q_{\epsilon} U .
\end{split}
\end{equation}
For $\epsilon \in [0,\epsilon_0)$, $z \in (0,1]\cdot e^{i\overline{\Theta_0}}$, we have
\begin{align*}
E^{\perp} &= (1-zU^*)^{-1} E^{\perp} \left(T_{\epsilon}(z) -zU^* \overline{R} +zU^* (R_{\epsilon} -R)\right) \\
&= (1-zU^*)^{-1} E^{\perp} \left(T_{\epsilon}(z) -zU^* \overline{R} +z \epsilon Q_{\epsilon}\right) ,
\end{align*}
which entails the first estimates. We conclude analogously for the second one.
\end{proof}

In Lemma \ref{identities} and Proposition \ref{from2Re2C}, we show how the local estimates obtained in Proposition \ref{M0} can be used to derive some estimates on $\Re (V \mathrm{ad}_A V^*)$ and $\Re (V^* \mathrm{ad}_A V)$ respectively.
\begin{lemma}\label{identities} Suppose (H1), (H2). With
\begin{equation*}\label{CU}
\begin{split}
C &:= (A-UAU^*) \\
C_U &:= U^*CU = {(U^*AU-A)}
\end{split}
\end{equation*} 
it holds:
\begin{align}
2 \Re (V \mathrm{ad}_A V^*) &= -2 C - 2 \overline{R^*} C \overline{R} +4 \Re (\overline{R^*} C) + 2\Re (R^* \mathrm{ad}_A R) \label{identities-1} \\
2 \Re (V^* \mathrm{ad}_A V) &= 2 C_U + 2 \overline{L}\, C_U \overline{L^*} -4 \Re (C_U \overline{L^*}) + 2\Re (L\, \mathrm{ad}_A L^*) . \label{identities-2}
\end{align}
\end{lemma}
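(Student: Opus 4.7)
The plan is to compute the two identities by pure algebra, exploiting the factorizations $V=R^*U$, $V^*=U^*R$ (for the first identity) and $V=UL^*$, $V^*=LU^*$ (for the second), together with the product (Leibniz) rule for $\mathrm{ad}_A$, which applies since $U,V\in C^1(A)$ and hence $R,L\in C^1(A)$ as well.

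First I would establish two preliminary algebraic facts. From $\mathrm{ad}_A(UU^*)=0=\mathrm{ad}_A(U^*U)$ one gets $U\,\mathrm{ad}_A U^* = -\mathrm{ad}_A U\cdot U^*$ and $\mathrm{ad}_A U^*\cdot U = -U^*\mathrm{ad}_A U$. Combined with the identities $\mathrm{ad}_A U = AU-UA = (A-UAU^*)U = CU$ and $U^*\mathrm{ad}_A U = U^*AU-A = C_U$, this yields
\begin{equation*}
U\,\mathrm{ad}_A U^* = -C, \qquad U^*\mathrm{ad}_A U = C_U.
\end{equation*}

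Next I would derive a clean ``one-line'' formula for $V\,\mathrm{ad}_A V^*$. Using $V=R^*U$ and $V^*=U^*R$ together with the Leibniz rule,
\begin{equation*}
V\,\mathrm{ad}_A V^* = R^*U\,\mathrm{ad}_A(U^*R) = R^*\bigl(U\,\mathrm{ad}_A U^*\bigr)R + R^*\,\mathrm{ad}_A R = -R^*CR + R^*\mathrm{ad}_A R.
\end{equation*}
Since $C$ is selfadjoint, so is $R^*CR$, hence $2\Re(V\,\mathrm{ad}_A V^*) = -2R^*CR + 2\Re(R^*\mathrm{ad}_A R)$. Now I substitute $R=1-\overline{R}$ and expand:
\begin{equation*}
R^*CR = C - \overline{R^*}C - C\overline{R} + \overline{R^*}C\overline{R}.
\end{equation*}
Because $(\overline{R^*}C)^* = C\overline{R}$, the middle two terms combine into $2\Re(\overline{R^*}C)$, giving
\begin{equation*}
-2R^*CR = -2C + 4\Re(\overline{R^*}C) - 2\overline{R^*}C\overline{R},
\end{equation*}
which plugged into the previous display is exactly \eqref{identities-1}.

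The second identity is completely analogous. Using $V=UL^*$, $V^*=LU^*$, the Leibniz rule and $U^*\mathrm{ad}_A U = C_U$,
\begin{equation*}
V^*\mathrm{ad}_A V = LU^*\mathrm{ad}_A(UL^*) = L\bigl(U^*\mathrm{ad}_A U\bigr)L^* + L\,\mathrm{ad}_A L^* = L\,C_U\,L^* + L\,\mathrm{ad}_A L^*.
\end{equation*}
Taking real parts and expanding $L=1-\overline{L}$ as before, using $(C_U\overline{L^*})^* = \overline{L}\,C_U$ to combine cross terms into $4\Re(C_U\overline{L^*})$, yields \eqref{identities-2}. There is no genuine obstacle here; the only point requiring slight care is the correct sign bookkeeping for $\mathrm{ad}_A U^*$ versus $\mathrm{ad}_A U$, which is settled by the preliminary identities above.
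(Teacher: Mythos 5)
Your proof is correct and follows essentially the same route as the paper: factor $V,V^*$ through $R$ (resp.\ $L$), apply the Leibniz rule for $\mathrm{ad}_A$, identify $U\,\mathrm{ad}_A U^*=-C$ and $U^*\mathrm{ad}_A U=C_U$, and then expand $R=1-\overline{R}$ (resp.\ $L=1-\overline{L}$) and collect the cross terms. The only cosmetic difference is that you spell out the preliminary sign identities that the paper leaves implicit.
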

\begin{proof} Since $U$ and $V$ belong to $C^1(A)$, $R$ and $L$ also belong to $C^1(A)$. Also, $V\in C^1(A)$ iff $V^*\in C^1(A)$ and $\mathrm{ad}_A V^* = - (\mathrm{ad}_A V)^*$. So, $\mathrm{ad}_A V^* = \mathrm{ad}_A (U^* R) = U^* (\mathrm{ad}_A  R) + (\mathrm{ad}_A U^*) R$ and
\begin{align*}
2 \Re (V \mathrm{ad}_A V^*) &= 2\Re (R^* \mathrm{ad}_A R) - 2\Re (R^* CR) ,
\end{align*}
from which \eqref{identities-1} follows. Similarly, \eqref{identities-2} follows from $\mathrm{ad}_A V = \mathrm{ad}_A (U L^*) = U (\mathrm{ad}_A  L^*)  + (\mathrm{ad}_A U) L^*$ and
\begin{align*}
2 \Re (V^* \mathrm{ad}_A V) &= 2\Re (L\, \mathrm{ad}_A L^*) + 2\Re (L\, C_U L^*) .
\end{align*}
\end{proof}

\begin{remark}\label{AB} We will use several times the following estimates. Let $A, B\in {\mathcal B} ({\mathcal H})$. Then, {for any $p>0$ and any $\varphi \in {\mathcal H}$,
\begin{align}
|\bra \varphi, 2\Re (A^*B) \varphi \ket | &\leq \|B \| \left( \frac{\| |A| \varphi \|^2}{p} + p \| \varphi \|^2 \right) . \label{2ReAB}
\end{align}
We also have that
\begin{align}
|\bra \varphi, A^*BA \varphi \ket | &\leq \|B \| \| |A| \varphi \|^2 . \label{A*BA}
\end{align}
}
\end{remark}

\begin{prop}\label{from2Re2C} Suppose (Loc). There exists $c_0 >0$, such that:
{
\begin{align*}
- 2 \Re (V \mathrm{ad}_A V^*)  &\geq \frac{3 a_0}{2} - 2a_1 E^{\bot} - c_0 |\overline{R} |^2 \\
2 \Re (V^* \mathrm{ad}_A V) &\geq \frac{3 a_0}{2} - 2a_1 E^{\bot} - c_0 |\overline{L}^* |^2 .
\end{align*}
}
\end{prop}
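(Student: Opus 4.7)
The plan is to combine three ingredients: the algebraic identities of Lemma~\ref{identities}, the two Mourre estimates of Proposition~\ref{M0}, and the quadratic form bounds of Remark~\ref{AB}. The two inequalities are proved in parallel; I sketch the first, the second being fully analogous with $R \leftrightarrow L^*$ and the two Mourre estimates swapped.

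Starting from $V\mathrm{ad}_A V^* = R^*\mathrm{ad}_A R - R^*CR$ and the direct computation $R^*CR = VC_UV^*$ (which follows from $U^*AU-A = C_U$), I get
\[
-2\Re(V\mathrm{ad}_A V^*) = 2VC_UV^* - 2\Re(R^*\mathrm{ad}_A R).
\]
Splitting $R^* = 1 - \bar{R^*}$ and using $\Re(\mathrm{ad}_A R) = i\mathrm{ad}_A\Im R$ rewrites this as $2VC_UV^* - 2i\mathrm{ad}_A\Im R + 2\Re(\bar{R^*}\mathrm{ad}_A R)$. Applying the first Mourre estimate of Proposition~\ref{M0} in the form $-i\mathrm{ad}_A\Im R \geq a_0 - C_U - a_1 E^{\perp}$ yields
\[
-2\Re(V\mathrm{ad}_A V^*) \geq 2a_0 + 2(VC_UV^* - C_U) - 2a_1 E^{\perp} + 2\Re(\bar{R^*}\mathrm{ad}_A R).
\]

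It remains to bound $2(VC_UV^* - C_U) + 2\Re(\bar{R^*}\mathrm{ad}_A R)$ from below by $-a_0/2 - c_0|\bar R|^2$. Using $V = U - \bar{R^*}U$ and $UC_UU^* = C$ gives $VC_UV^* = C - 2\Re(\bar{R^*}C) + \bar{R^*}C\bar R$, and thus $2(VC_UV^* - C_U) = 2(C - C_U) - 4\Re(\bar{R^*}C) + 2\bar{R^*}C\bar R$. The terms $-4\Re(\bar{R^*}C)$, $2\bar{R^*}C\bar R$, and $2\Re(\bar{R^*}\mathrm{ad}_A R)$ all carry an explicit $\bar R$-factor and are controlled: the first and third by Remark~\ref{AB}, giving contributions of the form $-c(p^{-1}|\bar R|^2 + p)$ for any $p>0$; the second directly by $\bar{R^*}C\bar R \geq -\|C\||\bar R|^2$.

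The main obstacle is the residual contribution $2(C - C_U)$, which is bounded but a priori has no $\bar R$-factor. The way around it is to exploit the crucial fact that $E = E(\Theta')$ is a spectral projection of $U$, so $[U,E] = 0$ and the first Mourre estimate can be {\em conjugated} by $U$ without altering the $E^{\perp}$ term, yielding $C - U(i\mathrm{ad}_A\Im R)U^* \geq a_0 - a_1 E^{\perp}$. Combining this conjugated form with the original one symmetrically cancels the $(C - C_U)$ contribution at the price of introducing a commutator $[U, i\mathrm{ad}_A\Im R]U^*$. Via Jacobi's identity together with the key relation $R - L = \bar L - \bar R$ (equivalently $[U, R] = U(\bar L - \bar R)$), this commutator acquires the required factor of $\bar R$ or $\bar L$, and is then bounded by Remark~\ref{AB} — the $|\bar L|^2$ contributions being handled via the unitary equivalence $|\bar L|^2 = U^* |\bar R|^2 U$ together with $[U,E]=0$. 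Finally, choosing $p$ small enough (of order $a_0$ divided by the various operator norms) makes the accumulated $p$-constants $\leq a_0/2$, so the main term drops from $2a_0$ to $3a_0/2$ while the $p^{-1}|\bar R|^2$ contributions accumulate into $c_0|\bar R|^2$. The hardest step is this last one: arranging the commutator manipulations so that {\em every} error term ends up with a $\bar R$ (or $\bar L$) factor that can be absorbed into the constant $c_0$.
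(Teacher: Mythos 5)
You correctly notice that applying Proposition~\ref{M0} literally --- pairing $C_U = U^*AU - A$ with $R = UV^*$ --- leaves an unabsorbed residual $2(C - C_U)$, and the algebra up to that point ($VC_UV^* = R^*CR$ and its expansion) is sound. But the workaround you propose does not close the gap. Conjugating the Mourre estimate by $U$ and writing $UAU^* = A - C$ gives, with $X := \Im R$,
\[
U(\mathrm{ad}_A X)U^* - \mathrm{ad}_A X = \mathrm{ad}_A(UXU^* - X) - [C, UXU^*],
\]
whose right-hand side contains $-[C, \Im R]$. Since $\Im R = (\overline{R^*} - \overline{R})/(2i)$, the quadratic form $\bra\varphi, [C, \Im R]\varphi\ket$ is controlled by $\|\overline{R}\varphi\|$ \emph{and} $\|\overline{R^*}\varphi\|$, and the latter cannot be absorbed into the allowed sink $c_0 |\overline{R}|^2$: one has $|\overline{R^*}|^2 = \overline{R}\,\overline{R^*} \neq \overline{R^*}\,\overline{R} = |\overline{R}|^2$, and neither (H4) nor any other hypothesis controls $|\overline{R^*}|^2$. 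The same obstruction reappears inside $\mathrm{ad}_A(UXU^* - X)$: the derivation $\mathrm{ad}_A$ destroys the $\overline{R}$, $\overline{L}$ factor (e.g.\ $\mathrm{ad}_A(U\overline{L}) = (\mathrm{ad}_A U)\overline{L} + U\,\mathrm{ad}_A\overline{L}$, and $\mathrm{ad}_A\overline{L} = -\mathrm{ad}_A L$ carries no $\overline{L}$). So the claim that ``every error term ends up with a $\overline{R}$ or $\overline{L}$ factor'' is exactly what fails.

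The mismatch you detected is resolved upstream, not by new commutator identities. Lemma~\ref{Theta2Theta'} already gives localized Mourre estimates for \emph{both} $C = A - UAU^*$ and $C_U = U^*AU - A$ (one follows from the other by conjugating with $U$, which commutes with $E(\Theta')$). Hence the very argument proving Proposition~\ref{M0} yields equally
\[
(A - UAU^*) - i\,\mathrm{ad}_A\Im(UV^*) \geq a_0 - a_1 E^\perp, \qquad
(U^*AU - A) - i\,\mathrm{ad}_A\Im(V^*U) \geq a_0 - a_1 E^\perp,
\]
which is the pairing Lemma~\ref{identities} actually calls for (as printed, Proposition~\ref{M0} appears to have $U^*AU-A$ and $A-UAU^*$ interchanged relative to what its application here requires). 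With this version there is no $2(C-C_U)$ residual at all: one obtains
\[
-2\Re(V\,\mathrm{ad}_A V^*) \geq 2(C - i\mathrm{ad}_A\Im R) + 2\overline{R^*}C\overline{R} - 4\Re(\overline{R^*}C) + 2\Re(\overline{R^*}\mathrm{ad}_A R),
\]
the trailing terms all carry a genuine left factor $\overline{R^*}$ (so only $\|\overline{R}\varphi\|$ appears), and one application of Remark~\ref{AB} with a single small $p$ converts $2a_0$ into $\tfrac{3}{2}a_0 - c_0 |\overline{R}|^2$. That is the paper's proof; the conjugation--Jacobi route is both unnecessary and, as explained, does not work.
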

\begin{proof} Once noted that $2\Re (\mathrm{ad}_A R) = 2i \mathrm{ad}_A \Im R$, Lemma \ref{identities} writes:
$$
- 2 \Re (V \mathrm{ad}_A V^*) = 2 (C - i \mathrm{ad}_A \Im R) + 2 \overline{R^*} C \overline{R} - 4 \Re (\overline{R^*} C) + 2\Re (\overline{R^*} (\mathrm{ad}_A R)) .
$$
Applying inequality \eqref{2ReAB} to the couples $(A,B)= (\overline{R}, C)$ and $(\overline{R}, \mathrm{ad}_A R)$ yields for any $p>0$,
\begin{align*}
- 2\Re (\overline{R^*} C) &\geq - \|C \| \left( p^{-1} | \overline{R} |^2  + p \right) \\
2 \Re (\overline{R^*} (\mathrm{ad}_A R)) &\geq - \|\mathrm{ad}_A R \| \left( p^{-1} | \overline{R} |^2 + p \right) 
\end{align*}
while: $\overline{R^*} C \overline{R} \geq - \|C \| |\overline{R} |^2$. Summing up, for any $p>0$,
\begin{align*}
- 2 \Re ((\mathrm{ad}_A V) V^*) \geq \, &2 (C - i(\mathrm{ad}_A \Im R)) - p (2\| C\| + \|\mathrm{ad}_A R \|) - F_R (p) |\overline{R} |^2 ,
\end{align*}
where $F_R (p) = (2 \|C\| + \|\mathrm{ad}_A R \|) p^{-1} + 2\|C \|$. Fix $p=p_R$ where $2 p_R (2\| C\| + \|\mathrm{ad}_A R \|) = a_0$ and apply Proposition \ref{M0}; we get the first estimate with $c_{0,R}=F_R(p_R)$ instead of $c_0$.

Back to Lemma \ref{identities}, we can derive the second estimate analogously by defining the function $F_L$, the positive number $p_L$ and $c_{0,L}=F_L (p_L)$. We conclude by setting $c_0 := \max \{ c_{0,L}, c_{0,R} \}$.
\end{proof}

\begin{prop}\label{mmtrick} Suppose (Loc). There exists $0< \epsilon_1 < \epsilon_0$ such that for any $\epsilon \in [0,\epsilon_1]$, $z\in (0,1]\cdot e^{i\overline{\Theta_0}},$
\begin{eqnarray}
T_{\epsilon}(z)^* + \bar{z} V_{\epsilon} T_{\epsilon}(z) + \frac{3\pi^2 a_1}{2 d_0^2} \epsilon |z| |T_{\epsilon}(z)|^2 &\geq & d(\epsilon, z) , \label{mmt0} \\
T_{\epsilon}(z) + z V_{\epsilon}^* T_{\epsilon}(z)^* + \frac{3\pi^2 a_1}{2 d_0^2} \epsilon |z| |T_{\epsilon}(z)^*|^2 &\geq & d(\epsilon, z) , \label{mmt*0}
\end{eqnarray}
where
\begin{equation}\label{d0}
d(\epsilon, z) = 1-|z|^2 + a_0 \epsilon |z|^2 .
\end{equation}
\end{prop}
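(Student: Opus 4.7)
The plan is to follow the expansion used in the proof of Proposition \ref{mmtrick3} and then handle the two obstructions produced by the localization: the local Mourre estimate of Proposition \ref{from2Re2C} generates a spectral error $2a_1 E^\perp$ and a geometric error $c_0 |\overline{R}|^2$ (respectively $c_0 |\overline{L}^*|^2$) in place of the uniform positive commutator. We focus on \eqref{mmt0}; \eqref{mmt*0} follows symmetrically, using Hypothesis (H4) for $W = U^*V$ and the second halves of Proposition \ref{from2Re2C} and Lemma \ref{eperp}.

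Starting from $T_{\epsilon}(z)^* + \bar z V_{\epsilon} T_{\epsilon}(z) = 1 - |z|^2 V_{\epsilon} V_{\epsilon}^*$ and inserting $V_{\epsilon}^* - V^* = \epsilon(\mathrm{ad}_A V^* + q_{\epsilon})$ exactly as in Proposition \ref{mmtrick3}, we obtain
\begin{equation*}
T_{\epsilon}(z)^* + \bar z V_{\epsilon} T_{\epsilon}(z) = 1 - |z|^2 V V^* - 2|z|^2 \epsilon\, \Re(V\,\mathrm{ad}_A V^*) - 2|z|^2 \epsilon\, \Re(V q_{\epsilon}) - |z|^2 \epsilon^2 |Q_{\epsilon}|^2 .
\end{equation*}
We then lower-bound $-2\Re(V\,\mathrm{ad}_A V^*)$ using Proposition \ref{from2Re2C} and apply Lemma \ref{eperp} to the resulting $E^\perp$-term. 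This produces exactly the additional summand $\tfrac{3\pi^2 a_1}{2 d_0^2}\epsilon |z|\, |T_{\epsilon}(z)|^2$, to be moved to the left-hand side of the statement, together with the $-c_0 |z|^2 \epsilon |\overline{R}|^2$ term coming from Proposition \ref{from2Re2C} and the remainders $-\tfrac{3\pi^2 a_1 |z|^3 \epsilon}{2 d_0^2}|\overline{R}|^2$ and $-\tfrac{3\pi^2 a_1 b^2 \epsilon^3 |z|^3}{2 d_0^2}$ produced by Lemma \ref{eperp}.

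The main obstacle is the new $|\overline{R}|^2$-terms, which have no analogue in Proposition \ref{mmtrick3}; they are handled by Hypothesis (H4) applied to $W = R = UV^*$. Since $R^*R = V U^* U V^* = V V^*$, one has $|\overline{R}|^2 = 1 - 2\Re R + V V^*$, hence $2\Re\overline{R} - |\overline{R}|^2 = 1 - V V^*$. Combined with $2\Re\overline{R} \geq (1+\alpha)|\overline{R}|^2$ from \eqref{eq:h4p}, this yields
\begin{equation*}
1 - V V^* \geq \alpha\, |\overline{R}|^2 .
\end{equation*}
Accordingly, the coarse bound $-|z|^2 V V^* \geq -|z|^2$ used in the global proof sharpens to $-|z|^2 V V^* \geq -|z|^2 + \alpha |z|^2 |\overline{R}|^2$, providing a positive $|\overline{R}|^2$-contribution which compensates the negative ones. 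The resulting coefficient of $|\overline{R}|^2$ is $|z|^2 \bigl(\alpha - c_0 \epsilon - \tfrac{3\pi^2 a_1 |z| \epsilon}{2 d_0^2}\bigr)$, non-negative for $\epsilon_1$ small enough (using $|z| \leq 1$), so the term may be discarded. It then only remains to choose $\epsilon_1 \in (0,\epsilon_0)$ so that in addition $\tfrac{3 a_0}{2} - 2\|q_\epsilon\| - \epsilon b^2 - \tfrac{3\pi^2 a_1 b^2 \epsilon^2}{2 d_0^2} \geq a_0$ for all $\epsilon \in [0,\epsilon_1]$; this is possible because $\|q_\epsilon\| \to 0$ as $\epsilon \to 0^+$, and it yields \eqref{mmt0}.
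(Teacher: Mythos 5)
Your proposal is correct and follows essentially the same route as the paper: expand $1-|z|^2 V_\epsilon V_\epsilon^*$ in $\epsilon$, invoke Proposition \ref{from2Re2C} together with Lemma \ref{eperp} for the commutator term, and use \eqref{eq:h4p} to control the resulting $|\overline{R}|^2$ contributions. The only cosmetic difference is that you rewrite $1-VV^* = 2\Re\overline{R} - |\overline{R}|^2 \geq \alpha|\overline{R}|^2$ up front, whereas the paper keeps the form $\mathfrak{F}_\epsilon(\overline{R}) = 2\Re\overline{R} - (1+\gamma_\epsilon)|\overline{R}|^2$ and applies (H4) at the very end; both lead to the same smallness condition $\gamma_\epsilon \leq \alpha$ and the same constraint \eqref{cond4epsilon10} on $\epsilon_1$.
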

\begin{proof} Fix $z\in (0,1]\cdot e^{i\overline{\Theta_0}}$. We have:
\begin{align}
T_{\epsilon}(z)^* &+ \bar{z} V_{\epsilon} T_{\epsilon}(z) \nonumber \\
&= 1- |z|^2 R_{\epsilon}^* R_{\epsilon} = 1-|z|^2 \left( R^* + (R_{\epsilon} -R)^*\right) \left( R + (R_{\epsilon} -R) \right) \nonumber \\
&= 1-|z|^2 |R|^2 - 2|z|^2 \Re (R^* (R_{\epsilon}-R)) -|z|^2 |R_{\epsilon}-R|^2 . \label{mmt0:0}
\end{align}
Mind \eqref{qQe}. We have that: $R_{\epsilon} -R = \epsilon \, U ( \mathrm{ad}_A V^* + q_{\epsilon}) = \epsilon \, U Q_{\epsilon}$, so $|R_{\epsilon}-R |^2 = \epsilon^2 | Q_{\epsilon} |^2$ and 
\begin{align*}
\Re (R^* (R_{\epsilon}-R)) = \epsilon \Re (V \mathrm{ad}_A V^*) + \epsilon \Re (V^* q_{\epsilon}) .
\end{align*}
Rewriting $|R|^2 = |1-\overline{R}|^2 = 1 - 2\Re \overline{R} + |\overline{R}|^2$, \eqref{mmt0:0} yields
\begin{align*}
T_{\epsilon}(z)^* + \bar{z} V_{\epsilon} T_{\epsilon}(z) &= 1-|z|^2 + |z|^2 X_R + |z|^2 \epsilon Y - |z|^2 \epsilon\, Z_{\epsilon} \\
\text{where } \quad X_R &= 2 \Re \overline{R} - |\overline{R} |^2 \\
Y &= - 2 \Re (V \mathrm{ad}_A V^*) \\
Z_{\epsilon} &= 2 \Re (V^* q_{\epsilon}) + \epsilon | Q_{\epsilon} |^2 .
\end{align*}
We estimate the term $Y$ from below. Combining Proposition \ref{from2Re2C} with \eqref{Eperp:R} allows us to derive for $\epsilon \in [0,\epsilon_0)$, $z\in (0,1]\cdot e^{i\overline{\Theta_0}},$
\begin{equation}\label{be(z)}
\begin{split}
b_{\epsilon} + \frac{3 \pi^2 a_1}{2 d_0^2 |z|} |T_{\epsilon}(z)|^2 &\geq \left( \frac{3 a_0}{2} - \frac{3 \pi^2 a_1}{2 d_0^2} |z| b^2 \epsilon \right)  - |\overline{R} |^2 \left( c_0 + \frac{3 \pi^2 a_1}{2 d_0^2} |z| \right) .
\end{split}
\end{equation}
Taking advantage of the fact that $|z|\leq 1$, we get
\begin{equation*}
\begin{split}
T_{\epsilon}(z)^* &+ \bar{z} V_{\epsilon} T_{\epsilon}(z) + \frac{3\pi^2 a_1}{2 d_0^2} \epsilon |z| |T_{\epsilon}(z)|^2 \geq \\
&d(\epsilon, z) +  |z|^2 \epsilon \left( \frac{a_0}{2} - c_{\epsilon}  - \frac{3\pi^2 a_1}{2 d_0^2} b^2 \epsilon^2 \right) + |z|^2 \mathfrak{F}_{\epsilon}(\overline{R} )
\end{split}
\end{equation*}
where for $\epsilon \geq 0$
\begin{align}
\mathfrak{F}_{\epsilon}(\overline{R}) &:= 2 \Re \overline{R} - (1+\gamma_{\epsilon}) |\overline{R} |^2 , \label{PreQre} \\
\text{with } \quad \gamma_{\epsilon} &= \epsilon \left[ c_0 + \frac{3\pi^2 a_1}{2 d_0^2} \right]. \label{gamma:eps}
\end{align}
Now, we observe that: $0\leq | Q_{\epsilon} |^2 \leq b^2$ and $-\| q_{\epsilon} \| \leq \Re (V^* q_{\epsilon}) \leq \| q_{\epsilon} \|$. So, $\| c_{\epsilon} \|= o (\epsilon )$. This allows us to pick $0 < \mu < \epsilon_0$ such that 
\begin{align}
b^2 \mu^2 \frac{3\pi^2 a_1}{2 d_0^2} + \sup_{\epsilon \in [0,\mu]} \| c_{\epsilon} \| & \leq \frac{a_0}{2} , \label{cond4epsilon10} \\
\text{and }\quad \gamma_{\mu} & \leq \alpha . \label{gamma:eps1}
\end{align}
From \eqref{Te(z)*+zVTe(z)}, we obtain for all $\epsilon \in [0,\mu]$ and $z\in (0,1]\cdot e^{i\overline{\Theta_0}},$
\begin{equation}\label{Te(z)*+zVTe(z)}
T_{\epsilon}(z)^* + \bar{z} V_{\epsilon} T_{\epsilon}(z) + \frac{3\pi^2 a_1}{2 d_0^2} \epsilon |z| |T_{\epsilon}(z)|^2 \geq d(\epsilon, z) + |z|^2 ( 2 \Re \overline{R} - (1+\alpha) |\overline{R} |^2 ) .
\end{equation}
At this point of the proof, the condition \eqref{eq:h4p} of Hypothesis (H4), comes into scene to estimate the last terms on the RHS. This concludes the proof of \eqref{mmt0}. The proof of \eqref{mmt*0} is analogous. 

\end{proof}

For $d$ and $\epsilon_1$ respectively defined in \eqref{d0} and \eqref{cond4epsilon10}, let us write
\begin{align}
\Omega_0 := \{ (\epsilon, z) \in [0,\epsilon_1]\times (0,1]\cdot e^{i\overline{\Theta_0}} ; d (\epsilon, z) >0 \} . \label{omega0}
\end{align}
We deduce that:
\begin{prop}\label{invertibility:0} Assume (Loc). For $(\epsilon, z) \in \Omega_0$, the operator $T_{\epsilon} (z)$ is boundedly invertible.
\end{prop}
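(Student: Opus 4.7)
Proposition~\ref{invertibility:0} is the local analogue of Proposition~\ref{invertibility} and I would follow the same overall template: extract from the quadratic-form bounds \eqref{mmt0}--\eqref{mmt*0} of Proposition~\ref{mmtrick} positive lower bounds of the form $\|T_\epsilon(z)\varphi\| \geq c(\epsilon,z)\|\varphi\|$ and $\|T_\epsilon(z)^*\varphi\| \geq c(\epsilon,z)\|\varphi\|$, conclude that both $T_\epsilon(z)$ and its adjoint are injective with closed range, and then close through $(\ker T_\epsilon(z)^*)^{\perp} = \overline{\mathrm{Ran}\, T_\epsilon(z)}$ and the Inverse Mapping Theorem. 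The one new feature compared with the global case is the extra nonnegative term $C_0\,\epsilon|z|\,|T_\epsilon(z)|^2$ (with $C_0 := 3\pi^2 a_1/(2d_0^2)$) on the left-hand side of \eqref{mmt0}--\eqref{mmt*0}, and absorbing it cleanly is the main step; the rest I expect to be routine.

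To absorb it, I would first rewrite \eqref{mmt0} at the level of quadratic forms in a way that exhibits $T_\epsilon(z)\varphi$ explicitly. Using $T_\epsilon(z)^* + \bar z V_\epsilon T_\epsilon(z) = 1 - |z|^2 V_\epsilon V_\epsilon^*$ together with the identity $z V_\epsilon^*\varphi = \varphi - T_\epsilon(z)\varphi$ (valid because $z\neq 0$ on $\Omega_0$), testing \eqref{mmt0} against $\varphi\in\mathcal{H}$ transforms it into
\[
2\Re\langle \varphi, T_\epsilon(z)\varphi\rangle + \bigl(C_0\epsilon|z| - 1\bigr)\|T_\epsilon(z)\varphi\|^2 \geq d(\epsilon,z)\,\|\varphi\|^2 .
\]
Bounding the first term by $2\|\varphi\|\|T_\epsilon(z)\varphi\|$ via Cauchy--Schwarz and setting $a := \|T_\epsilon(z)\varphi\|$, $b := \|\varphi\|$ reduces the question to the scalar inequality $(C_0\epsilon|z|-1)a^2 + 2ab - d(\epsilon,z)b^2 \geq 0$. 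If the leading coefficient is $\leq 0$, then $2ab\geq d(\epsilon,z)b^2$ gives directly $a \geq \tfrac{1}{2}d(\epsilon,z)\,b$; otherwise the quadratic formula combined with $\sqrt{1+t}-1 = t/(1+\sqrt{1+t})$ gives $a \geq d(\epsilon,z)\,b/\bigl(1+\sqrt{1+(C_0\epsilon|z|-1)d(\epsilon,z)}\bigr)$. Either way, one obtains $\|T_\epsilon(z)\varphi\| \geq c(\epsilon,z)\|\varphi\|$ with $c(\epsilon,z)>0$ throughout $\Omega_0$.

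The very same manipulation applied to \eqref{mmt*0} delivers the matching bound $\|T_\epsilon(z)^*\varphi\| \geq c(\epsilon,z)\|\varphi\|$. From this point the conclusion proceeds verbatim as in the proof of Proposition~\ref{invertibility}: both $T_\epsilon(z)$ and $T_\epsilon(z)^*$ are injective with closed range, so $\mathrm{Ran}\,T_\epsilon(z) = \overline{\mathrm{Ran}\,T_\epsilon(z)} = (\ker T_\epsilon(z)^*)^{\perp} = \mathcal{H}$, and the Inverse Mapping Theorem yields $T_\epsilon(z)^{-1} \in \mathcal{B}(\mathcal{H})$. I do not anticipate any conceptual difficulty beyond the quadratic-in-$\|T_\epsilon(z)\varphi\|$ manipulation above, which is precisely what the global setting did not need and which explains why Proposition~\ref{invertibility} admitted a one-line proof while Proposition~\ref{invertibility:0} requires this additional bookkeeping.
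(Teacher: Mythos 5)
Your proposal is correct and follows the same route as the paper. The paper's own proof of Proposition~\ref{invertibility:0} is deliberately terse — it only says to ``reinterpret \eqref{mmt0}--\eqref{mmt*0} in terms of quadratic forms'' and then quotes the kernel/range duality and the Inverse Mapping Theorem — and you have correctly spelled out the algebra that phrase hides: testing against $\varphi$, using $zV_\epsilon^*=1-T_\epsilon(z)$, and reducing to the scalar quadratic $(C_0\epsilon|z|-1)a^2+2ab-d(\epsilon,z)b^2\geq 0$. One small remark worth keeping in mind: the branch $C_0\epsilon|z|>1$ that you treat via the quadratic formula never actually arises, because \eqref{estimate4gamma} records $\frac{3\pi^2 a_1}{2d_0^2}\epsilon\leq\gamma_\epsilon\leq\alpha\leq 1$ for $\epsilon\in(0,\epsilon_1]$ (the bound $\alpha\leq 1$ is without loss of generality in \eqref{eq:h4p}), so $C_0\epsilon|z|\leq 1$ throughout $\Omega_0$ and the simple estimate $\|T_\epsilon(z)\varphi\|\geq\tfrac12 d(\epsilon,z)\|\varphi\|$ holds uniformly, exactly as in the global case. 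Handling the other branch is harmless — it just makes your argument robust to the choice of normalization of $\alpha$.
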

\begin{proof} Fix $\epsilon$, $z$ as stated. Reinterpreting \eqref{mmt0} (resp. \eqref{mmt*0}) in terms of quadratic forms shows that $T_{\epsilon} (z)$ (resp. $T_{\epsilon} (z)^*$) is injective from ${\mathcal H}$ into itself and has closed range. Since
\begin{equation*}
\overline{\mathrm{Ran} \, T_{\epsilon} (z)} = ( \mathrm{Ker} \, T_{\epsilon} (z)^*)^{\perp},
\end{equation*}
we deduce that $T_{\epsilon} (z)$ and $(T_{\epsilon} (z))^*$ are actually linear and bijective hence boundedly invertible by the Inverse Mapping Theorem.
\end{proof}

\subsubsection{Synthesis}

Aside from the local vs global aspects, the remaining components of the proofs of Theorems \ref{lap1} and \ref{lap3} are identical . From now, we unify notations with the introduction of 
\begin{equation}
\Omega := \left\{
\begin{array}{ll} \Omega_{\mathbb T} & \text{if (Glo) holds} \\
\Omega_0 & \text{if (Loc) holds}
\end{array} \right. \qquad \qquad {\bf S} := \left\{
\begin{array}{ll} \partial {\mathbb D}& \text{if (Glo) holds} \\
e^{i\overline{\Theta_0}} & \text{if (Loc) holds}
\end{array} \right. .
\end{equation}

Accordingly, assuming (Glo) or (Loc), Propositions \ref{invertibility} and \ref{invertibility:0} allow us to define for any $(\epsilon, z) \in \Omega,$ 
\begin{equation}
G_{\epsilon}(z) := (1-zV_{\epsilon}^*)^{-1} .\label{Ge}
\end{equation}

We have:
\begin{prop}\label{firstbounds} Suppose (Glo) or (Loc). Then
\begin{equation}\label{C0}
C_0:= \sup_{(\epsilon,z) \in \Omega} d (\epsilon, z) \| G_{\epsilon} (z) \| < \infty.
\end{equation}
In addition, for any $\epsilon \in (0,\epsilon_1]$, $z\in (0,1]\cdot {\bf S},$
\begin{align}
\max \left\{ \| G_{\epsilon} (z) \varphi \|, \| G_{\epsilon}^* (z) \varphi \| \right\} \leq \sqrt{\frac{2 |\Re \bra \varphi, G_{\epsilon} (z) \varphi \ket |}{a_0 \epsilon |z|^2}} \label{G-ReG}
\end{align}
\end{prop}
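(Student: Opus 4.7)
The plan is to apply the Mourre-type quadratic-form inequality \eqref{mmt} (under (Glo)) or \eqref{mmt0} (under (Loc)) to the vector $\phi := G_{\epsilon}(z)\psi$, convert it into a bound involving $\Re\bra\psi, G_{\epsilon}(z)\psi\ket$ by means of the resolvent identity $T_{\epsilon}(z)\phi = \psi$, and then extract the two stated estimates.

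Setting $\phi := G_{\epsilon}(z)\psi$, the resolvent identity gives $zV_{\epsilon}^*\phi = \phi-\psi$, and a short algebraic computation using $\bra\phi, T_{\epsilon}(z)^*\phi\ket = \bra\psi, \phi\ket$ yields the key identity
\begin{equation*}
\bra\phi, (T_{\epsilon}(z)^* + \bar z V_{\epsilon} T_{\epsilon}(z))\phi\ket = 2\Re\bra\psi, G_{\epsilon}(z)\psi\ket - \|\psi\|^2.
\end{equation*}
Combining this with \eqref{mmt} (or \eqref{mmt0}, whose extra term contributes the scalar $\frac{3\pi^2 a_1}{2d_0^2}\epsilon|z|\|\psi\|^2$ since $\|T_{\epsilon}(z)\phi\| = \|\psi\|$) produces the master inequality
\begin{equation*}
2\Re\bra\psi, G_{\epsilon}(z)\psi\ket \geq d(\epsilon, z)\|G_{\epsilon}(z)\psi\|^2 + c(\epsilon, z)\|\psi\|^2,
\end{equation*}
where $c(\epsilon, z) = 1$ in (Glo) and $c(\epsilon, z) = 1 - \frac{3\pi^2 a_1}{2d_0^2}\epsilon|z|$ in (Loc); in the local case a mild extra constraint on $\epsilon_1$, e.g.\ $\epsilon_1 \leq d_0^2/(3\pi^2 a_1)$, ensures $c(\epsilon, z) \geq 1/2$ throughout $\Omega$.

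Both claims now follow from this master inequality. For \eqref{C0}, I would drop the nonnegative term $c(\epsilon, z)\|\psi\|^2$ and apply Cauchy-Schwarz, obtaining $d(\epsilon, z)\|G_{\epsilon}(z)\psi\|^2 \leq 2\|\psi\|\|G_{\epsilon}(z)\psi\|$, hence $d(\epsilon, z)\|G_{\epsilon}(z)\| \leq 2$. For \eqref{G-ReG}, I would observe that $|z|\leq 1$ forces $d(\epsilon, z) = (1-|z|^2) + a_0\epsilon|z|^2 \geq a_0\epsilon|z|^2$, so the master inequality immediately yields $a_0\epsilon|z|^2\|G_{\epsilon}(z)\psi\|^2 \leq 2|\Re\bra\psi, G_{\epsilon}(z)\psi\ket|$, i.e.\ the stated bound on $\|G_{\epsilon}(z)\psi\|$. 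The analogous bound for $\|G_{\epsilon}^*(z)\psi\|$ is obtained by the mirror argument, testing \eqref{mmt*} (resp.\ \eqref{mmt*0}) against $\phi := G_{\epsilon}^*(z)\psi$ and using $\Re\bra\psi, G_{\epsilon}^*(z)\psi\ket = \Re\bra\psi, G_{\epsilon}(z)\psi\ket$, since $\bra\psi, G_{\epsilon}^*(z)\psi\ket = \overline{\bra\psi, G_{\epsilon}(z)\psi\ket}$. I do not expect a serious analytic obstacle here: all the heavy estimates are already packaged in Propositions \ref{mmtrick3} and \ref{mmtrick}, and what remains is essentially the algebraic bookkeeping that produces the master identity.
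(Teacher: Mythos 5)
Your proposal is correct and follows essentially the same route as the paper: your master identity $\bra\phi,(T_{\epsilon}(z)^*+\bar z V_{\epsilon}T_{\epsilon}(z))\phi\ket=2\Re\bra\psi,G_{\epsilon}(z)\psi\ket-\|\psi\|^2$ for $\phi=G_{\epsilon}(z)\psi$ is precisely the quadratic-form reading of the paper's operator identity $\Re H_{\epsilon}(z)=G_{\epsilon}(z)^*(T_{\epsilon}(z)^*+\bar z V_{\epsilon}T_{\epsilon}(z))G_{\epsilon}(z)$ with $H_{\epsilon}(z)=2G_{\epsilon}(z)-1$, and both then combine the Mourre bounds \eqref{mmt}/\eqref{mmt0} with Cauchy--Schwarz for \eqref{C0} and with $d(\epsilon,z)\geq a_0\epsilon|z|^2$ for \eqref{G-ReG}. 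The only cosmetic difference is that you retain the exact $-\|\psi\|^2$ term (getting the cleaner bound $d\|G_{\epsilon}(z)\|\leq 2$) and you add an unnecessary extra smallness condition on $\epsilon_1$; the paper's constraint $\gamma_{\epsilon}\leq\alpha\leq 1$ already guarantees $\tfrac{3\pi^2a_1}{2d_0^2}\epsilon|z|\leq 1$, which is all your argument actually needs.
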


\begin{proof} For $(\epsilon, z) \in \Omega$, denote
\begin{equation}\label{He}
H_{\epsilon}(z) := \frac{1+zV_{\epsilon}^*}{1-zV_{\epsilon}^*}  = 2G_{\epsilon}(z) -1.
\end{equation}
We observe that
\begin{equation}\label{ReHe}
\begin{split}
\Re H_{\epsilon} (z) &= G_{\epsilon} (z)^* \left(T_{\epsilon}(z)^* + \bar{z} V_{\epsilon} T_{\epsilon}(z) \right) G_{\epsilon} (z) \\
&= G_{\epsilon} (z) \left(T_{\epsilon}(z) + z V_{\epsilon}^* T_{\epsilon}(z)^* \right) G_{\epsilon} (z)^* ,
\end{split}
\end{equation}
hence for any $\varphi \in {\mathcal H},$
\begin{equation}\label{CS}
\begin{split}
\Re \bra \varphi, H_{\epsilon} (z) \varphi \ket &\leq 2 \|\varphi \| \| G_{\epsilon} (z) \varphi \| + \|\varphi \|^2 .
\end{split}
\end{equation}

\noindent {\bf Case (Glo):} \eqref{mmt} and \eqref{ReHe} yield
\begin{equation}
\Re \bra \varphi, H_{\epsilon} (z) \varphi \ket \geq d (\epsilon, z) \| G_{\epsilon}(z) \varphi \|^2 \label{mou}
\end{equation}
for any $(\epsilon, z) \in \Omega_{\mathbb T}$. We deduce from \eqref{CS} and \eqref{mou} that
\begin{equation*}
2d (\epsilon,z ) \| G_{\epsilon}(z) \| + 1 \geq d (\epsilon, z)^2 \| G_{\epsilon}(z) \|^2
\end{equation*}
and \eqref{C0} follows as the region where $ 2X + 1-X^2$ is positive is bounded.

Inequality \eqref{mou} also implies for any $(\epsilon, z) \in \Omega_{\mathbb T},$
\begin{align*}
\| G_{\epsilon} (z) \varphi \|^2 &\leq \frac{\Re \bra \varphi, H_{\epsilon} (z) \varphi \ket}{a_0 \epsilon |z|^2} \leq \frac{2 \Re \bra \varphi, G_{\epsilon} (z) \varphi \ket}{a_0 \epsilon |z|^2}
\end{align*}
hence \eqref{G-ReG} for $\| G_{\epsilon} (z) \varphi \|$. The proof of \eqref{G-ReG} for $\| G_{\epsilon}^* (z) \varphi \|$ is analogous.
\\

\noindent {\bf Case (Loc):} \eqref{mmt0} and \eqref{ReHe} yield
\begin{equation}
\Re \bra \varphi, H_{\epsilon} (z) \varphi \ket + \frac{3\pi^2 a_1}{2 d_0^2} \epsilon |z| \|\varphi \|^2 \geq d (\epsilon, z) \| G_{\epsilon}(z) \varphi \|^2 \label{mou0}
\end{equation}
for any $(\epsilon, z) \in \Omega_0$. Taking into account \eqref{d}, \eqref{gamma:eps} and \eqref{gamma:eps1}, we note that:
\begin{equation}\label{estimate4gamma}
0\leq d (\epsilon, z) \leq 1 + a_0 \epsilon_1 \quad \mbox{and} \quad \frac{3\pi^2 a_1}{2 d_0^2} \epsilon \leq \gamma_{\epsilon} \leq \alpha \leq 1 .
\end{equation}
We deduce from \eqref{CS} and \eqref{mou0} that for any $(\epsilon, z) \in \Omega_0$,
\begin{equation*}
2d (\epsilon, z) \| G_{\epsilon}(z) \| + 2(1 + a_0 \epsilon_1) \geq d (\epsilon, z)^2 \| G_{\epsilon}(z) \|^2 ,
\end{equation*}
and \eqref{C0} follows from the region of  positivity of the polynomial $2X + 2(1 + a_0 \epsilon_1)-X^2 $.

Inequality \eqref{mou0} also yields for any $(\epsilon, z) \in \Omega_0,$
\begin{align*}
\| G_{\epsilon} (z) \varphi \|^2 &\leq \frac{\Re \bra \varphi, H_{\epsilon} (z) \varphi \ket}{a_0 \epsilon |z|^2} + \frac{3\pi^2 a_1}{2 d_0^2 a_0 |z|} \| \varphi \|^2 , \\
&\leq \frac{2 \Re \bra \varphi, G_{\epsilon} (z) \varphi \ket}{a_0 \epsilon |z|^2} + \frac{1}{a_0 \epsilon |z|^2}\left( \frac{3\pi^2 a_1 |z|}{2 d_0^2}\epsilon - 1 \right) \| \varphi \|^2 \leq \frac{2 \Re \bra \varphi, G_{\epsilon} (z) \varphi \ket}{a_0 \epsilon |z|^2}
\end{align*}
where we have finally used \eqref{ReHe} and \eqref{estimate4gamma}. This proves \eqref{G-ReG} for $\| G_{\epsilon} (z) \varphi \|$. The proof for $\| G_{\epsilon}^* (z) \varphi \|$ is analogous.

\end{proof}

Suppose either (Glo) or (Loc). For any $(\epsilon, z) \in \Omega$ and $1/2 < s \leq 1$, we define the weighted deformed resolvent:
\begin{equation}
F_{s, \epsilon} (z) := W_s (\epsilon ) G_{\epsilon} (z) W_s (\epsilon ) \label{Fs-eps}
\end{equation}
\begin{equation}
W_s (\epsilon ) := \bra A\ket^{-s} \bra \epsilon A\ket^{s -1} \label{Ws-eps}. 
\end{equation}
Note that for any $s\in (1/2,1]$, $(W_s (\epsilon ))_{\epsilon \in [0,\epsilon_0)}$ is a family of bounded selfadjoint operators. In particular, 
$\sup_{\epsilon \in [0,1]} \|W_s (\epsilon ) \| \leq 1$. Proposition \ref{firstbounds} entails:

\begin{cor}\label{secondbounds} Suppose either (Glo) or (Loc). We have for $d$ defined in \eqref{d} and \eqref{d0} resp.,
\begin{equation*}
\sup_{(\epsilon,z) \in \Omega} d (\epsilon, z) \| F_{s, \epsilon} (z) \| \leq C_0 < \infty.
\end{equation*}
In addition, given $r\in (0,1)$, for any $(\epsilon, z) \in (0,\epsilon_1]\times [r,1]\cdot {\bf S},$
\begin{align}
\max \left\{ \| G_{\epsilon} (z) W_s (\epsilon ) \varphi \|, \| G_{\epsilon} (z)^* W_s (\epsilon ) \varphi \| \right\} &\leq \sqrt{\frac{2 | \Re \bra \varphi, F_{s, \epsilon} (z) \varphi \ket |}{a_0 \epsilon r^2}} . \label{GW-ReF}
\end{align}
\end{cor}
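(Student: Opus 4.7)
The plan is to reduce both statements to Proposition \ref{firstbounds} by a direct substitution, exploiting two elementary properties of the weight $W_s(\epsilon)$ recorded just before the corollary: $W_s(\epsilon)$ is selfadjoint (being a bounded real-valued function of the selfadjoint operator $A$ via the functional calculus), and $\sup_{\epsilon \in [0,1]} \|W_s(\epsilon)\| \leq 1$.

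For the uniform bound on $\|F_{s,\epsilon}(z)\|$, I would simply use submultiplicativity of the operator norm,
\[
\| F_{s,\epsilon}(z) \| \leq \| W_s(\epsilon) \|^2 \, \| G_{\epsilon}(z) \| \leq \| G_{\epsilon}(z) \|,
\]
for any $(\epsilon, z) \in \Omega$. Multiplying through by $d(\epsilon, z) \geq 0$ and taking the supremum over $\Omega$, the first claim follows at once from \eqref{C0} of Proposition \ref{firstbounds}.

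For the pointwise inequality, I would apply the second bound \eqref{G-ReG} of Proposition \ref{firstbounds} to the vector $W_s(\epsilon) \varphi$ in place of $\varphi$, which yields
\[
\max \bigl\{ \| G_\epsilon(z) W_s(\epsilon)\varphi \|, \| G_\epsilon(z)^* W_s(\epsilon)\varphi \| \bigr\} \leq \sqrt{\frac{2 \bigl|\Re \bra W_s(\epsilon) \varphi, G_\epsilon(z) W_s(\epsilon) \varphi \ket \bigr|}{a_0 \epsilon |z|^2}}.
\]
Since $W_s(\epsilon)^* = W_s(\epsilon)$, the inner product on the right equals $\bra \varphi, F_{s,\epsilon}(z) \varphi \ket$ by the very definition \eqref{Fs-eps} of the weighted deformed resolvent. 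Finally, the restriction $z \in [r,1]\cdot {\bf S}$ gives $|z| \geq r$, so $|z|^{-2} \leq r^{-2}$, and we obtain exactly the stated bound \eqref{GW-ReF}.

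No substantive obstacle is expected here: all the analytic content---the positive commutator inequalities of Propositions \ref{mmtrick3} and \ref{mmtrick}, the invertibility of $T_\epsilon(z)$, and the Mourre-type estimates \eqref{C0}, \eqref{G-ReG}---has already been carried out in Proposition \ref{firstbounds}. The corollary is the bookkeeping step that transfers those estimates from the unweighted resolvent $G_\epsilon(z)$ to its symmetric weighting $F_{s,\epsilon}(z)$, which is the object that the differential inequality machinery of Section \ref{diff:ineq} will act on.
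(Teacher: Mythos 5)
Your proof is correct and is precisely the argument the paper intends; the corollary is stated with no explicit proof beyond ``Proposition \ref{firstbounds} entails,'' and your two-step reduction---submultiplicativity with $\|W_s(\epsilon)\|\leq 1$ for the uniform bound, and substituting $W_s(\epsilon)\varphi$ into \eqref{G-ReG} together with selfadjointness of $W_s(\epsilon)$ and $|z|\geq r$ for the pointwise bound---is exactly the intended bookkeeping.
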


\subsection{Differential Inequalities}\label{diff:ineq}

Next, we derive some differential inequalities for the weighted deformed resolvents $F_{s, \epsilon}(z)$. {The first ingredient is}

\begin{prop}\label{partial=ad} Suppose (Glo) or (Loc). Then, for any $\epsilon \in (0,\epsilon_1)$, $z\in (0,1)\cdot {\bf S}$, the map $\epsilon\mapsto G_{\epsilon} (z)$ is continuously differentiable on $(0, \epsilon_1)$ in the operator norm topology with
$$
\partial_{\epsilon} G_{\epsilon} (z) = \mathrm{ad}_A G_{\epsilon} (z) + z G_{\epsilon} (z) {\mathcal Q} (\epsilon)^* G_{\epsilon} (z) ,
$$
where ${\mathcal Q}$ is defined in \eqref{Qeps}.
\end{prop}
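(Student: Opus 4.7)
The plan is to verify the formula by combining two derivative identities: the $\epsilon$-derivative of the deformed resolvent through the usual $(1-zV_\epsilon^*)^{-1}$ differentiation, and the $A$-commutator of the resolvent through the standard $\mathrm{ad}_A T^{-1} = -T^{-1}(\mathrm{ad}_A T)T^{-1}$ formula. Everything reduces to the purely algebraic identity
\begin{equation*}
\partial_\epsilon V_\epsilon^* \;=\; \mathrm{ad}_A V_\epsilon^* \;+\; \mathcal{Q}(\epsilon)^{*},
\end{equation*}
after which sandwiching by $G_\epsilon(z)$ and multiplying by $z$ yields precisely the claimed formula.

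First I would establish the $\epsilon$-differentiability of $G_\epsilon(z)$. By Proposition~\ref{c11approx}, $S$ and $B$ are $C^1((0,1))$ valued in $\cB(\hil)$, so $V_\epsilon = S_\epsilon - \epsilon B_\epsilon$ and $T_\epsilon(z) = 1-zV_\epsilon^*$ are $C^1$ in $\epsilon$ on $(0,\epsilon_1)$ in the norm topology, with
\begin{equation*}
\partial_\epsilon V_\epsilon^* \;=\; \partial_\epsilon S_\epsilon^* - B_\epsilon^* - \epsilon\,\partial_\epsilon B_\epsilon^*, \qquad \partial_\epsilon T_\epsilon(z) = -z\,\partial_\epsilon V_\epsilon^*.
\end{equation*}
On $\Omega$, Propositions~\ref{invertibility}/\ref{invertibility:0} supply a bounded inverse $G_\epsilon(z)=T_\epsilon(z)^{-1}$, and joint norm-continuity of $(\epsilon,z)\mapsto G_\epsilon(z)$ follows from the first resolvent identity combined with \eqref{C0}. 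The standard resolvent computation then gives
\begin{equation*}
\partial_\epsilon G_\epsilon(z) \;=\; -G_\epsilon(z)\,\partial_\epsilon T_\epsilon(z)\,G_\epsilon(z) \;=\; z\,G_\epsilon(z)\,\bigl(\partial_\epsilon V_\epsilon^{*}\bigr)\,G_\epsilon(z),
\end{equation*}
with the right-hand side continuous in $\epsilon$.

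Next I would represent $\mathrm{ad}_A G_\epsilon(z)$. Proposition~\ref{c11approx} guarantees $S_\epsilon, B_\epsilon \in C^1(A)$, hence $V_\epsilon \in C^1(A)$ and $T_\epsilon(z) \in C^1(A)$ with $\mathrm{ad}_A T_\epsilon(z) = -z\,\mathrm{ad}_A V_\epsilon^{*}$. Since $T_\epsilon(z)$ has a bounded inverse, a standard Leibniz-style argument (differentiating $T_\epsilon(z)^{-1}T_\epsilon(z)=1$ in the $\mathrm{Ad}_{e^{-iAt}}$ variable) gives $G_\epsilon(z)\in C^1(A)$ together with
\begin{equation*}
\mathrm{ad}_A G_\epsilon(z) \;=\; -G_\epsilon(z)\,\mathrm{ad}_A T_\epsilon(z)\,G_\epsilon(z) \;=\; z\,G_\epsilon(z)\,\bigl(\mathrm{ad}_A V_\epsilon^{*}\bigr)\,G_\epsilon(z).
\end{equation*}

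The final step is the algebraic identity relating $\partial_\epsilon V_\epsilon^{*}$ and $\mathrm{ad}_A V_\epsilon^{*}$. Using $B_\epsilon = \mathrm{ad}_A S_\epsilon$ and the antisymmetry $(\mathrm{ad}_A X)^{*} = -\mathrm{ad}_A X^{*}$, one computes $\mathrm{ad}_A S_\epsilon^{*}=-B_\epsilon^{*}$, whence $\mathrm{ad}_A V_\epsilon^{*} = -B_\epsilon^{*} - \epsilon\,\mathrm{ad}_A B_\epsilon^{*}$. Taking the adjoint of \eqref{Qeps} yields $\mathcal{Q}(\epsilon)^{*} = \partial_\epsilon S_\epsilon^{*} - \epsilon\,\partial_\epsilon B_\epsilon^{*} + \epsilon\,\mathrm{ad}_A B_\epsilon^{*}$, and the two sum to $\partial_\epsilon V_\epsilon^{*}$. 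Plugging this into the derivative formula for $G_\epsilon(z)$ and using the expression for $\mathrm{ad}_A G_\epsilon(z)$ concludes the proof, while continuity of $\epsilon\mapsto \partial_\epsilon G_\epsilon(z)$ is inherited from that of $G_\epsilon(z)$ and of $\mathcal{Q}(\epsilon)$. I expect the only delicate point to be the rigorous justification that $G_\epsilon(z) \in C^1(A)$ with the claimed commutator formula; this is where the $C^1(A)$-regularity of the approximants $S_\epsilon, B_\epsilon$ from Proposition~\ref{c11approx} is essential, and where one must be careful to work with sesquilinear forms on $\mathcal{D}(A)\times \mathcal{D}(A)$ before extending by boundedness.
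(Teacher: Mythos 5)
Your proposal is correct and follows essentially the same route as the paper: differentiate $G_\epsilon(z)=T_\epsilon(z)^{-1}$ in $\epsilon$ and in the commutator sense via the standard resolvent identities, then use $B_\epsilon=\mathrm{ad}_A S_\epsilon$ and the antisymmetry $(\mathrm{ad}_A X)^*=-\mathrm{ad}_A X^*$ to recognize the difference as $zG_\epsilon(z)\mathcal{Q}(\epsilon)^*G_\epsilon(z)$. The paper writes the two intermediate formulas directly in terms of $S_\epsilon,B_\epsilon$ rather than isolating the identity $\partial_\epsilon V_\epsilon^*=\mathrm{ad}_A V_\epsilon^*+\mathcal{Q}(\epsilon)^*$, but the computation is the same.
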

\begin{proof} For any fixed $z\in (0,1)\cdot {\bf S}$, the map $\epsilon\mapsto G_{\epsilon} (z)$ is continuous on the interval $[0,\epsilon_1)$ {and} continuously differentiable on $(0, \epsilon_1)$ in  the operator norm topology, and
\begin{equation*}
\partial_{\epsilon} G_{\epsilon} (z) = z G_{\epsilon}(z) ( \partial_{\epsilon} S_{\epsilon}^* - \epsilon \partial_{\epsilon} B_{\epsilon}^* - B_{\epsilon}^* ) G_{\epsilon}(z) .
\end{equation*}
Now, for any $\epsilon \in [0,\epsilon_1)$, $z\in (0,1]\cdot {\bf S}$, we have $G_{\epsilon} (z) \in C^1(A)$ with
\begin{equation}\label{Q_e}
\mathrm{ad}_{A} G_{\epsilon} (z) = z G_{\epsilon} (z) ( \mathrm{ad}_A S_{\epsilon}^* - \epsilon \mathrm{ad}_A B_{\epsilon}^* ) G_{\epsilon} (z) ,
\end{equation}
which concludes the proof.
\end{proof}

Now, for any fixed $1/2 < s < 1$, the map $\epsilon \mapsto W_s (\epsilon)$ is strongly continuous on $[0,\epsilon_0)$ and converges strongly to $\bra A\ket^{-s}$ as $\epsilon$ tends to zero.

We have:

\begin{prop}\label{diffins<1} Suppose (Glo) or (Loc). Let $1/2 < s \leq 1$. For any fixed $z\in (0,1)\cdot {\bf S}$, the map $\epsilon\mapsto F_{s,\epsilon} (z)$ is weakly continuously differentiable on $(0, \epsilon_1)$ and for any $\varphi \in {\mathcal H}$, any $\epsilon \in (0,\epsilon_1)$,
\begin{align}\label{diffin.s<1}
\left| \bra \varphi, \partial_{\epsilon} F_{s, \epsilon} (z) \varphi \ket \right | & \leq h_1(\epsilon)  \|\varphi \| \sqrt{ | \bra \varphi, F_{s, \epsilon} (z)\varphi \ket | } + h_2(\epsilon) | \bra \varphi, F_{s, \epsilon} (z)\varphi \ket |.
\end{align}
where:
\begin{align}
h_1(\epsilon) &= \frac{2\sqrt{2} (2-s)\epsilon^{s-1}}{\sqrt{a_0\, \epsilon}\, r} \quad , \quad h_2(\epsilon) = \frac{2 \| {\mathcal Q} (\epsilon) \|}{a_0 r^2 \epsilon} . \label{h1:h2}
\end{align}
\end{prop}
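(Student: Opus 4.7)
The plan is to differentiate $F_{s,\epsilon}(z)=W_s(\epsilon)G_\epsilon(z)W_s(\epsilon)$ by the Leibniz rule combined with Proposition \ref{partial=ad}, and then bound each of the resulting pieces by Cauchy--Schwarz together with the a priori estimate \eqref{GW-ReF} of Corollary \ref{secondbounds}. Formally one gets
\[
\partial_\epsilon F_{s,\epsilon}(z)=(\partial_\epsilon W_s)G_\epsilon W_s+W_s(\mathrm{ad}_A G_\epsilon)W_s+z\,W_s G_\epsilon\mathcal{Q}(\epsilon)^* G_\epsilon W_s+W_s G_\epsilon(\partial_\epsilon W_s).
\]
Since $W_s(\epsilon)$ is a Borel function of the selfadjoint operator $A$, it commutes with $A$ on $D(A)$, so the commutator piece can be rewritten on matrix elements as
\[
\langle\varphi,W_s(\mathrm{ad}_A G_\epsilon)W_s\varphi\rangle=\langle AW_s\varphi,G_\epsilon W_s\varphi\rangle-\langle G_\epsilon^* W_s\varphi,AW_s\varphi\rangle,
\]
which is well defined as soon as $AW_s(\epsilon)$ is bounded.

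The crux is then two spectral-calculus estimates on $W_s(\epsilon)=\langle A\rangle^{-s}\langle\epsilon A\rangle^{s-1}$. First, the symbol of $AW_s(\epsilon)$ is $|x|/(\langle x\rangle^s\langle\epsilon x\rangle^{1-s})$; writing $|x|=|x|^s\cdot|x|^{1-s}$ and using $|x|\le\langle x\rangle$, $\epsilon|x|\le\langle\epsilon x\rangle$ yields $\|AW_s(\epsilon)\|\le\epsilon^{s-1}$. Second, a direct calculation gives $\partial_\epsilon W_s(\epsilon)=(s-1)\epsilon A^2\langle A\rangle^{-s}\langle\epsilon A\rangle^{s-3}$, whose symbol factors as
\[
\frac{\epsilon x}{\langle\epsilon x\rangle^{2}}\cdot\frac{x}{\langle x\rangle^{s}\langle\epsilon x\rangle^{1-s}};
\]
the first factor is bounded by $1/2$, the second by the previous estimate, whence $\|\partial_\epsilon W_s(\epsilon)\|\le\tfrac{1-s}{2}\,\epsilon^{s-1}$.

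Putting everything together on the matrix element $\langle\varphi,\partial_\epsilon F_{s,\epsilon}(z)\varphi\rangle$: the two boundary terms $(\partial_\epsilon W_s)G_\epsilon W_s$ and $W_s G_\epsilon(\partial_\epsilon W_s)$ are each bounded, using \eqref{GW-ReF}, by $\tfrac{1-s}{2}\epsilon^{s-1}\|\varphi\|\sqrt{2|\Re\langle\varphi,F_{s,\epsilon}\varphi\rangle|/(a_0\epsilon r^2)}$; the commutator piece is bounded by $2\epsilon^{s-1}\|\varphi\|\sqrt{2|\Re\langle\varphi,F_{s,\epsilon}\varphi\rangle|/(a_0\epsilon r^2)}$; summing these three yields the prefactor $\sqrt{2}(3-s)\epsilon^{s-1}/(r\sqrt{a_0\epsilon})$, which is majorised by $h_1(\epsilon)$ since $3-s\le 2(2-s)$ for $s\le 1$. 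Finally the $\mathcal{Q}$-term is estimated by Cauchy--Schwarz and \eqref{GW-ReF} applied twice:
\[
|z|\,\|G_\epsilon^* W_s\varphi\|\,\|\mathcal{Q}(\epsilon)\|\,\|G_\epsilon W_s\varphi\|\le\|\mathcal{Q}(\epsilon)\|\,\frac{2|\Re\langle\varphi,F_{s,\epsilon}\varphi\rangle|}{a_0\epsilon r^2}=h_2(\epsilon)|\Re\langle\varphi,F_{s,\epsilon}\varphi\rangle|,
\]
producing the $h_2$ contribution exactly.

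The weak continuous differentiability of $\epsilon\mapsto F_{s,\epsilon}(z)$ on $(0,\epsilon_1)$ follows from Proposition \ref{partial=ad} together with the continuity in operator norm of the families $\epsilon\mapsto W_s(\epsilon)$ and $\epsilon\mapsto\partial_\epsilon W_s(\epsilon)$ on $(0,\epsilon_1)$, itself a consequence of the functional calculus for $A$. The only genuine technical point is the sharp form of the weight bounds $\|AW_s(\epsilon)\|\le\epsilon^{s-1}$ and $\|\partial_\epsilon W_s(\epsilon)\|\lesssim\epsilon^{s-1}$: a naive use of $\epsilon x^2\le\langle\epsilon x\rangle\langle x\rangle$ would produce the far weaker and insufficient $\epsilon^{s-2}$; the correct power requires the factorisation through $\epsilon x/\langle\epsilon x\rangle^2$ sketched above, and this is the step that drives the factor $\sqrt{\epsilon}$ in the denominator of $h_1$.
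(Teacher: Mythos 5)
Your proof is correct and takes essentially the same approach as the paper: differentiate $F_{s,\epsilon}(z)=W_s(\epsilon)G_\epsilon(z)W_s(\epsilon)$ via Leibniz and Proposition \ref{partial=ad} into two boundary terms, a commutator term, and a $\mathcal Q$-term, then control each factor by Cauchy--Schwarz, the weight bounds $\|AW_s(\epsilon)\|\le\epsilon^{s-1}$ and $\|\partial_\epsilon W_s(\epsilon)\|\lesssim\epsilon^{s-1}$, and the a priori estimate \eqref{GW-ReF}. The only deviation is that your spectral-calculus factorisation yields the sharper bound $\|\partial_\epsilon W_s(\epsilon)\|\le\tfrac{1-s}{2}\epsilon^{s-1}$ in place of the paper's $(1-s)\epsilon^{s-1}$, producing the prefactor $\sqrt{2}(3-s)$ instead of $2\sqrt{2}(2-s)$; as you note, $3-s\le 2(2-s)$ for $s\le1$, so the stated $h_1$ still majorises your bound and the conclusion is unaffected.
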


\begin{proof} First, note that the map $\epsilon \mapsto W_s (\epsilon )$ is strongly continuously differentiable on the interval $(0,\epsilon_0)$ and that for any $\epsilon \in (0,\epsilon_0)$, any $\varphi \in {\mathcal H}$, we have
\begin{equation}\label{we}
\| \partial_{\epsilon} W_s (\epsilon ) \varphi \| \leq (1-s) \epsilon^{s -1} \| \varphi \|.
\end{equation}
Note also that for $1/2 < s \leq 1,$
\begin{equation}\label{awe}
\| AW_s (\epsilon ) \| = \| W_s (\epsilon )A \| \leq \epsilon^{s-1} .
\end{equation} 
Fix $z \in (0,1)\cdot {\bf S}$. Due to Proposition \ref{partial=ad}, the map $\epsilon\mapsto F_{s,\epsilon,} (z)$ is weakly continuously differentiable on $(0, \epsilon_0)$ and we have for any $\epsilon \in (0,\epsilon_0)$,
\begin{align*}
|\bra \varphi, \partial_{\epsilon} F_{s, \epsilon} (z) \varphi \ket | &\leq t_{1,\epsilon}(z) + | \bra W_s (\epsilon) \varphi, (\partial_{\epsilon} G_{\epsilon} (z)) W_s (\epsilon) \varphi \ket |\\
&\leq t_{1,\epsilon}(z) + t_{2,\epsilon}(z) + t_{3,\epsilon}(z)
\end{align*}
where
\begin{align*}
t_{1,\epsilon}(z) &= |\bra (\partial_{\epsilon} W_s (\epsilon ))\varphi, G_{\epsilon} (z) W_s (\epsilon) \varphi \ket + \bra W_s (\epsilon) \varphi, G_{\epsilon} (z) (\partial_{\epsilon} W_s (\epsilon )) \varphi \ket | \\
t_{2,\epsilon}(z) &= \big| \bra W_s (\epsilon) \varphi, \mathrm{ad}_A ( G_{\epsilon} (z) ) W_s (\epsilon) \varphi \ket \big| \\
t_{3,\epsilon}(z) &= \epsilon |z| |\bra {\mathcal Q}( \epsilon ) G_{\epsilon} (z)^* W_s (\epsilon) \varphi, G_{\epsilon} (z) W_s (\epsilon) \varphi \ket | .
\end{align*}
Invoking \eqref{G-ReG}, \eqref{we} and \eqref{awe}, we deduce that:
\begin{align*}
t_{1,\epsilon}(z) &\leq \| (\partial_{\epsilon} W_s (\epsilon ))\varphi \| \left( \| G_{\epsilon} (z) W_s (\epsilon) \varphi \| + \| G_{\epsilon} (z)^* W_s (\epsilon) \varphi \| \right) \\
&\leq \frac{2\sqrt{2} (1-s) \epsilon^{s-1}}{\sqrt{a_0\, \epsilon}\, r} \| \varphi \| \sqrt{| \bra \varphi, F_{s, \epsilon} (z) \varphi \ket |} , \\
t_{2,\epsilon}(z) &\leq \| A W_s(\epsilon ) \varphi \| \left( \| G_{\epsilon} (z) W_s (\epsilon) \varphi \| + \| G_{\epsilon} (z)^* W_s (\epsilon) \varphi \| \right) \\
&\leq \frac{2\sqrt{2} \epsilon^{s-1}}{\sqrt{a_0\, \epsilon}\, r} \| \varphi \| \sqrt{| \bra \varphi, F_{s, \epsilon} (z) \varphi \ket |} ,
\end{align*}
while
\begin{align*}
t_{3,\epsilon}(z) &\leq \| {\mathcal Q} (\epsilon) \| \| G_{\epsilon} (z)^* W_s (\epsilon) \varphi \| \| G_{\epsilon} (z) W_s (\epsilon) \varphi \| \leq \frac{2 \| {\mathcal Q} (\epsilon) \|}{a_0\, r^2\, \epsilon} | \bra \varphi, F_{s, \epsilon} (z) \varphi \ket | .
\end{align*}
Estimate (\ref{diffin.s<1}) follows. 
\end{proof}

The next result combines Proposition \ref{diffins<1} with Gronwall's Lemma.

\begin{prop}\label{unifbounds} Suppose (Glo) or (Loc). Fix $s\in (1/2,1]$. Then,
\begin{equation}\label{ub}
C_1 := \sup_{(\epsilon, z) \in (0,\epsilon_1] \times [r,1]\cdot {\bf S}} \| F_{s,\epsilon} (z)\| < \infty .
\end{equation}
In addition, there exists ${\mathfrak H}_{s} \in L^1((0,\epsilon_1])$, such that for any $(\epsilon, z) \in (0,\epsilon_1] \times [r,1]\cdot {\bf S},$
\begin{equation}\label{l1}
\| \partial_{\epsilon} F_{s,\epsilon} (z) \| \leq {\mathfrak H}_{s} ( \epsilon ) .
\end{equation}
\end{prop}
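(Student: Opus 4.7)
The plan is to combine the pointwise differential inequality of Proposition~\ref{diffins<1} with a Gronwall argument, integrated \emph{backward} from the endpoint $\epsilon=\epsilon_1$, where an initial bound is already in hand. Indeed, for $z\in[r,1]\cdot{\bf S}$, one has $d(\epsilon_1,z)=1-|z|^2+a_0\epsilon_1|z|^2\geq a_0\epsilon_1 r^2$, so Corollary~\ref{secondbounds} yields
\[
\|F_{s,\epsilon_1}(z)\|\ \leq\ \frac{C_0}{a_0\epsilon_1 r^2},
\]
uniformly in $z$, and this will serve as the ``initial data'' for the backward Gronwall.

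Fix $\varphi\in\mathcal{H}$ with $\|\varphi\|=1$ and $z\in[r,1]\cdot{\bf S}$. For $\delta>0$ set $f_{\varphi}(\epsilon):=|\bra\varphi,F_{s,\epsilon}(z)\varphi\ket|$ and $u_{\varphi}(\epsilon):=\sqrt{f_{\varphi}(\epsilon)+\delta}$; by Proposition~\ref{partial=ad}, $f_\varphi$ is $C^1$ on $(0,\epsilon_1)$. Since $|\partial_\epsilon f_\varphi|\leq|\bra\varphi,\partial_\epsilon F_{s,\epsilon}(z)\varphi\ket|$, Proposition~\ref{diffins<1} together with $\sqrt{f_\varphi}\leq u_\varphi$ and $f_\varphi\leq u_\varphi^2$ gives
\[
|u_{\varphi}'(\epsilon)|\ \leq\ \frac{|\partial_\epsilon f_\varphi(\epsilon)|}{2u_\varphi(\epsilon)}\ \leq\ \frac{h_1(\epsilon)}{2}+\frac{h_2(\epsilon)}{2}u_{\varphi}(\epsilon),
\]
which is a \emph{linear} differential inequality. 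The functions $h_1,h_2$ are in $L^1((0,\epsilon_1])$: indeed $h_1(\epsilon)\propto\epsilon^{s-3/2}$ is integrable near zero precisely because $s>1/2$, and the integrability of $h_2$ is the content of Corollary~\ref{Q-integrability}. Backward Gronwall then produces
\[
u_{\varphi}(\epsilon)\ \leq\ \exp\!\left(\tfrac12\!\int_\epsilon^{\epsilon_1}\!h_2(t)\,dt\right)\!\left(u_{\varphi}(\epsilon_1)+\tfrac12\!\int_\epsilon^{\epsilon_1}\!h_1(t)\,dt\right),
\]
with right-hand side uniform in $\epsilon\in(0,\epsilon_1]$, $z\in[r,1]\cdot{\bf S}$ and $\|\varphi\|=1$. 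Letting $\delta\to 0$ and invoking the elementary inequality $\|B\|\leq 2\sup_{\|\varphi\|=1}|\bra\varphi,B\varphi\ket|$ valid for any bounded operator on a complex Hilbert space (the numerical radius controls the norm), one obtains~\eqref{ub}.

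For the derivative bound~\eqref{l1}, substitute the just-obtained uniform bound $f_\varphi(\epsilon)\leq \|F_{s,\epsilon}(z)\|\leq C_1$ back into the inequality of Proposition~\ref{diffins<1}, which yields $|\bra\varphi,\partial_\epsilon F_{s,\epsilon}(z)\varphi\ket|\leq h_1(\epsilon)\sqrt{C_1}+h_2(\epsilon)C_1$, and a second application of the numerical-radius bound gives
\[
\|\partial_\epsilon F_{s,\epsilon}(z)\|\ \leq\ 2h_1(\epsilon)\sqrt{C_1}+2h_2(\epsilon)C_1\ =:\ \mathfrak{H}_{s}(\epsilon),
\]
which lies in $L^1((0,\epsilon_1])$ by the same two integrability facts recorded above.

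The main technical point is the Bernoulli-type nonlinearity $\sqrt{f}$ in the differential inequality of Proposition~\ref{diffins<1}, which prevents a direct application of Gronwall. The regularizing substitution $u_\varphi=\sqrt{f_\varphi+\delta}$ linearizes it uniformly in $\delta$, and this, together with the critical integrability of $\epsilon^{s-3/2}$ at $s=1/2$ (which matches the optimality of the weight $\bra A\ket^{-s}$ for $s>1/2$), is what makes the whole scheme close.
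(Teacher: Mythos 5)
Your proof is correct and follows essentially the same backward-Gronwall scheme as the paper: anchor at $\epsilon=\epsilon_1$ using Corollary~\ref{secondbounds}, integrate the differential inequality of Proposition~\ref{diffins<1} backward, close the loop with Gronwall, and recover operator norms from diagonal matrix elements. The difference is in how the square-root nonlinearity is handled: the paper integrates the $C^1$ scalar map $\epsilon\mapsto\bra\varphi,F_{s,\epsilon}(z)\varphi\ket$ (before taking absolute values, so the fundamental theorem of calculus applies directly), obtains the integral inequality with a $\sqrt{\cdot}$ term, and then invokes the nonlinear Gronwall lemma of \cite[Lemma~7.A.1]{abmg}; you instead regularize by $u_\varphi=\sqrt{f_\varphi+\delta}$ to linearize the inequality and apply ordinary Gronwall, and you use the numerical radius bound $\|B\|\leq 2\sup_{\|\varphi\|=1}|\bra\varphi,B\varphi\ket|$ where the paper uses polarization. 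Both variants are legitimate and give the same conclusion; yours trades the citation of a specific lemma for a more hands-on computation. One small imprecision: $f_\varphi=|\bra\varphi,F_{s,\epsilon}(z)\varphi\ket|$ is the modulus of a $C^1$ complex-valued function, hence locally Lipschitz and differentiable wherever the inner quantity is nonzero, but not $C^1$ in general; this does not affect the Gronwall argument (one may work with the a.e. derivative, or — cleaner — integrate before taking absolute values as the paper does), but the claim "$f_\varphi$ is $C^1$" should be weakened accordingly.
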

\begin{proof} With $h_1$ and $h_2$ defined in \eqref{h1:h2}, $h_2 \in L^1 ((0,\epsilon_1])$. By inspection, $h_1$ also belongs to $L^1((0,\epsilon_1])$. For any $(\epsilon, z) \in (0,\epsilon_1] \times [r,1]\cdot {\bf S}$ and any $\varphi \in {\mathcal H},$
$$
| \bra \varphi, F_{s, \epsilon} (z) \varphi \ket | \leq | \bra \varphi, F_{s, \epsilon_1} (z) \varphi \ket | + \int_{\epsilon}^{\epsilon_1} | \bra \varphi, \partial_{\mu} F_{s, \mu} (z) \varphi \ket |\, d\mu ,
$$
which combined with Corollary \ref{secondbounds} and Proposition \ref{diffins<1} yields:
\begin{align*}\label{int.s<1}
| \bra \varphi, F_{s, \epsilon} (z) \varphi \ket | &\leq  \frac{C_0}{a_0 \epsilon_1} \|\varphi \|^2 + \|\varphi \| \int_{\epsilon}^{\epsilon_1} h_1(\mu) \sqrt{| \bra \varphi, F_{s, \mu} (z)\varphi \ket |}\, d\mu \\
&+ \int_{\epsilon}^{\epsilon_1} h_2 (\mu) \bra \varphi, F_{s, \mu} (z)\varphi \ket |\, d\mu .
\end{align*}
Using Gronwall's Lemma as stated in e.g. \cite{abmg} Lemma 7.A.1, we deduce
\begin{align*}
| \bra \varphi, F_{s, \epsilon} (z) \varphi \ket | \leq  & \left[ \sqrt{\frac{C_0}{a_0 \epsilon_1}} + \frac{1}{2} \int_{\epsilon}^{\epsilon_1} h_1 (\mu) \exp \left( - \frac{1}{2} \int_{\mu}^{\epsilon_1} h_2(x) \, dx \right) \, d\mu \right]^2 \|\varphi \|^2 \\
& \times \exp \left( \int_{\epsilon}^{\epsilon_1} h_2 (\mu)\, d\mu \right)
\end{align*}
Since the functions $h_1$ and $h_2$ are integrable on $(0,\epsilon_1]$, we deduce that:
$$
\sup_{\| \varphi \| =1} \, \sup_{(\epsilon, z) \in (0,\epsilon_1] \times [r,1]\cdot {\bf S}} |\bra \varphi, F_{s,\epsilon} (z)\varphi \ket | < \infty .
$$
Recall that $\| F_{s,\epsilon} (z) \| = \sup_{\| \varphi \| =1, \| \psi \| =1} |\bra \varphi, F_{s,\epsilon} (z)\psi \ket |$ ; so, \eqref{ub} follows by polarisation. Incorporating it into \eqref{diffin.s<1} entails,
$$
\left| \bra \varphi, \partial_{\epsilon} F_{s, \epsilon} (z) \varphi \ket \right | \leq \left( \sqrt{C_1} h_1(\epsilon) + C_1 h_2(\epsilon) \right) \|\varphi \|^2 ,
$$
for any $(\epsilon, z) \in (0,\epsilon_1] \times [r,1]\cdot {\bf S}$ and any $\varphi \in {\mathcal H}$. \eqref{l1} follows by polarisation.
\end{proof}

\begin{prop}\label{mourre0} Suppose (Glo) or (Loc). Then for any fixed $s\in (1/2,1]$ and any fixed $z\in [r,1)\cdot {\bf S},$
\begin{equation}\label{weaklimit}
w- \lim_{\epsilon \rightarrow 0^+} F_{s,\epsilon} (z) = F_s (z)
\end{equation}
where
\begin{equation}\label{Fs(z)}
F_s (z) = \bra A\ket^{-s} (1-zV^*)^{-1} \bra A\ket^{-s} \quad \text{for }\, z\in {\mathbb D} .
\end{equation}
\end{prop}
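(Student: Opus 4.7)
The plan is to reduce the weak convergence claim to two more elementary convergences: operator-norm convergence of the deformed resolvent $G_\epsilon(z)$ to $(1-zV^*)^{-1}$, and strong operator convergence of the weight $W_s(\epsilon)$ to $\bra A\ket^{-s}$. Neither ingredient requires the Mourre machinery developed in Sections \ref{prelim-est}--\ref{diff:ineq}; they hold for purely soft reasons once $|z|<1$ is fixed. First I would record that, by Proposition \ref{c11approx} and the subsequent remark, $\|S_\epsilon - V\| \leq C\epsilon$ and $\sup_{\epsilon \in [0,\epsilon_0)} \|B_\epsilon\| < \infty$, whence $V_\epsilon = S_\epsilon - \epsilon B_\epsilon$ satisfies $\|V_\epsilon - V\| = O(\epsilon)$ as $\epsilon \to 0^+$. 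Since $|z|<1$ and $\|V\|=1$, for $\epsilon$ sufficiently small we have $\|zV_\epsilon^*\| \leq |z|(1+O(\epsilon)) < 1$, so the Neumann series gives a uniform bound on $\|G_\epsilon(z)\|$ and the resolvent identity
\[
G_\epsilon(z) - (1-zV^*)^{-1} = G_\epsilon(z) \, z(V_\epsilon^* - V^*) \, (1-zV^*)^{-1}
\]
yields $G_\epsilon(z) \to (1-zV^*)^{-1}$ in operator norm.

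Next I would treat the weights. For $s \in (1/2,1]$ the function $\bR \ni x \mapsto (1+\epsilon^2 x^2)^{(s-1)/2}$ is bounded by $1$ and converges pointwise to $1$ as $\epsilon \to 0^+$. Dominated convergence applied to the spectral measure of $A$ then gives $\bra \epsilon A\ket^{s-1} \to 1$ strongly, hence $W_s(\epsilon) = \bra A\ket^{-s}\bra \epsilon A\ket^{s-1} \to \bra A\ket^{-s}$ strongly on $\hil$.

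Finally, for any $\varphi,\psi \in \hil$ I would write
\[
\bra \varphi, F_{s,\epsilon}(z)\psi \ket = \bra W_s(\epsilon)\varphi,\, G_\epsilon(z) \, W_s(\epsilon)\psi \ket,
\]
and combine the two convergences above: $W_s(\epsilon)\varphi$ converges strongly to $\bra A\ket^{-s}\varphi$, while $G_\epsilon(z)W_s(\epsilon)\psi$ converges strongly to $(1-zV^*)^{-1}\bra A\ket^{-s}\psi$ because a norm-convergent operator sequence composed with a strongly convergent vector sequence converges strongly. Pairing these two strongly convergent sequences yields the inner product $\bra \bra A\ket^{-s}\varphi, (1-zV^*)^{-1}\bra A\ket^{-s}\psi \ket = \bra \varphi, F_s(z)\psi \ket$, which is the claim. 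I do not expect any genuine obstacle: the only point requiring care is to verify that $\epsilon$ can be taken small enough to make both $G_\epsilon(z)$ well defined and the resolvent identity effective, but this is automatic from $|z|<1$ and $\|V_\epsilon-V\|=O(\epsilon)$.
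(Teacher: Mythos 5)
Your proof is correct and reaches the conclusion by essentially the same two ingredients the paper uses: norm convergence $G_\epsilon(z)\to(1-zV^*)^{-1}$ at fixed $|z|<1$ and strong convergence $W_s(\epsilon)\to\bra A\ket^{-s}$, then a combination of the two. The one point of divergence is how the uniform bound on $\|G_\epsilon(z)\|$ is obtained: you note that $\|V_\epsilon-V\|=O(\epsilon)$ forces $\|zV_\epsilon^*\|<1$ for $\epsilon$ small and invoke the Neumann series directly, whereas the paper appeals to the a priori bound $d(\epsilon,z)\|G_\epsilon(z)\|\leq C_0$ from Proposition \ref{firstbounds} together with Lemma \ref{Tepsilon-T0} and the resolvent identity $G_\epsilon(z)-(1-zV^*)^{-1}=G_\epsilon(z)(T_0(z)-T_\epsilon(z))(1-zV^*)^{-1}$. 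Both routes work because $d(\epsilon,z)\geq 1-|z|^2>0$ on $[r,1)\cdot\mathbf{S}$; yours is marginally more self-contained, needing only Proposition \ref{c11approx} and not the Mourre-based estimates, which only become indispensable as $|z|\to 1$. Your ``strong operator limit against strong operator limit'' packaging of the final step is a compact way of expressing the same three-term telescoping decomposition $t_{1,\epsilon}+t_{2,\epsilon}+t_{3,\epsilon}$ that the paper writes out explicitly.
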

\begin{proof} 

Fix $r\in (0,1)$. Pick two vectors $\varphi$, $\psi$ in ${\mathcal H}$ and fix for a moment $z \in [r,1)\cdot {\bf S}$. For any $\epsilon \in (0,\epsilon_0)$,
$$
\bra \varphi, \bra A\ket^{-s} (1-zV^*)^{-1} \bra A\ket^{-s} \psi \ket - \bra \varphi, F_{s,\epsilon}(z) \psi \ket = t_{1, \epsilon} (z) + t_{2, \epsilon} (z) + t_{3, \epsilon} (z)
$$
with
\begin{align*}
t_{1, \epsilon} (z) &= \bra W_s (\epsilon) \varphi, ( (1-zV^*)^{-1} - G_{\epsilon} (z) ) W_s (\epsilon) \psi \ket \\
t_{2, \epsilon} (z) &= \bra W_s (0) \varphi, (1-zV^*)^{-1} ( W_s (0) -W_s (\epsilon) ) \psi \ket \\
t_{3, \epsilon} (z) &= \bra ( W_s (0) - W_s (\epsilon) ) \varphi, (1-zV^*)^{-1} W_s (\epsilon) \psi \ket .
\end{align*}
First, we estimate $t_{1, \epsilon} (z)$. For any $(\epsilon, z) \in [0,\epsilon_1] \times (0,1)\cdot {\bf S}$, $G_{\epsilon} (z) - (1-zV^*)^{-1} = G_{\epsilon}(z) ( T_{0} (z) - T_{\epsilon} (z) ) (1-zV^*)^{-1}$. We observe that for $z\in {\mathbb D}$, we have $\| (1-zV^*)^{-1} \| \leq 1/(1-|z|)$. Using Lemma \ref{Tepsilon-T0} and Proposition \ref{firstbounds}, we deduce that
\begin{equation}
\| G_{\epsilon} (z) - (1-zV^*)^{-1} \| \leq \min \left\{ \frac{C_0 b \, \epsilon}{(1+|z|)( 1-|z| )^2},\frac{C_0 b}{a_0 |z|^2 ( 1-|z| )} \right\}  \label{Geps-G0}
\end{equation}
hence
$$
| t_{1, \epsilon} (z) | \leq \frac{C_0 b \, \epsilon}{(1+|z|)( 1-|z| )^2} \|\varphi \| \|\psi \|
$$
since $\| W_s (\epsilon) \| \leq 1$ for any $\epsilon \in [0,1]$. Now, since $\| (1-zV^*)^{-1} \| \leq 1/(1-|z|)$ for any $z\in {\mathbb D}$, we also have
\begin{align*}
| t_{2, \epsilon} (z) | &\leq \frac{1}{1-|z|} \| (W_s (0)-W_s ( \epsilon))\varphi \| \|\psi \| \\
| t_{3, \epsilon} (z) | &\leq \frac{1}{1-|z|} \|\varphi \| \| (W_s (0)-W_s (\epsilon))\psi \| ,
\end{align*}
which concludes.

\end{proof}

We have arrived at the {heart} of the proof of Theorems \ref{lap1} and \ref{lap3}:
\begin{proposition}\label{mourre1} Suppose (Glo) or (Loc). For $s\in (1/2,1]$. Then, $F_s$ defined by \eqref{Fs(z)} admits a continuous extension to ${\mathbb D} \cup {\bf S}$. In addition, given $r\in (0,1)$, it holds
\begin{equation}\label{Fs-esp-Fs}
\sup_{z\in [r,1)\cdot {\bf S}} \| F_{s, \epsilon}(z) - F_s (z) \| \rightarrow _{\epsilon \rightarrow 0^+} 0 .
\end{equation}
We denote the extension also by $F_s$.
\end{proposition}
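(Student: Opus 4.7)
\noindent\emph{Step 1 (uniform Cauchy property).} The plan is to exploit the $L^1$ bound $\|\partial_\epsilon F_{s,\epsilon}(z)\|\le {\mathfrak H}_s(\epsilon)$ from Proposition~\ref{unifbounds} to promote the pointwise weak convergence provided by Proposition~\ref{mourre0} into a uniform norm convergence on $[r,1]\cdot{\bf S}$. Fix $r\in(0,1)$. For $0<\epsilon<\epsilon'\le\epsilon_1$ and $z\in[r,1]\cdot{\bf S}$, since $\mu\mapsto F_{s,\mu}(z)$ is $C^1$ on $(0,\epsilon_1)$ in the operator norm topology (Proposition~\ref{diffins<1}), the fundamental theorem of calculus yields
\[
\|F_{s,\epsilon'}(z)-F_{s,\epsilon}(z)\|\le \int_\epsilon^{\epsilon'} {\mathfrak H}_s(\mu)\,d\mu,
\]
and the right-hand side tends to $0$ as $\epsilon,\epsilon'\to0^+$ uniformly in $z$ because ${\mathfrak H}_s\in L^1((0,\epsilon_1])$. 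By completeness of $\cB(\hil)$ in the norm topology this delivers a map $z\mapsto\tilde F_s(z)$ with $F_{s,\epsilon}(z)\to\tilde F_s(z)$ in norm, uniformly on $[r,1]\cdot{\bf S}$; this is exactly \eqref{Fs-esp-Fs} once $\tilde F_s$ is identified with $F_s$.

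\noindent\emph{Step 2 (identification of the limit).} On $[r,1)\cdot{\bf S}\subset\bD$, Proposition~\ref{mourre0} provides the weak limit $F_s(z)$ defined in \eqref{Fs(z)}. Since norm convergence implies weak convergence and weak limits are unique, $\tilde F_s(z)=F_s(z)$ throughout $[r,1)\cdot{\bf S}$. I would then \emph{define} $F_s(z):=\tilde F_s(z)$ for $z\in{\bf S}$, which yields an extension of $F_s$ from $\bD$ to $\bD\cup{\bf S}$.

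\noindent\emph{Step 3 (continuity).} For each fixed $\epsilon\in(0,\epsilon_1]$, $z\mapsto F_{s,\epsilon}(z)$ is norm continuous on $[r,1]\cdot{\bf S}$: the standard resolvent identity $G_\epsilon(z_1)-G_\epsilon(z_2)=(z_1-z_2)G_\epsilon(z_1)V_\epsilon^* G_\epsilon(z_2)$ together with the uniform bound $\|G_\epsilon(z)\|\le C_0/d(\epsilon,z)$ from Proposition~\ref{firstbounds} (with $d(\epsilon,z)\ge a_0\epsilon r^2>0$ on this set) and the $z$-independence of $W_s(\epsilon)$ give this immediately. A uniform norm limit of continuous functions is continuous, so the extended $F_s$ is continuous on $[r,1]\cdot{\bf S}$. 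Since $r\in(0,1)$ is arbitrary and $F_s$ is already norm analytic on the open set $\bD$, this produces continuity on $\bD\cup{\bf S}$.

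\noindent\emph{Main obstacle.} The substantive work has already been absorbed into Propositions~\ref{unifbounds} and~\ref{mourre0}; the remaining argument is a soft assembly built from uniform Cauchy estimates, uniqueness of weak limits and stability of continuity under uniform convergence. The one mildly delicate point is bridging the \emph{norm} limit produced by the Gronwall-type estimate with the \emph{weak} limit afforded by Proposition~\ref{mourre0}, which is precisely where the uniqueness of weak limits is decisive.
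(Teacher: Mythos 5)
Your proposal is correct and follows essentially the same route as the paper's proof: uniform Cauchy estimate from the integrability of ${\mathfrak H}_s$, identification of the limit via the weak convergence of Proposition~\ref{mourre0}, and continuity by combining Lipschitz continuity of each $F_{s,\epsilon}$ in $z$ with uniform convergence. The only cosmetic difference is in Step 3, where you bound $\|G_\epsilon(z)\|$ crudely via Proposition~\ref{firstbounds} (giving a Lipschitz constant of order $\epsilon^{-2}$) while the paper sharpens this to order $\epsilon^{-1}$ via Corollary~\ref{secondbounds}; either suffices for the final $3$-term triangle-inequality argument.
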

\begin{proof} From Proposition \ref{unifbounds}, we get for any $\epsilon, \mu \in (0,\epsilon_1]$, $\epsilon < \mu$, $z \in [r,1)\cdot {\bf S},$
$$
\| F_{s, \epsilon}(z) - F_{s, \mu}(z) \| \leq \int_{\epsilon}^{\mu} {\mathfrak H}_{s} (\xi ) d\xi
$$
which vanishes as $\mu$ and $\epsilon$ tend to $0$ (uniformly in $z\in [r,1]\cdot {\bf S}$). So, $(F_{s,\epsilon,}(z))$ is Cauchy in ${\mathcal B}({\mathcal H})$, hence converges to some $F_s^+ (z) \in {\mathcal B}({\mathcal H})$ for any fixed $z\in [r,1]\cdot {\bf S}$. In view of Proposition \ref{mourre0}, $F_s^+ (z) = F_s (z)$ if $z\in [r,1]\cdot {\bf S}$ and
\begin{equation}\label{ing0}
\| F_{s, \epsilon}(z) - F_s^+ (z) \| \leq \int_0^{\epsilon} {\mathfrak H}_{s} (\xi ) d\xi
\end{equation}
for any $\epsilon \in (0,\epsilon_1]$. \eqref{Fs-esp-Fs} follows. It remains to prove the continuity of $F_s^+$. Given $\epsilon \in (0,\epsilon_1]$, $z,z' \in [r,1]\cdot {\bf S},$
\begin{eqnarray}
F_{s, \epsilon}(z) - F_{s, \epsilon}(z') &=& W_s (\epsilon) (G_{\epsilon}(z) - G_{\epsilon}(z')) W_s (\epsilon) \nonumber\\
&=& (z-z')  W_s (\epsilon) G_{\epsilon}(z) (S_{\epsilon}^* -\epsilon B_{\epsilon}^*) G_{\epsilon}(z') W_s (\epsilon) \label{ing1}
\end{eqnarray}
where we have used \eqref{Teps(z)} in the final step.

Now, following Corollary \ref{secondbounds} and Proposition \ref{unifbounds}, we get for any $\epsilon \in (0,\epsilon_1]$ and any $z\in [r,1]\cdot {\bf S},$
\begin{align*}
\max \left\{ \| G_{\epsilon}(z)^* W_s(\epsilon) \|, \| G_{\epsilon}(z) W_s(\epsilon) \| \right\} &\leq \frac{B_r}{\sqrt{\epsilon}} \quad \text{with} \quad B_r := \frac{\sqrt{2C_1}+1}{r\sqrt{a_0}} .
\end{align*}
Combining these estimates with \eqref{ing1} entails for any $\epsilon \in (0,\epsilon_1]$, $z,z' \in [r,1]\cdot {\bf S}$,
\begin{align*}
\| F_{s, \epsilon}(z) - F_{s, \epsilon}(z') \| \leq \frac{B_r^2}{\epsilon} |z-z'| . 
\end{align*}
Due to \eqref{ing0}, we obtain
\begin{equation*}
\begin{split}
\| F_s^+ (z) - F_s^+ (z') \| &\leq \| F_s^+ (z) - F_{s, \epsilon}(z) \| +\| F_{s, \epsilon}(z) - F_{s, \epsilon}(z') \| +\| F_s^+ (z') - F_{s, \epsilon}(z') \| \\
&\leq 2\int_0^{\epsilon} {\mathfrak H}_{s} (\xi ) d\xi + \frac{B_r^2}{\epsilon} |z-z'| , 
\end{split}
\end{equation*}
for any $\epsilon \in (0,\epsilon_1]$. The conclusion follows since the function {${\mathfrak H}_{s}$} is integrable.
\end{proof}

\subsection{Proof of Theorem \ref{lap3}}

Rephrasing Proposition \ref{mourre1} under assumptions (Glo), we have shown that the weighted resolvent $\bra A\ket^{-s} (1-zV^*)^{-1} \bra A\ket^{-s}$, which is defined naturally for $z\in {\mathbb D}$, extends continuously to $\overline{\mathbb D}$.

\subsection{Proof of Theorem \ref{lap1}}

We rephrase Proposition \ref{mourre1} under assumptions (Loc). Given $\theta$ such that $e^{i\theta} \in e^{i\Theta}\setminus {\mathcal E}(U)$, we have shown there exists an open neighborhood of $\theta$, $\Theta_0 \subset \Theta$ such that the weighted resolvent $\bra A\ket^{-s} (1-zV^*)^{-1} \bra A\ket^{-s}$, which is defined a priori for $z\in {\mathbb D}$, extends continuously to $\overline{\mathbb D}\cup e^{i\Theta_0}$. Since the choice of $\theta$ was arbitrary, we have actually shown that this continuous extension holds on ${\mathbb D} \cup e^{i\Theta}\setminus {\mathcal E}(U)$.


\section{Proof of Theorem \ref{lap2}}\label{prooflap2}

 In Theorem \ref{lap1}, the estimates only hold for $z\in {\mathbb D}$ with $r \leq |z|<1$ and $\arg z \in \Theta$. In order to obtain an estimate for all $z\in {\mathbb D}$ and to derive Theorem \ref{lap2}, we need to localize spectrally. We will do this in Lemmata \ref{AchiA} and \ref{(1-Q)G0}, then proceed to the proof of Theorem \ref{lap2}.
\begin{lemma}\label{AchiA} Let $U$ be a unitary operator which belongs to $C^1(A)$ and $\phi \in C^1({\mathbb T})$ be a function such that its derivative $\phi'$ belongs to the Wiener algebra. With $\Phi$ defined by $\Phi (e^{i\theta}) = \phi (\theta)$ for $\theta \in {\mathbb T}$, the operator $\bra A\ket^{-\delta} \Phi(U) \bra A\ket^{\delta}$ extends to a bounded operator in ${\mathcal B}({\mathcal H})$ for all $\delta \in [-1,1]$.
\end{lemma}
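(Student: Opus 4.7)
The plan is to show first that $\Phi(U) \in C^1(A)$ under the Wiener-algebra hypothesis on $\phi'$, and then combine this with standard functional-calculus estimates and complex interpolation to handle the weights $\langle A\rangle^{\pm\delta}$. The key algebraic input is the linear growth $\|[A,U^n]\| \leq |n|\,\|[A,U]\|$, which balances perfectly against the $1/n$ factor obtained by integrating the Fourier series of $\phi'$.

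The first step is to expand $\Phi(U)$ as an operator-norm convergent series. Writing $\phi'(\theta) = \sum_{n\in\bZ} c_n e^{in\theta}$ with $\sum_n|c_n|<\infty$ (Wiener condition), periodicity of $\phi$ forces $c_0 = \frac{1}{2\pi}\int_\bT \phi'(\theta)\,d\theta = 0$, so after fixing the constant one obtains $\Phi(U) = \phi(0)\,I + \sum_{n\neq 0} \frac{c_n}{in}\, U^n$. Since $U\in C^1(A)$, the Leibniz rule gives $[A,U^n] = \sum_{k=0}^{n-1} U^k [A,U] U^{n-1-k}$ for $n\ge 1$, with the analogous expression for $n<0$ (using $U^{-1} = U^*$ and that unitarity implies $U^*\in C^1(A)$ with $[A,U^*]=-U^*[A,U]U^*$). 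Hence $\|[A,U^n]\|\le |n|\,\|[A,U]\|$, and the formal series for the commutator converges in operator norm:
\[
\sum_{n\neq 0}\Bigl|\tfrac{c_n}{in}\Bigr|\,\|[A,U^n]\| \le \|[A,U]\|\sum_n |c_n| < \infty .
\]
This establishes $\Phi(U)\in C^1(A)$ with $[A,\Phi(U)]\in\cB(\hil)$.

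The second step turns $C^1(A)$-regularity into the desired boundedness at the endpoint $\delta=1$. For $\psi\in\cD(A)$ one writes
\[
\langle A\rangle^{-1}\Phi(U)\langle A\rangle\psi = \Phi(U)\psi + \langle A\rangle^{-1}[\Phi(U),\langle A\rangle]\psi ,
\]
so it suffices to show that $[\Phi(U),\langle A\rangle]$ extends to a bounded operator. This is the classical step: using the integral representation $\sqrt{x} = \frac{1}{\pi}\int_0^\infty \lambda^{-1/2}\,\frac{x}{x+\lambda}\,d\lambda$ applied to $x=1+A^2$, the commutator $[\Phi(U),\langle A\rangle]$ can be expressed as an operator-norm convergent integral involving $(\langle A\rangle^2+\lambda)^{-1}[\Phi(U),A^2](\langle A\rangle^2+\lambda)^{-1}$, and the bounded commutator $[\Phi(U),A]$ obtained in step one controls this uniformly. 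The case $\delta=-1$ follows by taking adjoints: $(\langle A\rangle\,\Phi(U)\langle A\rangle^{-1})^* = \langle A\rangle^{-1}\overline{\Phi}(U)\langle A\rangle$, and $\overline{\phi}' = \overline{\phi'}$ is in the Wiener algebra as well, so step two applies to $\overline{\phi}$.

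Finally, the intermediate values $\delta\in(-1,1)$ follow by complex interpolation (Hadamard's three-lines lemma) applied to the analytic family $z\mapsto \langle A\rangle^{-z}\Phi(U)\langle A\rangle^z$ on the strip $|\Re z|\le 1$: on the boundary $\Re z = \pm 1$ the norm is uniform in $\Im z$ because $\langle A\rangle^{\pm i\Im z}$ is unitary, and on $\Re z = 0$ the operator reduces, again modulo a unitary, to $\Phi(U)$ itself. The principal technical obstacle is the functional-calculus step in the middle paragraph (boundedness of $[\Phi(U),\langle A\rangle]$); the remaining arguments are algebraic series manipulations and standard interpolation.
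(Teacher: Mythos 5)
Your overall structure — Fourier-series expansion, the bound $\|\mathrm{ad}_A U^n\|\le |n|\,\|\mathrm{ad}_A U\|$, deducing $\Phi(U)\in C^1(A)$, establishing the endpoint cases $\delta=\pm 1$, then interpolating — matches the paper's proof. The first step (using the Wiener condition on $\phi'$, which is equivalent to $\sum_n |n\hat\phi_n|<\infty$, together with the Leibniz rule for $\mathrm{ad}_A U^n$) is essentially identical to the paper.

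However, your second step has a genuine gap. You want to show $[\Phi(U),\langle A\rangle]$ is bounded via the subordination formula $\sqrt{y}=\tfrac{1}{\pi}\int_0^\infty \lambda^{-1/2}\,\tfrac{y}{y+\lambda}\,d\lambda$, claiming the resulting integral is operator-norm convergent. It is not, at least not by the estimate you invoke. Carrying out the commutator one finds
\begin{equation*}
[\Phi(U),\langle A\rangle]
=\frac{1}{\pi}\int_0^\infty \lambda^{1/2}\,(1+A^2+\lambda)^{-1}\bigl(A\,[\Phi(U),A]+[\Phi(U),A]\,A\bigr)(1+A^2+\lambda)^{-1}\,d\lambda ,
\end{equation*}
and using $\|A(1+A^2+\lambda)^{-1}\|\le \tfrac{1}{2\sqrt{1+\lambda}}$ and $\|(1+A^2+\lambda)^{-1}\|\le(1+\lambda)^{-1}$ yields an integrand bounded only by $\mathrm{const}\cdot \lambda^{1/2}(1+\lambda)^{-3/2}\sim\lambda^{-1}$ at infinity, which is not integrable. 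So the claim that $[\Phi(U),A]$ ``controls this uniformly'' is not justified by the argument as written; one would need an operator-Lipschitz theorem for $x\mapsto\sqrt{1+x^2}$ or a finer decomposition with cancellations.

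The standard (and what the paper implicitly relies on) route is both simpler and exactly what is needed: $B\in C^1(A)$ implies $B\,{\mathcal D}(A)\subset{\mathcal D}(A)$ with $AB\psi=BA\psi+[A,B]\psi$ for $\psi\in{\mathcal D}(A)$, whence
\begin{equation*}
\|\langle A\rangle B\psi\|^2=\|B\psi\|^2+\|AB\psi\|^2\le C\left(\|\psi\|^2+\|A\psi\|^2\right)=C\|\langle A\rangle\psi\|^2 ,
\end{equation*}
so $\langle A\rangle B\langle A\rangle^{-1}$ extends to a bounded operator, and the adjoint gives the other endpoint. Note this only shows $\langle A\rangle^{-1}[\Phi(U),\langle A\rangle]$ is bounded, which suffices for your decomposition; it does \emph{not} assert that $[\Phi(U),\langle A\rangle]$ itself is bounded, which is a stronger claim you do not actually need. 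Replacing your functional-calculus paragraph with this domain-preservation argument closes the gap, and the rest (adjoint for $\delta=-1$, Stein interpolation for $|\delta|<1$) goes through as in the paper.
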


\begin{proof} By hypothesis, we have that: $\Phi (e^{i\theta}) = \phi (\theta) = \sum_{n\in {\mathbb Z}} \hat{\phi}_n e^{in\theta}$ with $\sum_{n\in {\mathbb Z}} |n \hat{\phi}_n| < \infty$. We first prove that $\Phi(U) \in C^1(A)$. First, observe that for any $n\in {\mathbb Z}$, $U^n \in C^1(A)$ with: 
$$
\mathrm{ad}_A U^n = \sum_{k=0}^{n-1} U^k (\mathrm{ad}_A U) U^{n-1}
$$
for $n\geq 1$ and $\mathrm{ad}_A U^n = - U^n (\mathrm{ad}_A U^{-n}) U^n$ for $n\leq -1$. So, $\| \mathrm{ad}_A U^n \| \leq |n| \| \mathrm{ad}_A U \|$ for any $n\in {\mathbb Z}$. Given $N\in {\mathbb Z}_+$, let
$$
\Phi_N (U) = \sum_{|n| \leq N} \hat{\phi}_n U^n .
$$
$(\Phi_N (U))_{N\in {\mathbb N}} \subset C^1(A)$ converges in norm to $\Phi (U)$ and $(\mathrm{ad}_A \Phi_N (U))_{N\in {\mathbb N}}$ is convergent w.r.t the operator norm topology. So, $\Phi (U)\in C^1(A)$ and $\lim_{N\rightarrow \infty} \mathrm{ad}_A \Phi_N (U) = \mathrm{ad}_A \Phi (U)$ \cite{abmg, ggm}.

As a consequence, we deduce that for $\delta \in \{-1,0,1\}$, $\bra A\ket^{-\delta} \Phi(U) \bra A\ket^{\delta}$ extend to a bounded operator in ${\mathcal B}({\mathcal H})$. We conclude by interpolation.
\end{proof}

\begin{lemma}\label{(1-Q)G0} For any $z\in {\mathbb D}$ and any $\psi \in {\cal H}$, it holds:
\begin{align*}
\| (U^* -V^*) G_0(z) \psi \|^2 &\leq 8 \bra \psi , \Re (G_0 (z)) \psi \ket .
\end{align*}
\end{lemma}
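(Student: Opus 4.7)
The plan is to chain two bounds, with $\varphi := G_0(z)\psi$: first, an operator estimate $\|(U^*-V^*)\varphi\|^2 \le 2\bra\varphi,(1-VV^*)\varphi\ket$ obtained from the factorization (H2); second, a control $\bra\varphi,(1-VV^*)\varphi\ket \le 2\bra\psi,\Re G_0(z)\psi\ket$ coming from the same resolvent identity exploited in the proof of Proposition \ref{lapToHac}.

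For the first bound, note that since $0\le P,Q\le 1$ are self-adjoint, (H2) gives $V^* = QU^*P$, so I have the algebraic splitting
\[
U^* - V^* \;=\; U^*(1-P) + (1-Q)U^*P.
\]
Applying $\|a+b\|^2 \le 2\|a\|^2 + 2\|b\|^2$, using that $U^*$ is isometric, and then the operator inequalities $(1-P)^2 \le 1-P^2$ and $(1-Q)^2 \le 1-Q^2$ (both valid because $0\le P,Q\le 1$), I obtain
\[
\|(U^* - V^*)\varphi\|^2 \;\le\; 2\,\bra\varphi,\,\bigl[(1-P^2) + PU(1-Q^2)U^*P\bigr]\varphi\ket.
\]
The central (and purely algebraic) observation, using $UU^*=1$, is the collapse
\[
1-P^2 + PU(1-Q^2)U^*P \;=\; 1-P^2 + P^2 - PUQ^2U^*P \;=\; 1 - VV^*,
\]
which is precisely what makes the splitting above the right one to try.

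For the second bound, I will use the identity
\[
2\Re G_0(z) - 1 \;=\; G_0(z)^*\bigl(1 - |z|^2 VV^*\bigr) G_0(z),
\]
obtained by sandwiching $G_0(z)+G_0(z)^*-1$ between $(1-\bar z V)$ on the left and $(1-zV^*)$ on the right and expanding. Evaluated on $\psi$ this reads
\[
2\bra\psi,\Re G_0(z)\psi\ket \;=\; \|\psi\|^2 + \bra\varphi,(1-|z|^2 VV^*)\varphi\ket \;\ge\; \bra\varphi,(1-VV^*)\varphi\ket,
\]
the last inequality using $\|\psi\|^2\ge 0$ together with $|z|^2\le 1$ and $VV^*\ge 0$, which give $1-|z|^2 VV^*\ge 1-VV^*$.

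Concatenating the two steps yields $\|(U^*-V^*)G_0(z)\psi\|^2 \le 4\bra\psi,\Re G_0(z)\psi\ket$, which is even sharper than the stated $8\bra\psi,\Re G_0(z)\psi\ket$. No step is technically hard; the only thing to see is the collapse $1-P^2+PU(1-Q^2)U^*P = 1-VV^*$, which is what both motivates the particular decomposition of $U^*-V^*$ and reduces the whole estimate to the standard contraction-resolvent identity.
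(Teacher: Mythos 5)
Your proof is correct and follows essentially the same route as the paper: the same decomposition $U^*-V^*=U^*(1-P)+(1-Q)U^*P$, the same operator inequalities $(1-P)^2\le 1-P^2$ and $(1-Q)^2\le 1-Q^2$, and the same resolvent identity $2\Re G_0(z)-1=G_0(z)^*(1-|z|^2VV^*)G_0(z)$. The only difference is that you exploit the exact algebraic collapse $1-P^2+PU(1-Q^2)U^*P=1-VV^*$, whereas the paper bounds each summand by $1-VV^*$ separately; this sharpens the constant from $8$ to $4$.
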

\begin{proof} We decompose $U^* -V^* = U^* \overline{P} + \overline{Q}U^* P$, so
\begin{align*}
\| (U^* -V^*) G_0 (z) \psi \|^2 &\leq 2 \| \overline{P} G_0 (z) \psi \|^2 + 2 \| \overline{Q}U^* P G_0 (z) \psi \|^2 .
\end{align*}
It remains to bound each term on the RHS. Let $\overline{P} = 1-P$. Since $0 \leq P \leq 1$ and $0 \leq Q \leq 1$, so $P^2 \leq P$ and
\begin{align*}
\overline{P}^2 &\leq 1- P^2 = 1 - PUU^*P \leq 1 - PUQ^2 U^*P = 1-VV^* \\
&\leq 1- |z|^2 VV^* = T_0 (z) + T_0 (z)^* zV^* ,
\end{align*}
for any $z\in {\mathbb D}$. It follows that for any $\psi \in {\cal H},$
\begin{align*}
\| \overline{P} G_0 (z)\psi \|^2 &\leq \bra G_0 (z)\psi, \psi \ket + \bra \psi, zV^* G_0 (z) \psi \ket \\
&= \bra \psi, 2\Re (G_0 (z))\psi \ket - \| \psi \|^2 \leq 2 \bra \psi, \Re (G_0 (z))\psi \ket .
\end{align*}
Also $\overline{Q}^2 \leq 1-Q^2$, hence it holds
\begin{align*}
PU\overline{Q}^2 U^* P &\leq PU(1-Q^2) U^* P  \leq 1 - PUQ^2 U^*P \leq 1 - VV^* \\
&\leq 1- |z|^2 VV^* = T_0 (z) + T_0 (z)^* zV^* ,
\end{align*}
for any $z\in {\mathbb D}$. As before, we get for any $\psi \in {\cal H},$
\begin{align*}
\| \overline{Q}U^* P G_0(z) \psi \|^2 &\leq \bra G_0 (z)\psi, \psi \ket + \bra \psi, zV^* G_0 (z) \psi \ket \leq 2 \bra \psi, \Re (G_0 (z))\psi \ket .
\end{align*}
and the conclusion follows.
\end{proof}

\begin{proof} (of Theorem \ref{lap2})

Let $\Theta_1$ and $\Theta_0$ be open connected subsets such that $\overline{\Theta_1} \subset \Theta_0 \subset \overline{\Theta_0} \subset \Theta$. Let $\phi \in C^{\infty}({\mathbb T}; {\mathbb R})$ be supported in $\overline{\Theta_1}$. Combining Theorem \ref{lap1} with Lemma \ref{AchiA} yields:
\begin{equation*}
\sup_{|z|<1, \arg z \in \Theta_0} \| \bra A\ket^{-s} \Phi (U)(1-zV^*)^{-1} \Phi (U) \bra A\ket^{-s} \| < \infty .
\end{equation*}
For $z\in {\mathbb D}$, the resolvent identity reads:
\begin{align*}
(1-zV^*)^{-1} &= (1-zU^*)^{-1} - z (1-zU^*)^{-1} (U^* - {V^*}) (1-zV^*)^{-1} .
\end{align*}
Given any $z\in {\mathbb D}$, we bound the quantity $\| \bra A\ket^{-s} \Phi(U) (1-zV^*)^{-1} \Phi(U) \bra A\ket^{-s} \|$ by:
\begin{align*}
\| \bra A\ket^{-s} &\Phi(U) (1-zU^*)^{-1} \Phi(U) \bra A\ket^{-s} \| \\
&+ \| (U^* -{V^*}) (1-zV^*)^{-1} \Phi(U) \bra A\ket^{-s} \| \|(1-\bar{z}U)^{-1}\Phi(U) \bra A\ket^{-s} \| .
\end{align*}
Unitary functional calculus yields:
\begin{align*}
C_{\phi, \Theta_0} &= \sup_{|z| <1,\, \arg z\, \in\, {\mathbb T} \setminus \Theta_0} \| (1-\bar{z}U)^{-1} \Phi (U)\| < \infty .
\end{align*}
From Lemma \ref{(1-Q)G0}, we deduce that for any $z\in {\mathbb D}$ with $\arg z\in {\mathbb T}\setminus \Theta_0$, the quantity $\| \bra A\ket^{-s} \Phi(U) (1-zV^*)^{-1} \Phi(U) \bra A\ket^{-s} \|$ is bounded by:
$$
C_{\phi, \Theta_0} \left( \|\Phi \|_{\infty} + 2\sqrt{2} \sqrt{\| \bra A\ket^{-s} \Phi(U) (1-zV^*)^{-1} \Phi(U) \bra A\ket^{-s} \| } \right) ,
$$
which entails
\begin{equation*}
\sup_{|z|<1, \arg z \in {\mathbb T}\setminus \Theta_0} \| \bra A\ket^{-s} \Phi (U)(1-zV^*)^{-1} \Phi (U) \bra A\ket^{-s} \| < \infty .
\end{equation*}
This completes the proof.

\end{proof}

\section{Appendix}\label{app}

\begin{proposition}\label{mourre-equiv} Let $B\in\mathcal{B}({\mathcal H})$ be symmetric and $E$ an orthogonal projection acting on ${\mathcal H}$. Denote $E^\bot :=1-E$. The following statements are equivalent: there exist $c>0$, $a\in (0,c]$ and $b>0$ such that:
\begin{enumerate}
\item[(a)] $EBE\geq c E$,
\item[(b)] $B\geq a E- b E^\bot= a - (a+b) E^\bot$.
\end{enumerate}
\end{proposition}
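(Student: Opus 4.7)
The implication (b)$\Rightarrow$(a) is immediate: multiplying the inequality $B\geq aE-bE^\bot$ on the left and right by $E$ and using $E^2=E$ together with $EE^\bot=E^\bot E=0$ gives $EBE\geq aE$, so (a) holds with the choice $c:=a$ (which is admissible since $a\in(0,c]$ reduces to $a>0$).

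For the converse (a)$\Rightarrow$(b), the plan is to decompose each vector along $E$ and $E^\bot$ and expand the quadratic form of $B$. Given $\varphi\in{\cal H}$, write $\varphi=E\varphi+E^\bot\varphi$ and use symmetry of $B$ to obtain
\begin{equation*}
\langle\varphi,B\varphi\rangle \;=\; \langle E\varphi,EBE\,E\varphi\rangle + 2\,\Re\langle E\varphi,B E^\bot\varphi\rangle + \langle E^\bot\varphi,B E^\bot\varphi\rangle .
\end{equation*}
The first term is controlled from below by (a), giving $\geq c\|E\varphi\|^2$. The third term is controlled by $\geq -\|B\|\,\|E^\bot\varphi\|^2$. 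For the cross term, I would apply the standard parameter-dependent Cauchy--Schwarz estimate: for any $\varepsilon\in(0,c)$,
\begin{equation*}
\bigl|2\,\Re\langle E\varphi,B E^\bot\varphi\rangle\bigr|\;\leq\;\varepsilon\,\|E\varphi\|^2+\frac{\|B\|^2}{\varepsilon}\,\|E^\bot\varphi\|^2 .
\end{equation*}

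Combining the three bounds yields
\begin{equation*}
\langle\varphi,B\varphi\rangle\;\geq\;(c-\varepsilon)\,\|E\varphi\|^2-\Bigl(\|B\|+\tfrac{\|B\|^2}{\varepsilon}\Bigr)\|E^\bot\varphi\|^2 .
\end{equation*}
Setting $a:=c-\varepsilon\in(0,c)$ and $b:=\|B\|+\|B\|^2/\varepsilon>0$, and noting that $\|E\varphi\|^2=\langle\varphi,E\varphi\rangle$ and $\|E^\bot\varphi\|^2=\langle\varphi,E^\bot\varphi\rangle$, one obtains (b). The identity $aE-bE^\bot=a-(a+b)E^\bot$ is just $E=1-E^\bot$.

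There is no serious obstacle here; the only mildly delicate point is choosing $\varepsilon$ strictly smaller than $c$ so that the coefficient of $\|E\varphi\|^2$ remains positive (which forces $a<c$ in the resulting inequality, and is why the statement allows $a\in(0,c]$ rather than insisting on $a=c$).
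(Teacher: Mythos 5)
Your proposal is correct and follows essentially the same route as the paper: both proofs decompose $B=EBE+2\Re(EBE^\perp)+E^\perp BE^\perp$, use the hypothesis on the first term, bound the third term by $-\|B\|E^\perp$, and control the cross term with a parameter-dependent Cauchy--Schwarz estimate, then choose the parameter small enough to keep the $E$-coefficient positive. The only cosmetic difference is that you phrase the argument via the quadratic form $\langle\varphi,B\varphi\rangle$ while the paper writes it directly as operator inequalities, and your cross-term bound carries a $\|B\|^2/\varepsilon$ factor where the paper's parametrization gives $\|B\|/p$; these are equivalent up to a rescaling of the parameter.
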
 
\begin{proof} Assume (a). Write $B= EBE + 2\Re (EBE^{\perp}) +E^{\perp}BE^{\perp}$ and note that: $E^\bot BE^\bot \geq -\|B\| E^\bot$ and $EBE \geq c E$ by hypothesis. In addition, for any $p>0$
$$
2\Re (EB E^\bot ) \leq p \|B \| E + p^{-1} \|B \| E^\bot ,
$$
so
$$
B \geq (c- p\|B \|) E - \|B \| (1+p^{-1}) E^\bot .
$$
Once fixed $p>0$ in such a way that $0 < c- p\|B \| < c$, we obtain (b). The converse implication is immediate.
\end{proof}

\begin{remarks}
\begin{enumerate}
\item In applications it is convenient {to note} that: for unitary $U$ one has $U\in C^1(A)$ if an only if
\begin{itemize}
\item[] there exists a core ${\mathcal S}$ for $A$ such that $U{\mathcal S} \subset {\mathcal S}$ and  $U^*AU-A:{\mathcal S}\rightarrow {\mathcal H}$ extends to a bounded operator on ${\mathcal H}$ ; or, equivalently:
 $U^*{\mathcal S} \subset {\mathcal S}$ and  $A-UAU^*:{\mathcal S}\rightarrow {\mathcal H}$ extends to a bounded  operator.
\end{itemize}
In addition, the bounded extension of $U^*AU-A$  is precisely $U^* (\mathrm{ad}_A U)$. See Propositions \ref{6} and \ref{7} for details.
\end{enumerate}
\end{remarks}

\begin{prop}\label{6} If $U\in C^1(A)$ is unitary, then $U {\mathcal D}(A)\subset {\mathcal D}(A)$ and the operator $U^*AU-A:{\mathcal D}(A)\rightarrow {\mathcal H}$ extends to a bounded operator on ${\mathcal H}$, denoted $(U^*AU-A)$. It holds: $(U^*AU-A) = U^*\mathrm{ad}_A U = -(\mathrm{ad}_A U^*)U$. Conversely, let ${\mathcal S}$ be a core for $A$ such that $U{\mathcal S} \subset {\mathcal S}$. Assume that the operator $U^*AU-A:{\mathcal S}\rightarrow {\mathcal H}$ extends to a bounded operator on ${\mathcal H}$ and denote by $C$ this extension. Then, $U$ (and $U^*$) belongs to $C^1(A)$ and $U^* \mathrm{ad}_A U=C$.
\end{prop}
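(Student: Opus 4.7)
The plan is to handle the two directions separately: the forward direction relies on Stone's theorem to upgrade strong $C^1$ regularity into domain invariance, while the converse uses the sesquilinear form characterization of $C^1(A)$ recalled in Section~\ref{two}.

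\textbf{Forward direction.} For $\psi\in\mathcal{D}(A)$, I would start from the algebraic identity
\[
\tfrac{1}{t}(U_t - U)\psi \;=\; e^{-iAt}U\,\tfrac{1}{t}(e^{iAt}-1)\psi \,+\, \tfrac{1}{t}(e^{-iAt}-1)U\psi,
\]
where $U_t := \mathrm{Ad}_{e^{-iAt}}(U)$. The left-hand side converges strongly to $-i\,\mathrm{ad}_A U\cdot\psi$ by the $C^1(A)$ hypothesis, and the first term on the right converges to $iUA\psi$ by strong continuity of $e^{-iAt}$ together with $\psi\in\mathcal{D}(A)$. Consequently the remaining difference quotient $\tfrac{1}{t}(e^{-iAt}-1)U\psi$ must converge in $\mathcal{H}$ as $t\to 0$. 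By Stone's theorem the domain of the generator of $(e^{-iAt})_{t\in\mathbb{R}}$ coincides with $\mathcal{D}(A)$, hence $U\psi\in\mathcal{D}(A)$ and the limit equals $-iAU\psi$. Rewriting yields $AU\psi - UA\psi = \mathrm{ad}_A U\cdot\psi$ on $\mathcal{D}(A)$; multiplying by $U^*$ then gives $(U^*AU - A)\psi = U^*\mathrm{ad}_A U\cdot\psi$, which admits a bounded extension since $U^*\mathrm{ad}_A U\in\mathcal{B}(\mathcal{H})$. The sesquilinear form criterion recalled in Section~\ref{two} is symmetric under $B\leftrightarrow B^*$ (swap arguments and conjugate), so $U^*\in C^1(A)$ as well; applying $\mathrm{ad}_A$ as a derivation to $UU^*=1$ then yields $(\mathrm{ad}_A U)U^* + U(\mathrm{ad}_A U^*)=0$, from which $U^*\mathrm{ad}_A U = -(\mathrm{ad}_A U^*)U$ follows.

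\textbf{Converse.} Suppose $\mathcal{S}$ is a core for $A$ with $U\mathcal{S}\subset \mathcal{S}$, and that $U^*AU - A$ extends from $\mathcal{S}$ to a bounded operator $C$. My strategy is to verify the sesquilinear form criterion for $U\in C^1(A)$. For $\varphi\in\mathcal{D}(A)$ and $\psi\in\mathcal{S}$, the inclusion $U\psi\in\mathcal{S}\subset\mathcal{D}(A)$ together with the selfadjointness of $A$ yields
\[
\langle A\varphi, U\psi\rangle - \langle \varphi, UA\psi\rangle \;=\; \langle \varphi, (AU - UA)\psi\rangle \;=\; \langle \varphi, UC\psi\rangle,
\]
the last equality using $(AU - UA)\psi = U(U^*AU - A)\psi = UC\psi$ on $\mathcal{S}$. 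Since $\mathcal{S}$ is a core, each $\psi\in\mathcal{D}(A)$ is approximated by $\psi_n\in\mathcal{S}$ with $\psi_n\to\psi$ and $A\psi_n\to A\psi$; both sides pass to the limit (the left-hand side using boundedness of $U$, the right-hand side using boundedness of $UC$), so the identity extends to $\mathcal{D}(A)\times\mathcal{D}(A)$. The right-hand side is manifestly continuous on $\mathcal{H}\times\mathcal{H}$, so $U\in C^1(A)$ with $\mathrm{ad}_A U = UC$, equivalently $U^*\mathrm{ad}_A U = C$. The same form-symmetry argument as in the forward direction then gives $U^*\in C^1(A)$.

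\textbf{Main obstacle.} The one delicate step is the appeal to Stone's theorem in the forward direction: promoting the mere existence of $\lim_{t\to 0}\tfrac{1}{t}(e^{-iAt}-1)U\psi$ in $\mathcal{H}$ into the conclusion $U\psi\in\mathcal{D}(A)$ requires the sharp characterization of the domain of a selfadjoint generator. Everything else reduces to commutator algebra, a routine core approximation, and the form characterization already recorded in the paper.
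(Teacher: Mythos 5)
Your proof is correct. The converse direction is essentially identical to the paper's: verify the sesquilinear form criterion by pushing the identity from $\mathcal{S}$ to $\mathcal{D}(A)$ via a core approximation, concluding $\mathrm{ad}_A U = UC$. (The paper takes $\varphi,\psi\in\mathcal{S}$ from the start and passes the identity to $\mathcal{D}(A)\times\mathcal{D}(A)$ at the end; you take $\varphi\in\mathcal{D}(A)$ immediately — a cosmetic difference.) Where you genuinely diverge is the forward direction: the paper simply cites \cite{ggm} Proposition 2.2 for the domain invariance $U\mathcal{D}(A)\subset\mathcal{D}(A)$ and the boundedness of $U^*AU-A$, whereas you reprove it from scratch by splitting $\tfrac{1}{t}(U_t-U)\psi$ into the two difference quotients, isolating $\tfrac{1}{t}(e^{-iAt}-1)U\psi$ as a convergent term, and invoking the Stone characterization of $\mathcal{D}(A)$. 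This is the standard argument underlying the cited result, so what you gain is self-containedness rather than a new idea, but it is worth having spelled out. Two small remarks on steps you compress: the form-symmetry argument for $U^*\in C^1(A)$ (the form for $B^*$ at $(\varphi,\psi)$ equals $-\overline{\text{form for }B\text{ at }(\psi,\varphi)}$) is fine but should perhaps be stated explicitly rather than asserted; and the Leibniz rule $\mathrm{ad}_A(UU^*)=(\mathrm{ad}_A U)U^*+U(\mathrm{ad}_A U^*)$ needs both $U,U^*\in C^1(A)$, which you establish first, so the order of your deductions is correct.
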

\noindent{\bf Proof:} For the proof of the first statement, see \cite{ggm} Proposition 2.2. We deduce that for all $(\varphi, \psi)\in {\mathcal H}\times {\mathcal D}(A)$, $\bra \varphi, U^*AU\psi\ket - \bra \varphi, A\psi\ket =\bra \varphi, U^*(AU-UA)\psi\ket$. This implies our claim. Conversely, assume that for all $(\varphi, \psi)\in {\mathcal S}\times {\mathcal S}$, $\bra A\varphi, U\psi\ket -\bra \varphi, UA\psi\ket = \bra \varphi, AU\psi\ket -\bra U^*\varphi, A\psi\ket = \bra U^*\varphi, U^*AU\psi\ket -\bra U^*\varphi, A\psi\ket = \bra U^*\varphi, C\psi\ket = \bra \varphi, U C\psi\ket$. The identity extends continuously over ${\mathcal D}(A) \times {\mathcal D}(A)$. This shows that $U\in C^1(A)$ and that: $\mathrm{ad}_A U=UC$. \ep

\begin{rem}\label{7} Analog relations can be built between $A-UAU^*$ and $(\mathrm{ad}_A U) U^* = - U \mathrm{ad}_A U^*$.
\end{rem}


\end{document}